\newtheorem{lemma}{Lemma}
\newtheorem{theorem}{Theorem}
\theoremstyle{remark}
\newcommand{\R}{\mathbb{R}}
\newcommand{\C}{\mathbb{C}}
\newcommand{\Z}{\mathbb{Z}}
\newcommand{\N}{\mathbb{N}}
\renewcommand{\P}{\mathbb{P}}
\newcommand{\E}{\mathbb{E}}
\newcommand{\Var}{\operatorname{Var}}
\newcommand{\Cov}{\operatorname{Cov}}
\newcommand{\Bias}{\operatorname{Bias}}
\renewcommand{\d}[1]{\,\mathrm{d}#1}
\newcommand{\e}{\mathrm{e}}
\newcommand{\vct}[1]{\boldsymbol{#1}}
\newcommand{\mtx}[1]{\boldsymbol{#1}}
\newcommand{\inner}[1]{\left<#1\right>}
\newcommand{\tr}{\operatorname{tr}}
\newcommand{\diag}{\operatorname{diag}}
\newcommand{\set}[1]{\mathcal{#1}}
\DeclareMathOperator*{\minimize}{\text{minimize}}
\newcommand{\floor}[1]{\left\lfloor #1 \right\rfloor}
\newcommand{\ceil}[1]{\left\lceil #1 \right\rceil}
\newcommand{\eps}{\epsilon}
\newcommand{\va}{\vct{a}}
\newcommand{\vb}{\vct{b}}
\newcommand{\vc}{\vct{c}}
\newcommand{\vd}{\vct{d}}
\newcommand{\ve}{\vct{e}}
\newcommand{\vp}{\vct{p}}
\newcommand{\vq}{\vct{q}}
\newcommand{\vs}{\vct{s}}
\newcommand{\vv}{\vct{v}}
\newcommand{\vw}{\vct{w}}
\newcommand{\vx}{\vct{x}}
\newcommand{\vy}{\vct{y}}
\newcommand{\vz}{\vct{z}}
\newcommand{\mA}{\mtx{A}}
\newcommand{\mB}{\mtx{B}}
\newcommand{\mC}{\mtx{C}}
\newcommand{\mD}{\mtx{D}}
\newcommand{\mE}{\mtx{E}}
\newcommand{\mF}{\mtx{F}}
\newcommand{\mP}{\mtx{P}}
\newcommand{\mQ}{\mtx{Q}}
\newcommand{\mR}{\mtx{R}}
\newcommand{\mS}{\mtx{S}}
\newcommand{\mU}{\mtx{U}}
\newcommand{\mV}{\mtx{V}}
\newcommand{\mW}{\mtx{W}}
\newcommand{\mX}{\mtx{X}}
\newcommand{\mZ}{\mtx{Z}}
\newcommand{\mBext}{\mtx{B}_{\text{ext}}}
\newcommand{\mLambda}{\mtx{\Lambda}}
\newcommand{\mGamma}{\mtx{\Gamma}}
\newcommand{\mId}{{\bf I}}
\newcommand{\setI}{\set{I}}
\newcommand{\hatS}{\widehat{S}}
\newcommand{\hatSmt}{\widehat{S}^{\text{mt}}}
\newcommand{\hatSad}{\widehat{S}^{\text{ad}}}
\newcommand{\tildeSmt}{\widetilde{S}^{\text{mt}}}
\newcommand{\tildegamma}{\widetilde{\gamma}}
\newcommand{\mt}{\text{mt}}
\newlength{\imgwidth}
\newcommand{\twoCol}[2]{\ifthenelse{\boolean{twoColVersion}} {#1} {#2} }
\newcommand\blfootnote[1]{
  \begingroup
  \renewcommand\thefootnote{}\footnote{#1}
  \addtocounter{footnote}{-1}
  \endgroup
}
\begin{document}

\title{Thomson's Multitaper Method Revisited}

\vspace{2mm}
\author{Santhosh Karnik, Justin Romberg, Mark A. Davenport}

\maketitle

\begin{abstract}
Thomson’s multitaper method estimates the power spectrum of a signal from $N$ equally spaced samples by averaging $K$ tapered periodograms. Discrete prolate spheroidal sequences (DPSS) are used as tapers since they provide excellent protection against spectral leakage. Thomson's multitaper method is widely used in applications, but most of the existing theory is qualitative or asymptotic. Furthermore, many practitioners use a DPSS bandwidth $W$ and number of tapers that are smaller than what the theory suggests is optimal because the computational requirements increase with the number of tapers. We revisit Thomson's multitaper method from a linear algebra perspective involving subspace projections. This provides additional insight and helps us establish nonasymptotic bounds on some statistical properties of the multitaper spectral estimate, which are similar to existing asymptotic results. We show using $K=2NW-O(\log(NW))$ tapers instead of the traditional $2NW-O(1)$ tapers better protects against spectral leakage, especially when the power spectrum has a high dynamic range. Our perspective also allows us to derive an $\eps$-approximation to the multitaper spectral estimate which can be evaluated on a grid of frequencies using $O(\log(NW)\log\tfrac{1}{\eps})$ FFTs instead of $K=O(NW)$ FFTs. This is useful in problems where many samples are taken, and thus, using many tapers is desirable.

\end{abstract}

\blfootnote{S. Karnik, J. Romberg, and M. A. Davenport are with the School of Electrical and Computer Engineering, Georgia Institute of Technology, Atlanta, GA, 30332 USA (e-mail: skarnik1337@gatech.edu, jrom@ece.gatech.edu, mdav@gatech.edu). This work was partially supported by NSF grant CCF-1409406, a grant from Lockheed Martin, and a gift from the Alfred P. Sloan Foundation. A preliminary version of this paper highlighting some of the key results also appeared in~\cite{Karnik19}.}

\section{Introduction}
\label{sec:Intro}
Perhaps one of the most fundamental problems in digital signal processing is spectral estimation, i.e.,\ estimating the power spectrum of a signal from a window of $N$ evenly spaced samples. The simplest solution is the periodogram, which simply takes the squared-magnitude of the discrete time Fourier transform (DTFT) of the samples. Obtaining only a finite number of samples is equivalent to multiplying the signal by a rectangular function before sampling. As a result, the DTFT of the samples is the DTFT of the signal convolved with the DTFT of the rectangular function, which is a slowly-decaying sinc function. Hence, narrow frequency components in the true signal appear more spread out in the periodogram. This phenomenon is known as ``spectral leakage''. 

The most common approach to mitigating the spectral leakage phenomenon is to multiply the samples by a taper before computing the periodogram. Since multiplying the signal by the taper is equivalant to convolving the DTFT of the signal with the DTFT of the taper, using a taper whose DTFT is highly concentrated around $f = 0$ will help mitigate the spectral leakage phenomenon. Numerous kinds of tapers have been proposed \cite{McClellan03} which all have DTFTs which are highly concentrated around $f = 0$. The Slepian basis vectors, also known as the discrete prolate spheroidal sequences (DPSSs), are designed such that their DTFTs have a maximal concentration of energy in the frequency band $[-W,W]$ subject to being orthonormal \cite{SlepianV}. The first $\approx 2NW$ of these Slepian basis vectors have DTFTs which are highly concentrated in the frequency band $[-W,W]$. Thus, any of the first $\approx 2NW$ Slepian basis vectors provides a good choice to use as a taper.

In 1982, David Thomson \cite{Thomson82} proposed a multitaper method which computes a tapered periodogram for each of the first $K \approx 2NW$ Slepian tapers, and then averages these periodograms. Due to the spectral concentration properties of the Slepian tapers, Thomson's multitaper method also does an excellent job mitigating spectral leakage. Furthermore, by averaging $K$ tapered periodograms, Thomson's multitaper method is more robust than a single tapered periodogram. As such, Thomson's multitaper method has been used in a wide variety of applications, such as cognitive radio \cite{Haykin05, Farhang08Filterbank, Farhang08Multicarrier, Haykin09, Axell12}, digital audio coding \cite{Hamdy96, Painter00}, as well as to analyze EEG \cite{Delorme04,Delorme07} and other neurological signals \cite{Llinas99, Pesaran02, Csicsvari03, Mitra99, Jones05}, climate data \cite{Bond97, Ghil02, Vautard89, Mann96, Minobe97, Jouzel93, Thomson90}, breeding calls of Ad{\'e}lie penguins \cite{Brunton10} and songs of other birds \cite{Saar08, Hansson11, Leonardo99, Tchernichovski00}, topography of terrestrial planets \cite{Wieczorek07}, solar waves\cite{Claudepierre08}, and gravitational waves \cite{Allen99}.

The existing theoretical results regarding Thomson's multitaper method are either qualitative or asymptotic. Here, we provide a brief summary of these results. See \cite{Thomson82,Percival93,Walden00,Haykin09,Abreu17,Haley17} for additional details. Suppose that the power spectral density $S(f)$ of the signal is ``slowly varying''. Let $\hatSmt_K(f)$ denote the multitaper spectral estimate. Then, the following results are known. 
\begin{itemize}
\item The estimate is approximately unbiased, i.e., $\E\hatSmt_K(f) \approx S(f)$.

\item The variance is roughly $\Var[\hatSmt_K(f)] \approx \tfrac{1}{K}S(f)^2$.

\item For any two frequencies $f_1,f_2$ that are at least $2W$ apart, $\hatSmt_K(f_1)$ and $\hatSmt_K(f_2)$ are approximately uncorrelated.

\item The multitaper spectral estimate $\hatSmt_K(f)$ has a concentration behavior about its mean which is similar to a scaled chi-squared random variable with $2K$ degrees of freedom.

\item If the power spectral density $S(f)$ is twice differentiable, then choosing a bandwidth of $W = O(N^{-1/5})$ and using $K \approx 2NW = O(N^{4/5})$ tapers minimizes the mean squared error of the multitaper spectral estimate $\E|\hatSmt_K(f)-S(f)|^2$.
\end{itemize}

These asymptotic results demonstrate that using more than a small constant number of tapers improves the quality of the multitaper spectral estimate. However, using more tapers increases the computational requirements. As a result, many practitioners often use significantly fewer tapers than what is optimal. 

The main contributions of this work are as follows:
\begin{itemize}
\item We revisit Thomson's multitaper method from a linear algebra based perspective. Specifically, for each frequency $f$, the multitaper spectral estimate computes the $2$-norm of the projection of the vector of samples onto a $K$-dimensional subspace. The subspace chosen can be viewed as the result of performing principle component analysis on a continuum of sinusoids whose frequency is between $f-W$ and $f+W$. 

\item Using this perspective, we establish non-asymptotic bounds on the bias, variance, covariance, and probability tails of the multitaper spectral estimate. These non-asymptotic bounds are comparable to the known asymptotic results which assume that the spectrum is slowly varying. Also, these bounds show that using $K = 2NW-O(\log(NW))$ tapers (instead of the traditional choice of $K = \floor{2NW}-1$ or $\floor{2NW}-2$ tapers) provides better protection against spectral leakage, especially in scenarios where the spectrum has a large dynamic range.

\item We also use this perspective to demonstrate a fast algorithm for evaluating an $\eps$-approximation of the multitaper spectral estimate on a grid of $L$ equally spaced frequencies. The complexity of this algorithm is $O(L\log L \log(NW)\log \tfrac{1}{\eps})$ while the complexity of evaluating the exact multitaper spectral estimate on a grid of $L$ equally spaced frequencies is $O(KL\log L)$. Computing the $\eps$-approximation is faster than the exact multitaper provided $K \gtrsim \log(NW)\log \tfrac{1}{\eps}$ tapers are used. 
\end{itemize}

The rest of this work is organized as follows. In Section~\ref{sec:Multitaper} we formally introduce Thompson's multitaper spectral estimate, both from the traditional view involving tapered periodograms as well as from a linear algebra view involving projections onto subspaces. In Section~\ref{sec:ParameterSelection}, we state non-asymptotic bounds regarding the bias, variance, and probability concentration of the multitaper spectral estimate. These bounds are proved in Appendix~\ref{sec:ParameterSelectionProofs}. In Section~\ref{sec:FastAlgorithms}, we state our fast algorithm for evaluating the multitaper spectral estimate at a grid of frequencies. The proofs of the approximation error and computational requirements are in Appendix~\ref{sec:FastAlgorithmsProofs}. In Section~\ref{sec:Simulations}, we show numerical experiments which demonstrate that using $K = 2NW-O(\log(NW))$ tapers minimizes the impact of spectral leakage on the multitaper spectral estimate, and that our $\eps$-approximation to the multitaper spectral estimate can be evaluated at a grid of evenly spaced frequencies significantly faster than the exact multitaper spectral estimate. We finally conclude the paper in Section~\ref{sec:Conclusions}.

\section{Thomson's Multitaper Method}
\label{sec:Multitaper}

\subsection{Traditional view}

Let $x(n)$, $n \in \Z$ be a stationary, ergodic, zero-mean, Gaussian process. The autocorrelation and power spectral density of $x(n)$ are defined by $$R_n = \E\left[x(m)\overline{x(m+n)}\right] \quad \text{for} \quad m,n \in \Z,$$ and $$S(f) = \sum_{n = -\infty}^{\infty}R_ne^{-j2\pi fn} \quad \text{for} \quad f \in \R$$ respectively. The goal of spectral estimation is to estimate $S(f)$ from the vector $\vx \in \C^N$ of equispaced samples $\vx[n] = x(n)$ for $n \in [N]$. \footnote{We use the notation $[N]$ to denote the set $\{0,\ldots,N-1\}$. We will also use $[K]$ and $[L]$ in a similar manner.} 

One of the earliest, and perhaps the simplest estimator of $S(f)$ is the periodogram \cite{Stokes1879,Schuster1898} $$\hatS(f) = \dfrac{1}{N}\left|\sum_{n = 0}^{N-1}\vx[n]e^{-j2\pi fn}\right|^2.$$ This estimator can be efficiently evaluated at a grid of evenly spaced frequencies via the FFT. However, the periodogram has high variance and suffers from the spectral leakage phenomenon \cite{Harris78}. 

A modification to the periodogram is to pick a data taper $\vw \in \R^N$ with $\|\vw\|_2 = 1$, and then weight the samples by the taper as follows $$\hatS_{\vw}(f) = \left|\sum_{n = 0}^{N-1}\vw[n]\vx[n]e^{-j2\pi fn}\right|^2.$$ If $\vw[n]$ is small near $n = 0$ and $n = N-1$, then this ``smoothes'' the ``edges'' of the sample window. Note that the expectation of the tapered periodogram is given by a convolution of the true spectrum and the spectral window of the taper, $$\E[\hatS_{\vw}(f)] = S(f) \circledast \left|\widetilde{\vw}(f)\right|^2$$ where $$\widetilde{\vw}(f) = \sum_{n = 0}^{N-1}\vw[n]e^{-j2\pi fn}.$$ Hence, a good taper will have its spectral window $\left|\widetilde{\vw}(f)\right|^2$ concentrated around $f = 0$ so that $\E\left[\hatS_{\vw}(f)\right] = S(f) \circledast \left|\widetilde{\vw}(f)\right|^2 \approx S(f)$, i.e.,\ the tapered periodogram will be approximately unbiased.

For a given bandwidth parameter $W > 0$, we can ask ``Which taper maximizes the concentration of its spectral window, $\left|\widetilde{\vw}(f)\right|^2$, in the interval $[-W,W]$?'' Note that we can write $$\int_{-W}^{W}\left|\widetilde{\vw}(f)\right|^2\,df = \vw^*\mB\vw$$ where $\mB$ is the $N \times N$  prolate matrix \cite{Varah93,Bojanczyk95} whose entries are given by $$\mB[m,n] = \begin{cases}\dfrac{\sin[2\pi W(m-n)]}{\pi(m-n)} & \text{if} \ m \neq n, \\ 2W & \text{if} \ m = n.\end{cases}$$ Hence, the taper whose spectral window, $\left|\widetilde{\vw}(f)\right|^2$, is maximally concentrated in $[-W,W]$ is the eigenvector of $\mB$ corresponding to the largest eigenvalue. 

The prolate matrix $\mB$ was first studied extensively by David Slepian \cite{SlepianV}. As such, we will refer to the orthonormal eigenvectors $\vs_0, \vs_1, \ldots, \vs_{N-1} \in \R^N$ of $\mB$ as the Slepian basis vectors, where the corresponding eigenvalues $\lambda_0 \ge \lambda_1 \ge \cdots \ge \lambda_{N-1}$ are sorted in descending order. Slepian showed that all the eigenvalues of $\mB$ are distinct and strictly between $0$ and $1$. Furthermore, the eigenvalues of $\mB$ exhibit a particular clustering behavior. Specifically, the first slightly less than $2NW$ eigenvalues are very close to $1$, and the last slightly less than $N-2NW$ eigenvalues are very close to $0$. 

While $\vs_0$ is the taper whose spectral window is maximally concentrated in $[-W,W]$, any of the Slepian basis vectors $\vs_k$ for which $\vs_k^*\mB\vs_k = \lambda_k \approx 1$ will also have a high energy concentration in $[-W,W]$, and thus, make good tapers. Thomson \cite{Thomson82} proposed a multitaper spectral estimate by using each of the first $K \approx 2NW$ Slepian basis vectors as tapers, and taking an average of the resulting tapered periodograms, i.e., $$\hatSmt_K(f) = \dfrac{1}{K}\sum_{k = 0}^{K-1}\hatS_{k}(f) \quad \text{where} \quad \hatS_k(f) = \left|\sum_{n = 0}^{N-1}\vs_k[n]\vx[n]e^{-j2\pi fn}\right|^2.$$

The expectation of the multitaper spectral estimate satisfies $$\E\left[\hatSmt_K(f)\right] = S(f) \circledast \psi(f)$$ where $$\psi(f) = \dfrac{1}{K}\sum_{k = 0}^{K-1}\left|\sum_{n = 0}^{N-1}\vs_k[n]e^{-j2\pi fn}\right|^2$$ is known as the spectral window of the multitaper spectral estimate. It can be shown that when $K \approx 2NW$, the spectral window $\psi(f)$ approximates $\tfrac{1}{2W}\mathbbm{1}_{[-W,W]}(f)$ on $f \in [-\tfrac{1}{2},\tfrac{1}{2}]$. Thus, the multitaper spectral estimate behaves in expectation like a smoothed version of the true spectrum $S(f)$. 

It can be shown that if the spectrum $S(f)$ is slowly varying around a frequency $f$, then the tapered spectral estimates $\hatS_k(f)$ are approximately uncorrelated, and $\Var[\hatS_k(f)] \approx S(f)^2$. Hence, $\Var[\hatSmt_K(f)] \approx \tfrac{1}{K}S(f)^2$. Thus, Thomson's multitaper method produces a spectral estimate whose variance is a factor of $K \approx 2NW$ smaller than the variance of a single tapered periodogram. 

As we increase $W$, the width of the spectral window $\psi(f)$ increases, which causes the expectation of the multitaper spectral estimate to be further smoothed. However, increasing $W$ also allows us to increase the number of tapers $K \approx 2NW$, which reduces the variance of the multitaper spectral estimate. Intuitively, Thomson's multitaper method introduces a tradeoff between resolution and robustness. 

\subsection{Linear algebra view}
Here we provide an alternate perspective on Thomson's multitaper method which is based on linear algebra and subspace projections. Suppose that for each frequency $f \in \R$, we choose a low-dimensional subspace $\mathcal{S}_f \subset \C^N$, and form a spectral estimate by computing $\|\text{proj}_{\mathcal{S}_f}(\vx)\|_2^2$, i.e., the energy in the projection of $\vx$ onto the subspace $\mathcal{S}_f$. One simple choice is the one-dimensional subspace $\mathcal{S}_f = \text{span}\{\ve_f\}$ where $$\ve_f = \begin{bmatrix}1 & e^{j2\pi f \cdot 1} & e^{j2\pi f \cdot 2} & \cdots & e^{j2\pi f \cdot (N-1)}\end{bmatrix}^T$$ is a vector of equispaced samples from a complex sinusoid with frequency $f$. For this choice of $\mathcal{S}_f$, we have $$\left\|\text{proj}_{\mathcal{S}_f}(\vx)\right\|_2^2 = \dfrac{\left|\inner{\ve_f,\vx}\right|^2}{\|\ve_f\|_2^2} = \dfrac{1}{N}\left|\sum_{n = 0}^{N-1}\vx[n]e^{-j2\pi fn}\right|^2,$$ which is exactly the classic periodogram. 

We can also choose a low-dimensional subspace $\mathcal{S}_f$ which minimizes the average representation error of sinusoids $\ve_{f'}$ with frequency $f' \in [f-W,f+W]$ for some small $W > 0$, i.e., $$\minimize_{\substack{\mathcal{S}_f \subset \C^N \\ \dim(\mathcal{S}_f) = K}}\int_{f-W}^{f+W}\left\|\ve_{f'} - \text{proj}_{\mathcal{S}_f}(\ve_{f'})\right\|_2^2\,df',$$ where the dimension of the subspace $K$ is fixed. Using ideas from the Karhunen-Loeve (KL) transform \cite{Stark86}, it can be shown that the optimal $K$-dimensional subspace is the span of the top $K$ eigenvectors of the covariance matrix $$\mC_f := \dfrac{1}{2W}\int_{f-W}^{f+W}\ve_{f'}\ve_{f'}^*\,df'.$$ The entries of this covariance matrix are 
\begin{align*}
\mC_f[m,n] &= \dfrac{1}{2W}\int_{f-W}^{f+W}\ve_{f'}[m]\overline{\ve_{f'}[n]}\,df' 
\\
&= \dfrac{1}{2W}\int_{f-W}^{f+W}e^{j2\pi f'(m-n)}\,df' 
\\
&= \dfrac{\sin[2\pi W(m-n)]}{2\pi W(m-n)}e^{j2\pi f(m-n)}
\\
&= \dfrac{1}{2W}\ve_f[m]\mB[m,n]\overline{\ve_f[n]},
\end{align*} 
where again $\mB$ is the $N \times N$ prolate matrix. Hence, we can write $$\mC_f = \dfrac{1}{2W}\mE_f\mB\mE_f^*,$$ where $\mE_f = \diag(\ve_f) \in \C^{N \times N}$ is a unitary matrix which modulates vectors by pointwise multiplying them by the sinusoid $\ve_f$. Therefore, the eigenvectors of $\mC_f$ are the modulated Slepian basis vectors $\mE_f\vs_k$ for $k \in [N]$, and the corresponding eigenvalues are $\tfrac{\lambda_k}{2W}$. Hence, we can choose $\mathcal{S}_f = \text{span}\{\mE_f\vs_0,\ldots,\mE_f\vs_{K-1}\}$, i.e., the span of the first $K$ Slepian basis vectors modulated to the frequency $f$. Since $\vs_0,\ldots,\vs_{K-1}$ are orthonormal vectors, and $\mE_f$ is a unitary matrix, $\mE_f\vs_0,\ldots,\mE_f\vs_{K-1}$ are orthonormal vectors. Hence, $\text{proj}_{\mathcal{S}_f}(\vx) = \mE_f\mS_K\mS_K^*\mE_f^*\vx$ where $\mS_K = \begin{bmatrix}\vs_0 & \ldots & \vs_{K-1}\end{bmatrix}$, and thus,
\begin{align*}
\left\|\text{proj}_{\mathcal{S}_f}(\vx)\right\|_2^2 &= \left\|\mE_f\mS_K\mS_K^*\mE_f^*\vx\right\|_2^2
\\
&= \left\|\mS_K^*\mE_f^*\vx\right\|_2^2
\\
&= \sum_{k = 0}^{K-1}\left|(\mS_K^*\mE_f^*\vx)[k]\right|^2
\\
&= \sum_{k = 0}^{K-1}\left|\vs_k^*\mE_f\vx\right|^2
\\
&= \sum_{k = 0}^{K-1}\left|\sum_{n = 0}^{N-1}\vs_k[n]\vx[n]e^{-j2\pi fn}\right|^2.
\end{align*}
Up to a constant scale factor, this is precisely the multitaper spectral estimate. Hence, we can view the the multitaper spectral estimate $\hatSmt_K(f) = \tfrac{1}{K}\|\mS_K^*\mE_f^*\vx\|_2^2$ as the energy in $\vx$ after it is projected onto the $K$-dimensional subspace which best represents the collection of sinusoids $\{\ve_{f'} : f' \in [f-W,f+W]\}$. 

\subsection{Slepian basis eigenvalues}
\label{sec:SlepianBasisEigenvalues}

Before we proceed to our main results, we elaborate on the clustering behavior of the Slepian basis eigenvalues, as they are critical to our analysis in the rest of this paper. For any $\eps \in (0,\tfrac{1}{2})$, slightly fewer than $2NW$ eigenvalues lie in $[1-\eps,1)$, slightly fewer than $N-2NW$ eigenvalues lie in $(0,\eps]$, and very few eigenvalues lie in the so-called ``transition region'' $(\eps,1-\eps)$. In Figure 1, we demonstrate this phenomenon by plotting the first $1000$ Slepian basis eigenvalues for $N = 10000$ and $W = \tfrac{1}{100}$ (so $2NW = 200$). The first $194$ eigenvalues lie in $[0.999,1)$ and the last $9794$ eigenvalues lie in $(0,0.001]$. Only $12$ eigenvalues lie in $(0.001,0.999)$. 

\begin{figure}
\centering
\includegraphics[scale=0.35]{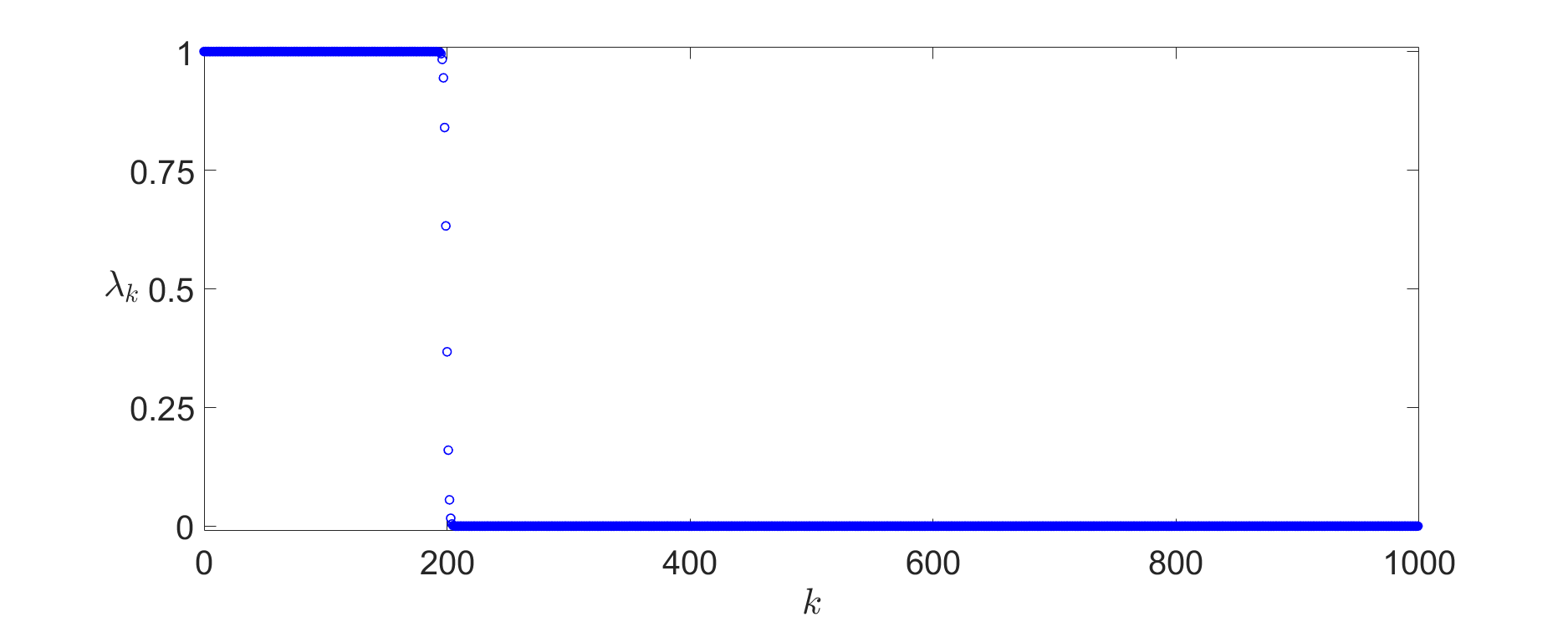}
\caption{A plot of the first $1000$ Slepian basis eigenvalues for $N = 10000$ and $W = \tfrac{1}{100}$. These eigenvalues satisfy $\lambda_{193} \approx 0.9997$ and $\lambda_{206} \approx 0.0003$. Only $12$ of the $10000$ Slepian basis eigenvalues lie in $(0.001,0.999)$.}
\label{fig:EigenvaluePlot}
\end{figure}

Slepian \cite{SlepianV} showed that for any fixed $W \in (0,\tfrac{1}{2})$ and $b \in \R$, $$\lambda_{\floor{2NW+(b/\pi)\log N}} \to \dfrac{1}{1+e^{b\pi}} \quad \text{as} \quad N \to \infty.$$ From this result, it is easy to show that for any fixed $W \in (0,\tfrac{1}{2})$ and $\eps \in (0,\tfrac{1}{2})$, $$\#\{k : \eps < \lambda_k < 1-\eps\} \sim \dfrac{2}{\pi^2}\log N \log\left(\dfrac{1}{\eps}-1\right) \quad \text{as} \quad N \to \infty.$$

In \cite{DPSSEigenvalues}, the authors of this paper derive a non-asymptotic bound on the number of Slepian basis eigenvalues in the transition region. For any $N \in \N$, $W \in (0,\tfrac{1}{2})$, and $\eps \in (0,\tfrac{1}{2})$, $$\#\{k : \eps < \lambda_k < 1-\eps\} \le \dfrac{2}{\pi^2}\log(100NW+25)\log\left(\dfrac{5}{\eps(1-\eps)}\right)+7.$$ This shows that the width of the transition region grows logarithmically with respect to both the time-bandwidth product $2NW$ and the tolerance parameter $\eps$. Note that when $\eps$ and $W$ are small, this result is a significant improvement over the prior non-asymptotic bounds in \cite{ZhuWakin17}, \cite{Boulsane20}, and \cite{KarnikFST}, which scale like $O(\tfrac{\log N}{\eps})$, $O(\tfrac{\log(NW)}{\eps})$, and $O(\log N \log \tfrac{1}{\eps})$ respectively. In Section~\ref{sec:FastAlgorithms}, we will exploit the fact that the number of Slepian basis eigenvalues in the transition region grows like $O(\log(NW)\log\tfrac{1}{\eps})$ to derive a fast algorithm for evaluating $\hatSmt_K(f)$ at a grid of evenly spaced frequencies. 

Furthermore, in \cite{DPSSEigenvalues}, the authors of this paper combine the above bound on the width of the transition region with the fact from \cite{ZhuWakin17} that $\lambda_{\floor{2NW}-1} \ge \tfrac{1}{2} \ge \lambda_{\ceil{2NW}}$ to obtain the following lower bound on the first $\approx 2NW$ eigenvalues $$\lambda_k \ge 1-10\exp\left(-\dfrac{\floor{2NW}-k-7}{\tfrac{2}{\pi^2}\log(100NW+25)}\right) \quad \text{for} \quad 0 \le k \le \floor{2NW}-1.$$ This shows that as $k$ decreases from $\approx 2NW$, the quantity $1-\lambda_k$ decays exponentially to $0$. In particular, this means that the first $2NW-O(\log(NW))$ Slepian basis tapers $\vs_k$ have spectral windows $\widetilde{\vs}_k(f)$ which have very little energy outside $[-W,W]$. However, the spectral windows of the Slepian basis tapers for $k$ near $2NW$ will have a significant amount of energy outside $[-W,W]$. In Section~\ref{sec:ParameterSelection}, we show that the statistical properties of the multitaper spectral estimate are significantly improved when using $K = 2NW-O(\log(NW))$ tapers instead of the traditional $K = \floor{2NW}-1$ tapers.


\section{Statistical Properties and Spectral Leakage}
\label{sec:ParameterSelection}

For a given vector of signal samples $\vx \in \C^N$, using Thomson's multitaper method for spectral estimation requires selecting two parameters: the half-bandwidth $W$ of the Slepian basis tapers and the number of tapers $K$ which are used in the multitaper spectral estimate. The selection of these parameters can greatly impact the accuracy of the multitaper spectral estimate.

In some applications, a relatively small number of samples $N$ are taken, and the desired frequency resolution for the spectral estimate is $O(\tfrac{1}{N})$, i.e., a small multiple of the fundamental Rayleigh resolution limit. In such cases, many practitioners \cite{Llinas99,Mitra99,Bond97,Ghil02,Mann96,Thomson90,Hansson11,Leonardo99} choose the half-bandwidth parameter $W$ such that $2NW$ is between $3$ and $10$, and then choose the number of Slepian basis tapers to be between $K = \floor{2NW}$ and $K = \floor{2NW}-2$. However, in applications where a large number of samples $N$ are taken, and some loss of resolution is acceptable, choosing a larger half-bandwidth parameter $W$ can result in a more accurate spectral estimate. Furthermore, if the power spectral density $S(f)$ has a high dynamic range (that is $\max_f S(f) \gg \min_f S(f)$), we aim to show that choosing $K = 2NW - O(\log(NW)\log\tfrac{1}{\delta})$ tapers for some small $\delta > 0$ (instead of $K = 2NW - O(1)$ tapers) can provide significantly better protection against spectral leakage. 

For all the theorems in this section, we assume that $\vx \in \C^N$ is a vector of samples from a complex Gaussian process whose power spectral density $S(f)$ is bounded and integrable. Note that the analogous results for a real Gaussian process would be similar, but slightly more complicated to state. To state our results, define $$M = \max_{f \in \R}S(f),$$ i.e., the global maximum of the power spectral density, and for each frequency $f \in \R$ we define: $$m_f = \min_{f' \in [f-W,f+W]} S(f'),$$ $$M_f = \max_{f' \in [f-W,f+W]} S(f'),$$ $$A_f = \dfrac{1}{2W}\int_{f-W}^{f+W}S(f')\,df',$$ $$R_f = \sqrt{\dfrac{1}{2W}\int_{f-W}^{f+W}S(f')^2\,df'},$$ i.e. the minimum, maximum, average, and root-mean-squared values of the power spectral density over the interval $[f-W,f+W]$. We also define the quantities $$\Sigma_K^{(1)} = \dfrac{1}{K}\sum_{k = 0}^{K-1}(1-\lambda_k)$$ $$\Sigma_K^{(2)} = \sqrt{\dfrac{1}{K}\sum_{k = 0}^{K-1}(1-\lambda_k)^2}.$$

Before we proceed to our results, we make note of the fact that $m_f$, $M_f$, $A_f$, and $R_f$ are all ``local'' properties of the power spectral density, i.e., they depend only on values of $S(f')$ for $f' \in [f-W,f+W]$, whereas $M$ is a ``global'' property. Note that if the power spectral density is ``slowly varying'' over the interval $[f-W,f+W]$, then $m_f \approx M_f \approx A_f \approx R_f \approx S(f)$. However, $M$ could be several orders of magnitude larger than $m_f$, $M_f$, $A_f$, and $R_f$ if the power spectral density has a high dynamic range. 

By using the bound on the Slepian basis eigenvalues from Section~\ref{sec:SlepianBasisEigenvalues}, we can obtain $\lambda_{K-1} \ge 1-\delta$ for some suitably small $\delta > 0$ by choosing the number of tapers to be $K = 2NW-O(\log(NW)\log\tfrac{1}{\delta})$. This choice of $K$ guarantees that $0 \le \Sigma^{(1)}_K \le \Sigma^{(2)}_K \le 1-\lambda_{K-1} \le \delta$, i.e., $\Sigma^{(1)}_K$, $\Sigma^{(2)}_K$, and $1-\lambda_{K-1}$ are all small, and thus, the global property $M = \max_f S(f)$ of the power spectral density will have a minimal impact on the non-asymptotic results below. In other words, using $K = 2NW-O(\log(NW)\log\tfrac{1}{\delta})$ tapers mitigates the ability for values of the power spectral density $S(f')$ at frequencies $f' \not\in [f-W,f+W]$ to impact the estimate $\hatSmt_K(f)$. However, if $K = 2NW - O(1)$ tapers are used, then the quantities $\Sigma^{(1)}_K$, $\Sigma^{(2)}_K$, and $1-\lambda_{K-1}$ could be large enough for the global property $M = \max_f S(f)$ of the power spectral density to significantly weaken the non-asymptotic results below. In other words, energy in the power spectral density $S(f')$ at frequencies $f' \not\in [f-W,f+W]$ can ``leak'' into the estimate $\hatSmt_K(f)$.

We begin with a bound on the bias of the multitaper spectral estimate under the additional assumption that the power spectral density is twice differentiable. Note this assumption is only used in Theorem~\ref{thm:Bias2Diff}. 
\begin{theorem}
\label{thm:Bias2Diff}
For any frequency $f \in \R$, if $S(f')$ is twice continuously differentiable in $[f-W,f+W]$, then the bias of the multitaper spectral estimate is bounded by $$\Bias\left[\hatSmt_K(f)\right] = \left|\E\hatSmt_K(f)-S(f)\right| \le \dfrac{M''_fNW^3}{3K} + (M+M_f)\Sigma_K^{(1)},$$ where $$M''_f = \max_{f' \in [f-W,f+W]}|S''(f')|.$$
\end{theorem}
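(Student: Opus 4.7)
The plan is to start from the standard convolution formula $\E\hatSmt_K(f) = (S \circledast \psi)(f)$ already derived in the excerpt, where $\psi(f) = \frac{1}{K}\sum_{k=0}^{K-1}|\widetilde{\vs}_k(f)|^2$ is the multitaper spectral window. Since $\|\vs_k\|_2 = 1$, Parseval gives $\int_{-1/2}^{1/2}\psi(f')\,df' = 1$, while $\int_{-W}^{W}\psi(f')\,df' = \frac{1}{K}\sum_{k=0}^{K-1}\lambda_k = 1 - \Sigma_K^{(1)}$, so $\int_{|f'|>W}\psi(f')\,df' = \Sigma_K^{(1)}$. Writing $S(f) = S(f)\int_{-1/2}^{1/2}\psi(f')\,df'$ and subtracting, I would split the bias into an ``inner'' contribution over $[-W,W]$ and a ``tail'' contribution over the complement of $[-W,W]$ in $[-\tfrac{1}{2},\tfrac{1}{2}]$.

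For the tail, I would simply use $|S(f-f') - S(f)| \le S(f-f') + S(f) \le M + M_f$ (the first term is globally bounded by $M$, the second by $M_f \ge S(f)$), giving a contribution of at most $(M+M_f)\,\Sigma_K^{(1)}$. This is where the ``global'' quantity $M$ enters, and it is attenuated by $\Sigma_K^{(1)}$.

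For the inner contribution, I would apply Taylor's theorem: $S(f-f') - S(f) = -S'(f)\,f' + \tfrac{1}{2}S''(\xi(f'))(f')^2$. Because each $\vs_k$ is real, $|\widetilde{\vs}_k(-f)|^2 = |\widetilde{\vs}_k(f)|^2$, so $\psi$ is even; the linear term integrates to zero against $\psi$. The quadratic remainder is bounded by
\[
\tfrac{1}{2}M''_f \int_{-W}^{W}(f')^2\,\psi(f')\,df'.
\]
The main obstacle, and the step that produces the $1/K$ factor in the claimed bound, is controlling this second-moment integral. Bounding $|\widetilde{\vs}_k(f)|^2$ termwise would lose a factor of $K$; instead, I would exploit that $\{\vs_k\}_{k=0}^{N-1}$ is a complete orthonormal basis of $\R^N$, so $\sum_{k=0}^{N-1}|\widetilde{\vs}_k(f)|^2 = \sum_{k=0}^{N-1}|\vs_k^*\ve_f|^2 = \|\ve_f\|_2^2 = N$. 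Hence $\sum_{k=0}^{K-1}|\widetilde{\vs}_k(f)|^2 \le N$ pointwise, which gives
\[
\int_{-W}^{W}(f')^2\,\psi(f')\,df' = \frac{1}{K}\int_{-W}^{W}(f')^2\sum_{k=0}^{K-1}|\widetilde{\vs}_k(f')|^2\,df' \le \frac{N}{K}\int_{-W}^{W}(f')^2\,df' = \frac{2NW^3}{3K}.
\]
Combining this with the factor $\tfrac{1}{2}M''_f$ yields the first term $\frac{M''_f NW^3}{3K}$, and adding the tail estimate gives the claimed bound.
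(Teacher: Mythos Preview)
Your proposal is correct and follows essentially the same approach as the paper: express the expectation as $S\circledast\psi$, split the bias into an inner piece over $[-W,W]$ and a tail over its complement, use Taylor's theorem with the evenness of $\psi$ to kill the linear term on the inner piece, and bound the quadratic remainder via the pointwise inequality $\psi(f')\le N/K$ (which the paper derives from $\|\mS_K^*\|=1$, $\|\ve_f\|_2^2=N$, equivalent to your completeness argument). The only cosmetic difference is that the paper subtracts $S(f)$ entirely from the inner integral (producing an extra $M_f\Sigma_K^{(1)}$ term there and only $M\Sigma_K^{(1)}$ from the tail), whereas you first write $S(f)=S(f)\int\psi$ so both $M$ and $M_f$ contributions appear in the tail; the resulting bound is identical.
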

If $K = 2NW-O(\log(NW)\log\tfrac{1}{\delta})$ tapers are used for some small $\delta > 0$, then this upper bound is slightly larger than $\tfrac{1}{6}M''_f W^2$, which is similar to the asymptotic results in \cite{Thomson82, Walden00, Lii08, Abreu17, Haley17} which state that the bias is roughly $\tfrac{1}{6}S''(f)W^2$. However, if $K = 2NW - O(1)$ tapers are used, the term $(M+M_f)\Sigma_K^{(1)}$ could dominate this bound, and the bias could be much larger than the asymptotic result.

If the power spectral density is not twice-differentiable, we can still obtain the following bound on the bias of the multitaper spectral estimate.
\begin{theorem}
\label{thm:BiasNotDiff}
For any frequency $f \in \R$, the bias of the multitaper spectral estimate is bounded by $$\Bias\left[\hatSmt_K(f)\right] = \left|\E\hatSmt_K(f)-S(f)\right| \le (M_f-m_f)(1-\Sigma_K^{(1)})+M\Sigma_K^{(1)}.$$
\end{theorem}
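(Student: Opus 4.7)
The plan is to work directly with the convolution formula $\E\hatSmt_K(f) = (S \circledast \psi)(f)$, where $\psi(f') = \tfrac{1}{K}\sum_{k=0}^{K-1}|\widetilde{\vs}_k(f')|^2$ is the multitaper spectral window. The key ingredients are (i) $\psi$ is nonnegative and, by Parseval plus $\|\vs_k\|_2=1$, integrates to $1$ over $[-\tfrac12,\tfrac12]$, and (ii) by definition of the Slepian eigenvalues, $\int_{-W}^{W}|\widetilde{\vs}_k(f')|^2\,df' = \lambda_k$, so that
\[
\int_{-W}^{W}\psi(f')\,df' \;=\; \tfrac{1}{K}\sum_{k=0}^{K-1}\lambda_k \;=\; 1 - \Sigma_K^{(1)}, \qquad \int_{|f'|>W}\psi(f')\,df' \;=\; \Sigma_K^{(1)}.
\]

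Next I would split the convolution into two pieces. Writing $\E\hatSmt_K(f) = I_{\text{in}} + I_{\text{out}}$ where $I_{\text{in}} = \int_{-W}^{W}S(f-f')\psi(f')\,df'$ and $I_{\text{out}} = \int_{|f'|>W}S(f-f')\psi(f')\,df'$, the definitions of $m_f,M_f,M$ give
\[
m_f(1-\Sigma_K^{(1)}) \;\le\; I_{\text{in}} \;\le\; M_f(1-\Sigma_K^{(1)}), \qquad 0 \;\le\; I_{\text{out}} \;\le\; M\,\Sigma_K^{(1)}.
\]
Combining the two sides yields
\[
m_f(1-\Sigma_K^{(1)}) \;\le\; \E\hatSmt_K(f) \;\le\; M_f(1-\Sigma_K^{(1)}) + M\,\Sigma_K^{(1)}.
\]

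To finish, I would use $S(f) \in [m_f, M_f]$ and subtract. The upper direction gives
\[
\E\hatSmt_K(f) - S(f) \;\le\; M_f(1-\Sigma_K^{(1)}) + M\,\Sigma_K^{(1)} - m_f \;=\; (M_f-m_f)(1-\Sigma_K^{(1)}) + M\,\Sigma_K^{(1)} - m_f\Sigma_K^{(1)},
\]
and since $m_f \ge 0$ and $\Sigma_K^{(1)} \ge 0$, this is at most $(M_f-m_f)(1-\Sigma_K^{(1)})+M\Sigma_K^{(1)}$. The lower direction gives
\[
S(f) - \E\hatSmt_K(f) \;\le\; M_f - m_f(1-\Sigma_K^{(1)}) \;=\; (M_f-m_f) + m_f\Sigma_K^{(1)},
\]
and I would invoke the global bound $M \ge M_f \ge m_f$ to rewrite $m_f\Sigma_K^{(1)} \le M\Sigma_K^{(1)}$ and $(M_f-m_f) \le (M_f-m_f)(1-\Sigma_K^{(1)}) + (M_f-m_f)\Sigma_K^{(1)}$, showing the right-hand side is at most $(M_f-m_f)(1-\Sigma_K^{(1)}) + M\Sigma_K^{(1)}$ after absorbing $(M_f-m_f)\Sigma_K^{(1)} + m_f\Sigma_K^{(1)} = M_f\Sigma_K^{(1)} \le M\Sigma_K^{(1)}$.

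There is no real obstacle here beyond careful bookkeeping; the only subtlety is being sure that the two sides of the bias are handled symmetrically, and in particular that the lower direction is tightened using $M \ge M_f$ so that the final bound features the sharp local oscillation $M_f-m_f$ weighted by the in-band mass $1-\Sigma_K^{(1)}$, plus the worst-case global leakage $M$ weighted by the out-of-band mass $\Sigma_K^{(1)}$.
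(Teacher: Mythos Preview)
Your proposal is correct and follows essentially the same route as the paper: you use the convolution representation $\E\hatSmt_K(f)=\int S(f-f')\psi(f')\,df'$, the in-band/out-of-band mass identities $\int_{-W}^W\psi=1-\Sigma_K^{(1)}$ and $\int_\Omega\psi=\Sigma_K^{(1)}$, and then bound the two directions of $\E\hatSmt_K(f)-S(f)$ exactly as the paper does. Your handling of the lower direction is the same algebra as the paper's, just written out in slightly more detail.
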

If $K = 2NW-O(\log(NW)\log\tfrac{1}{\delta})$ tapers are used for some small $\delta > 0$, then this upper bound is slightly larger than $M_f-m_f$. This guarantees the bias is small when the power spectral density is ``slowly varying'' over $[f-W,f+W]$. However, if $K = 2NW - O(1)$ tapers are used, the term $M\Sigma_K^{(1)}$ could dominate this bound, and the bias could be much larger than the asymptotic result.

Next, we state our bound on the variance of the multitaper spectral estimate. 
\begin{theorem}
\label{thm:Variance}
For any frequency $f \in \R$, the variance of the multitaper spectral estimate is bounded by $$\Var\left[\hatSmt_K(f)\right] \le \dfrac{1}{K}\left(R_f\sqrt{\dfrac{2NW}{K}} + M\Sigma^{(2)}_K\right)^2.$$
\end{theorem}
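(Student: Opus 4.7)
The plan is to exploit the subspace-projection representation $\hatSmt_K(f) = \tfrac{1}{K}\vx^*\mP_f\vx$ from Section~\ref{sec:Multitaper} with $\mP_f = \mE_f\mS_K\mS_K^*\mE_f^*$, and reduce the variance calculation to a Frobenius-norm estimate on a small matrix. Since $\vx$ is a zero-mean proper complex Gaussian with covariance $\mGamma = \E[\vx\vx^*]$, the standard identity for complex Gaussian quadratic forms gives $\Var[\vx^*\mP_f\vx] = \tr((\mP_f\mGamma)^2)$. Using $\mP_f^2 = \mP_f$, cyclicity of the trace, and $\mS_K^*\mS_K = \mI_K$, $\mE_f^*\mE_f = \mI_N$, this simplifies to
\[
\Var\left[\hatSmt_K(f)\right] = \frac{1}{K^2}\|\mQ\|_F^2, \quad \text{where} \quad \mQ := \mS_K^*\mE_f^*\mGamma\mE_f\mS_K \in \C^{K\times K}.
\]

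Next, I would invoke the spectral representation $\mGamma = \int_{-1/2}^{1/2} S(f')\,\ve_{f'}\ve_{f'}^*\,df'$ (the same decomposition that produces the convolution formula for $\E\hatSmt_K(f)$) together with the modulation identity $\mE_f^*\ve_{f'} = \ve_{f'-f}$. Setting $\vb_g := \mS_K^*\ve_g$, whose $k$th entry has magnitude $|\widetilde{\vs}_k(g)|$, a change of variable yields $\mQ = \int S(f+g)\,\vb_g\vb_g^*\,dg$. I then split the integration region at $|g| = W$ into an in-band piece $\mQ_{\mathrm{in}}$ and an out-of-band piece $\mQ_{\mathrm{out}}$, and apply the Frobenius triangle inequality $\|\mQ\|_F \le \|\mQ_{\mathrm{in}}\|_F + \|\mQ_{\mathrm{out}}\|_F$. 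This additive split is exactly what produces the $R_f + M\Sigma_K^{(2)}$ structure of the claim.

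The core of the argument is the two Frobenius-norm estimates. For the in-band piece, expand $\|\mQ_{\mathrm{in}}\|_F^2 = \int\!\!\int_{[-W,W]^2} S(f+g)S(f+h)\,|\vb_g^*\vb_h|^2\,dg\,dh$, apply the AM-GM bound $S(f+g)S(f+h) \le \tfrac{1}{2}(S(f+g)^2 + S(f+h)^2)$ and exploit $g\leftrightarrow h$ symmetry to reduce to $\int_{-W}^{W} S(f+g)^2 \bigl(\int_{-W}^{W} |\vb_g^*\vb_h|^2\,dh\bigr)\,dg$. The inner integral collapses via $\int_{-W}^{W}\ve_h\ve_h^*\,dh = \mB$ and $\mS_K^*\mB\mS_K = \mLambda_K$, becoming $\sum_{k=0}^{K-1}\lambda_k|\widetilde{\vs}_k(g)|^2 \le \sum_{k=0}^{N-1}|\widetilde{\vs}_k(g)|^2 = \|\ve_g\|_2^2 = N$ (Parseval over the full DPSS basis). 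This gives $\|\mQ_{\mathrm{in}}\|_F^2 \le N\int_{-W}^{W}S(f+g)^2\,dg = 2NW\,R_f^2$. For the out-of-band piece, I bound $S(f+g)S(f+h) \le M^2$ pointwise and run the analogous kernel calculation with $\int_{|g|>W}\ve_g\ve_g^*\,dg = \mI - \mB$, yielding $\int\!\!\int_{|g|,|h|>W}|\vb_g^*\vb_h|^2\,dg\,dh = \sum_{k=0}^{K-1}(1-\lambda_k)^2 = K(\Sigma_K^{(2)})^2$, so $\|\mQ_{\mathrm{out}}\|_F \le M\sqrt{K}\,\Sigma_K^{(2)}$. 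Combining and dividing by $K^2$, the algebraic identity $(\sqrt{2NW}\,R_f + \sqrt{K}\,M\Sigma_K^{(2)})^2 = K\bigl(R_f\sqrt{2NW/K} + M\Sigma_K^{(2)}\bigr)^2$ delivers exactly the claimed bound.

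The main obstacle I anticipate is the in-band estimate: naive approaches such as $\|\mQ_{\mathrm{in}}\|_F \le \tr(\mQ_{\mathrm{in}})$, bounding via the operator norm, or replacing $S(f+g)S(f+h)$ by $M_f^2$ would each either destroy the $\sqrt{2NW}$ prefactor or replace the root-mean-square $R_f$ by the pointwise maximum $M_f$, giving a strictly weaker statement. The pairing of AM-GM symmetrization with the reduction of the reproducing kernel $\sum_{k<K}\lambda_k|\widetilde{\vs}_k(g)|^2$ via DPSS-basis Parseval is what simultaneously produces both the right $R_f$ factor and the $\sqrt{2NW/K}$ scaling that drives the eventual $1/K$ variance decay.
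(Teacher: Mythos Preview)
Your proposal is correct and follows the paper's overall architecture: reduce the variance to $\tfrac{1}{K^2}\|\mQ\|_F^2$ with $\mQ = \mS_K^*\mE_f^*\mR\mE_f\mS_K$, split $\mQ = \mQ_{\mathrm{in}} + \mQ_{\mathrm{out}}$ at $|g|=W$, and bound the two Frobenius norms separately. Your out-of-band estimate is essentially the paper's (the paper does it via the Loewner ordering $\mQ_{\mathrm{out}} \preceq M(\mI_K-\mLambda_K)$ rather than the double-integral kernel, but the result and the input are identical).

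Where you genuinely diverge is the in-band piece. The paper does \emph{not} use your AM--GM symmetrization and kernel reduction. Instead it bounds the uncompressed $N\times N$ matrix $\int_{-W}^{W} S(f+f')\ve_{f'}\ve_{f'}^*\,df'$ directly: it periodizes the integral over $[0,1]$, partitions it into $N$ pieces of length $1/N$, and uses that $\{\tfrac{1}{\sqrt{N}}\ve_{f'+\ell/N}\}_{\ell=0}^{N-1}$ is an orthonormal basis of $\C^N$ for every $f'$, so that each integrand is a diagonal matrix in a rotating Fourier basis. A triangle inequality plus Cauchy--Schwarz then gives $\|\cdot\|_F^2 \le 2NW R_f^2$, and finally $\|\mS_K^*\mX\mS_K\|_F \le \|\mX\|_F$ brings it down to $\mQ_{\mathrm{in}}$. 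Your route stays entirely inside the compressed $K\times K$ picture, replacing the periodization trick with the identity $\int_{-W}^{W}\vb_h\vb_h^*\,dh = \mS_K^*\mB\mS_K = \mLambda_K$ and the Parseval bound $\sum_{k<K}\lambda_k|\vs_k^*\ve_g|^2 \le \|\ve_g\|_2^2 = N$. Both approaches land on the same bound $\|\mQ_{\mathrm{in}}\|_F^2 \le 2NW R_f^2$; yours is more self-contained within the DPSS algebra, while the paper's gives a bound on the full $N\times N$ integral that is independent of $K$ and of the Slepian compression.
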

If $K = 2NW-O(\log(NW)\log\tfrac{1}{\delta})$ tapers are used for some small $\delta > 0$, then this upper bound is slightly larger than $\tfrac{1}{K}R_f^2$, which is similar to the asymptotic results in \cite{Thomson82,Walden00,Lii08,Abreu17,Haley17} which state that the variance is roughly $\tfrac{1}{K}S(f)^2$. However, if $K = 2NW - O(1)$ tapers are used, the term $M\Sigma_K^{(2)}$ could dominate this bound, and the variance could be much larger than the asymptotic result.  

We also note that if the frequencies $f_1,f_2$ are more than $2W$ apart, then the multitaper spectral estimates at those frequencies have a very low covariance.  
\begin{theorem}
\label{thm:Covariance}
For any frequencies $f_1,f_2 \in \R$ such that $2W < |f_1-f_2| < 1-2W$, the covariance of the multitaper spectral estimates at those frequencies is bounded by $$0 \le \Cov\left[\hatSmt_K(f_1),\hatSmt_K(f_2)\right] \le \left((R_{f_1}+R_{f_2})\sqrt{\dfrac{2NW}{K}\Sigma^{(1)}_K} + M\Sigma^{(1)}_K\right)^2.$$
\end{theorem}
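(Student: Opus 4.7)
The plan is to express the covariance as the squared Frobenius norm of a single matrix and then control that norm by splitting the underlying spectral integral into three disjoint pieces that exploit the band separation $|f_1-f_2|>2W$. Since $\vx$ is a circular complex Gaussian vector with covariance $\mSigma=\int_{-1/2}^{1/2}S(f')\,\ve_{f'}\ve_{f'}^*\,df'$ and each estimate is $\hatSmt_K(f_i)=\tfrac{1}{K}\vx^*\mP_{f_i}\vx$ with $\mP_{f_i}=\mE_{f_i}\mS_K\mS_K^*\mE_{f_i}^*$, the Isserlis/Wick identity for quadratic forms in complex Gaussians gives $\Cov[\hatSmt_K(f_1),\hatSmt_K(f_2)]=\tfrac{1}{K^2}\tr(\mP_{f_1}\mSigma\mP_{f_2}\mSigma)=\tfrac{1}{K^2}\|\mH\|_F^2$, where $\mH:=\mS_K^*\mE_{f_1}^*\mSigma\mE_{f_2}\mS_K$. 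This Frobenius-norm form immediately yields the lower bound $\Cov\ge 0$.

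For the upper bound, substituting the spectral representation of $\mSigma$ gives $\mH=\int S(f')\,\vu_1(f')\,\vu_2(f')^*\,df'$ with $\vu_i(f'):=\mS_K^*\ve_{f'-f_i}$. The hypothesis $2W<|f_1-f_2|<1-2W$ makes the bands $\calI_{f_i}=[f_i-W,f_i+W]$ disjoint modulo~$1$, so I split $\mH=\mH_1+\mH_2+\mH_3$ according to the integration set ($\calI_{f_1}$, $\calI_{f_2}$, or the remainder $\calC$), apply the triangle inequality for $\|\cdot\|_F$, bound each rank-one integrand by $\|\vu_1(f')\|_2\|\vu_2(f')\|_2$, and use Cauchy--Schwarz on the resulting scalar integrals. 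The key concentration fact is that $\|\vu_i(f')\|_2^2=\sum_{k=0}^{K-1}|\widetilde{\vs}_k(f'-f_i)|^2$, so integrating $\|\vu_i\|_2^2$ over any subset of $[-1/2,1/2]\setminus[-W,W]$ (mod~$1$) is at most $\sum_{k=0}^{K-1}(1-\lambda_k)=K\Sigma_K^{(1)}$. The band separation forces the ``cross'' argument $f'-f_{3-i}$ into this out-of-band set whenever $f'\in\calI_{f_i}$, yielding $\int_{\calI_{f_1}}\|\vu_2\|_2^2\,df'\le K\Sigma_K^{(1)}$ and symmetrically.

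Assembling the pieces, on $\calI_{f_1}$ I combine the trivial in-band bound $\|\vu_1(f')\|_2^2\le\|\ve_{f'-f_1}\|_2^2=N$ with $\int_{\calI_{f_1}}S(f')^2\,df'=2WR_{f_1}^2$ to obtain $\|\mH_1\|_F\le R_{f_1}\sqrt{2NW\cdot K\Sigma_K^{(1)}}$, symmetrically $\|\mH_2\|_F\le R_{f_2}\sqrt{2NW\cdot K\Sigma_K^{(1)}}$, and on $\calC$ the bound $S\le M$ combined with the concentration estimate applied to \emph{both} factors gives $\|\mH_3\|_F\le M\,K\Sigma_K^{(1)}$. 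Summing, dividing by $K$, and squaring recovers the stated inequality exactly. The main obstacle is establishing the out-of-band bound on the cross term $\int_{\calI_{f_1}}\|\vu_2\|_2^2\,df'$: this is the only place the separation hypothesis enters, and it is what converts the trivial Cauchy--Schwarz bound $\sqrt{\Var[\hatSmt_K(f_1)]\Var[\hatSmt_K(f_2)]}$ (which does not vanish with $W\to 0$) into one that shrinks with the spectral-concentration budget $\Sigma_K^{(1)}$.
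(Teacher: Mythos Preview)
Your proposal is correct and follows essentially the same route as the paper's proof: both reduce the covariance to $\tfrac{1}{K^2}\|\mS_K^*\mE_{f_1}^*\mR\mE_{f_2}\mS_K\|_F^2$ via the complex Gaussian fourth-moment identity, split the spectral integral into the two bands $[f_i-W,f_i+W]$ and their complement, and bound each piece by combining the triangle inequality, the rank-one identity $\|\vu\vw^*\|_F=\|\vu\|_2\|\vw\|_2$, Cauchy--Schwarz, the in-band bound $\|\mS_K^*\ve_f\|_2^2\le N$, and the out-of-band concentration $\int_{\Omega}\|\mS_K^*\ve_f\|_2^2\,df=K\Sigma_K^{(1)}$. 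The separation hypothesis $2W<|f_1-f_2|<1-2W$ is invoked at exactly the same point in both arguments, namely to place the cross factor $\|\vu_{3-i}\|_2^2$ inside the out-of-band set when integrating over $\calI_{f_i}$.
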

If $K = 2NW-O(\log(NW)\log\tfrac{1}{\delta})$ tapers are used for some small $\delta > 0$, then the covariance is guaranteed to be small. However, if $K = 2NW - O(1)$ tapers are used, the upper bound is no longer guaranteed to be small, and the covariance could be large. 

Finally, we also provide a concentration result for the multitaper spectral estimate. 
\begin{theorem}
\label{thm:Concentration}
For any frequency $f \in \R$, the multitaper spectral estimate satisfies the concentration inequalities $$\P\left\{\hatSmt_K(f) \ge \beta\E\hatSmt_K(f)\right\} \le \beta^{-1}\e^{-\kappa_f(\beta-1-\ln \beta)} \quad \text{for} \quad \beta > 1,$$ and $$\P\left\{\hatSmt_K(f) \le \beta\E\hatSmt_K(f)\right\} \le \e^{-\kappa_f(\beta-1-\ln \beta)} \quad \text{for} \quad 0 < \beta < 1,$$ where the frequency dependent constant $\kappa_f$ satisfies $$\kappa_f \ge \dfrac{K\left(1-\Sigma^{(1)}_K\right)M_f - 2NW(M_f-A_f)}{M_f+(M-M_f)(1-\lambda_{K-1})}.$$
\end{theorem}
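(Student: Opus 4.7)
The plan is to realize $\hatSmt_K(f)$ as a quadratic form in a complex Gaussian vector, diagonalize it into a weighted sum of i.i.d.\ exponentials, and apply Chernoff-style bounds for gamma-like tails. Concretely, I let $\mQ_f := \mE_f\mS_K\mS_K^*\mE_f^*$, which is a rank-$K$ orthogonal projection, so that $K\hatSmt_K(f) = \vx^*\mQ_f\vx$. Since $\vx$ is complex Gaussian with covariance $\mR$ (the Toeplitz matrix with PSD $S$), writing $\vx \stackrel{d}{=} \mR^{1/2}\vz$ for $\vz$ standard complex Gaussian yields $K\hatSmt_K(f) \stackrel{d}{=} \vz^*\mM\vz$ with $\mM := \mR^{1/2}\mQ_f\mR^{1/2}$ Hermitian PSD of rank at most $K$. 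Diagonalizing $\mM$ represents this as $\sum_j \mu_j E_j$, where the $E_j$ are i.i.d.\ $\mathrm{Exp}(1)$ and $\sum_j\mu_j = \trace(\mM) = K\,\E\hatSmt_K(f)$.

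Next, for such a weighted sum, convexity of $\nu\mapsto -\ln(1-s\nu)$ on $\nu\in[0,1]$ yields the MGF bound $\E e^{tY}\le(1-t\mumax)^{-\kappa}$ with $\mumax = \max_j\mu_j$ and $\kappa := \E Y/\mumax$; an analogous bound handles $\E e^{-tY}$. Optimizing $t$ reproduces the standard gamma-tail rate $e^{-\kappa(\beta-1-\ln\beta)}$ in both tails, and the additional $\beta^{-1}$ prefactor in the upper tail is recovered by a refined Markov step (replacing $\mathbf{1}\{Y\ge\beta\mu\}\le e^{t(Y-\beta\mu)}$ by $\mathbf{1}\{Y\ge\beta\mu\}\le(Y/\beta\mu)e^{t(Y-\beta\mu)}$ and bounding the resulting $\E[Y e^{tY}]$ via the convex MGF bound). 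It therefore suffices to exhibit $\kappa \ge \kappa_f$ with the stated lower bound, which reduces to upper bounding $\mumax$ and lower bounding $\E Y$.

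For the denominator, I invoke the spectral representation $\mR = \int_{-1/2}^{1/2}S(\xi)\ve_\xi\ve_\xi^*\,d\xi$. Any unit vector $\vv = \mE_f\mS_K\va$ in the range of $\mQ_f$ then satisfies
\begin{equation*}
\vv^*\mR\vv \le M_f + (M-M_f)\int_{|\eta|>W}\bigl|\va^*\mS_K^*\ve_\eta\bigr|^2\,d\eta,
\end{equation*}
and the Slepian biorthogonality relation $\int_{|\eta|>W}\overline{\widetilde{\vs_k}(\eta)}\widetilde{\vs_\ell}(\eta)\,d\eta=(1-\lambda_k)\delta_{k\ell}$ collapses the remaining integral to $\sum_{k<K}|a_k|^2(1-\lambda_k)\le 1-\lambda_{K-1}$. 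Hence $\mumax\le M_f+(M-M_f)(1-\lambda_{K-1})$, which is exactly the denominator of the stated bound.

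The main obstacle is arranging the numerator to read $K(1-\Sigma_K^{(1)})M_f - 2NW(M_f - A_f)$ rather than the cruder bound $K(1-\Sigma_K^{(1)})m_f$ that one would get from brute minimization of $S$ on $[f-W,f+W]$. The trick is to exploit the pointwise Parseval identity $\sum_{k=0}^{N-1}|\widetilde{\vs_k}(\eta)|^2 = N$. Writing $\E Y = \int S(\eta+f)K\psi(\eta)\,d\eta$ with $K\psi(\eta) := \sum_{k<K}|\widetilde{\vs_k}(\eta)|^2$, keeping only the (nonnegative) in-band contribution, and substituting $K\psi(\eta) = N - \sum_{k\ge K}|\widetilde{\vs_k}(\eta)|^2$ on $[-W,W]$ gives
\begin{equation*}
\int_{-W}^{W} S(\eta+f)K\psi(\eta)\,d\eta \ge 2NW A_f - M_f\sum_{k\ge K}\lambda_k = K(1-\Sigma_K^{(1)})M_f - 2NW(M_f - A_f),
\end{equation*}
using $\sum_{k\ge K}\lambda_k = \trace(\mB) - \sum_{k<K}\lambda_k = 2NW - K(1-\Sigma_K^{(1)})$. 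Combining this lower bound on $\E Y$ with the earlier upper bound on $\mumax$ delivers the claimed lower bound on $\kappa_f$.
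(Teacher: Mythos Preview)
Your proposal is correct and follows essentially the same route as the paper. The paper packages the reduction to $\kappa_f=\tr[\mS_K^*\mE_f^*\mR\mE_f\mS_K]/\|\mS_K^*\mE_f^*\mR\mE_f\mS_K\|$ into a separate lemma (diagonalizing $\mR^{1/2}\mA^*\mA\mR^{1/2}$ and invoking Janson's tail bound for sums of exponentials), whereas you sketch the Chernoff argument directly; the denominator bound via your variational computation over unit vectors in the range of $\mQ_f$ is equivalent to the paper's Loewner-ordering argument $\mS_K^*\mE_f^*\mR\mE_f\mS_K\preceq M_f\mLambda_K+M(\mId-\mLambda_K)$, and your numerator bound via the Parseval identity $\sum_{k<N}|\widetilde{\vs_k}(\eta)|^2=N$ and $\sum_{k\ge K}\lambda_k=2NW-K(1-\Sigma_K^{(1)})$ is exactly the paper's $N/K-\psi(f')\ge 0$ trick rewritten on the complementary index set.
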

We note that these are identical to the concentration bounds for a chi-squared random variable with $2\kappa_f$ degrees of freedom. If $K = 2NW-O(\log(NW)\log\tfrac{1}{\delta})$ tapers are used for some small $\delta > 0$ and the power spectral density is ``slowly varying'' over $[f-W,f+W]$, then this lower bound on $\kappa_f$ is slightly less than $K$. Hence, $\hatSmt_K(f)$ has a concentration behavior that is similar to a chi-squared random variable with $2K$ degrees of freedom, as the asymptotic results in \cite{Thomson82,Percival93} suggest. However, if $K = 2NW - O(1)$ tapers are used, then $\kappa_f$ could be much smaller, and thus, the multitaper spectral estimate would have significantly worse concentration about its mean. 

The proofs of Theorems~\ref{thm:Bias2Diff}-\ref{thm:Concentration} are given in Appendix~\ref{sec:ParameterSelectionProofs}. In Section~\ref{sec:Simulations}, we perform simulations demonstrating that using $K = 2NW - O(1)$ tapers results in a multitaper spectral estimate that is vulnerable to spectral leakage, whereas using $K = 2NW-O(\log(NW)\log\tfrac{1}{\delta})$ tapers for a suitably small $\delta > 0$ significantly reduces the impact of spectral leakage on the multitaper spectral estimate.

\section{Fast Algorithms}
\label{sec:FastAlgorithms}
Given a vector of $N$ samples $\vx \in \C^N$, evaluating the multitaper spectral estimate $\hatSmt_K(f)$ at a grid of $L$ evenly spaced frequencies $f \in [L]/L$ (where we assume $L \ge N$) can be done in $O(KL\log L)$ operations and using $O(KL)$ memory via $K$ length-$L$ fast Fourier transforms (FFTs). In applications where the number of samples $N$ is small, the number of tapers $K$ used is usually a small constant, and so, the computational requirements are a small constant factor more than that of an FFT. However in many applications, using a large number of tapers is desirable, but the computational requirements make this impractical. As mentioned in Section~\ref{sec:Intro}, if the power spectrum $S(f)$ is twice-differentiable, then the MSE of the multitaper spectral estimate is minimized when the bandwidth parameter is $W = O(N^{-1/5})$ and $K = O(N^{4/5})$ tapers are used \cite{Abreu17}. For medium to large scale problems, precomputing and storing $O(N^{4/5})$ tapers and/or performing $O(N^{4/5})$ FFTs may be impractical.

In this section, we present an $\eps$-approximation $\tildeSmt_K(f)$ to the multitaper spectral estimate $\hatSmt_K(f)$ which requires $O(L\log L\log(NW)\log\tfrac{1}{\eps})$ operations and $O(L\log(NW)\log\tfrac{1}{\eps})$ memory. This is faster than the exact multitaper spectral estimation provided the number of tapers satisfies $K \gtrsim \log(NW)\log\tfrac{1}{\eps}$.

To construct this approximation, we first fix a tolerance parameter $\eps \in (0,\tfrac{1}{2})$, and suppose that the number of tapers, $K$, is chosen such that $\lambda_{K-1} \ge \tfrac{1}{2}$ and $\lambda_K \le 1-\eps$. Note that this is a very mild assumption as it only forces $K$ to be slightly less than $2NW$. Next, we partition the indices $[N]$ into four sets:
\begin{align*}
\setI_1 &= \{k \in [K] : \lambda_k \ge 1-\eps\}
\\
\setI_2 &= \{k \in [K] : \eps < \lambda_k < 1-\eps\}
\\
\setI_3 &= \{k \in [N] \setminus [K] : \eps < \lambda_k < 1-\eps\}
\\
\setI_4 &= \{k \in [N] \setminus [K] : \lambda_k \le \eps\}
\end{align*}
and define the approximate estimator $$\tildeSmt_K(f) := \dfrac{1}{K}\Psi(f) + \dfrac{1}{K}\sum_{k \in \setI_2}(1-\lambda_k)\hatS_k(f) - \dfrac{1}{K}\sum_{k \in \setI_3}\lambda_k\hatS_k(f),$$ where $$\Psi(f) := \sum_{k = 0}^{N-1}\lambda_k\hatS_k(f).$$

Both $\hatSmt_K(f)$ and $\tildeSmt_K(f)$ are weighted sums of the single taper estimates $\hatS_k(f)$ for $k \in [N]$. Additionally, it can be shown that the weights are similar, i.e., the first $K$ weights are exactly or approximately $\tfrac{1}{K}$, and the last $N-K$ weights are exactly or approximately $0$. Hence, it is reasonable to expect that $\tildeSmt_K(f) \approx \hatSmt_K(f)$. The following theorem shows that is indeed the case. 

\begin{theorem}
\label{thm:ApproxMultitaper}
The approximate multitaper spectral estimate $\tildeSmt_K(f)$ defined above satisfies $$\left|\tildeSmt_K(f) - \hatSmt_K(f)\right| \le \dfrac{\eps}{K}\|\vx\|_2^2 \quad \text{for all} \quad f \in \R.$$ 
\end{theorem}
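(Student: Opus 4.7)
The plan is to directly compute the difference $\tildeSmt_K(f)-\hatSmt_K(f)$ as a weighted combination of the single-taper estimates $\hatS_k(f)$, check that the weights vanish except on the two ``extreme'' index sets $\setI_1$ and $\setI_4$, and then bound the coefficients on those sets by $\eps$ before applying a Parseval-type identity to sum up the $\hatS_k(f)$.

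First I would multiply through by $K$ and regroup. The assumptions $\lambda_{K-1}\ge \tfrac12>\eps$ and $\lambda_K\le 1-\eps$ guarantee $\setI_1\cup\setI_2=[K]$ and $\setI_3\cup\setI_4=[N]\setminus[K]$, so $\{\setI_1,\setI_2,\setI_3,\setI_4\}$ is a partition of $[N]$. Splitting $\Psi(f)=\sum_{k=0}^{N-1}\lambda_k\hatS_k(f)$ along this partition and canceling gives
\begin{align*}
K\,\tildeSmt_K(f)
&= \sum_{k\in\setI_1}\lambda_k\hatS_k(f) + \sum_{k\in\setI_2}\hatS_k(f) + \sum_{k\in\setI_4}\lambda_k\hatS_k(f),
\\
K\,\hatSmt_K(f)
&= \sum_{k\in\setI_1}\hatS_k(f) + \sum_{k\in\setI_2}\hatS_k(f),
\end{align*}
so that
$$K\bigl[\tildeSmt_K(f)-\hatSmt_K(f)\bigr] = -\!\!\sum_{k\in\setI_1}(1-\lambda_k)\hatS_k(f) + \sum_{k\in\setI_4}\lambda_k\hatS_k(f).$$

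Next, since $\hatS_k(f)\ge 0$, the triangle inequality gives
$$\bigl|K[\tildeSmt_K(f)-\hatSmt_K(f)]\bigr| \le \sum_{k\in\setI_1}(1-\lambda_k)\hatS_k(f) + \sum_{k\in\setI_4}\lambda_k\hatS_k(f).$$
By definition $1-\lambda_k\le\eps$ on $\setI_1$ and $\lambda_k\le\eps$ on $\setI_4$, so both coefficients are at most $\eps$ and the right-hand side is bounded by $\eps\sum_{k\in\setI_1\cup\setI_4}\hatS_k(f)\le\eps\sum_{k=0}^{N-1}\hatS_k(f)$.

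Finally, I would invoke unitarity: since $\mS=[\vs_0\;\cdots\;\vs_{N-1}]$ is an orthogonal matrix and $\mE_f$ is unitary, the linear-algebra view from Section~\ref{sec:Multitaper} gives $\sum_{k=0}^{N-1}\hatS_k(f)=\|\mS^*\mE_f^*\vx\|_2^2=\|\vx\|_2^2$, which completes the proof after dividing by $K$. There is no real obstacle here; the only thing that needs care is the bookkeeping of which weight lives on which $\setI_j$, together with noting that the two mild assumptions on $\lambda_{K-1}$ and $\lambda_K$ are exactly what is needed to ensure the partition is clean and to rule out any ``unaccounted'' terms with $\lambda_k$ in the transition region that would sit in $[K]\cap\setI_3$ or $([N]\setminus[K])\cap\setI_2$.
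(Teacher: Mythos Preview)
Your proposal is correct and follows essentially the same approach as the paper: both rewrite $\hatSmt_K(f)$ and $\tildeSmt_K(f)$ as $\sum_k \gamma_k\hatS_k(f)$ and $\sum_k \tildegamma_k\hatS_k(f)$, observe that the coefficients agree on $\setI_2\cup\setI_3$ and differ by at most $\eps/K$ on $\setI_1\cup\setI_4$, and then bound the resulting combination by $\tfrac{\eps}{K}\|\vx\|_2^2$. The only cosmetic difference is in the last step: the paper packages it into a general lemma using the quadratic-form identity $\sum_k c_k\hatS_k(f)=\vx^*\mE_f\mS\,\mathrm{diag}(c)\,\mS^*\mE_f^*\vx$ and an operator-norm bound, whereas you use $\hatS_k(f)\ge 0$ together with the Parseval identity $\sum_{k=0}^{N-1}\hatS_k(f)=\|\vx\|_2^2$ directly.
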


Furthermore, we show in Lemma~\ref{lem:FastPsi} that $\Psi(f) = \vx^*\mE_f\mB\mE_f^*\vx$. This formula doesn't involve any of the Slepian tapers. By exploiting the fact that the prolate matrix $\mB$ is Toeplitz, we also show in Lemma~\ref{lem:FastPsi} that if $L \ge 2N$, then $$\begin{bmatrix}\Psi(\tfrac{0}{L}) & \Psi(\tfrac{1}{L}) & \cdots & \Psi(\tfrac{L-2}{L}) & \Psi(\tfrac{L-1}{L}) \end{bmatrix}^T = \mF^{-1}\left(\vb \circ \mF\left|\mF\mZ\vx\right|^2\right),$$ where $\mZ \in \R^{L \times N}$ is a matrix which zero-pads length-$N$ vectors to length-$L$, $\mF \in \C^{L \times L}$ is a length-$L$ FFT matrix, $\vb \in \R^{L}$ is the first column of the matrix formed by extending the prolate matrix $\mB$ to an $L \times L$ circulant matrix, $| \cdot |^2$ denotes the pointwise magnitude-squared, and $\circ$ denotes a pointwise multiplication. Hence, $\Psi(f)$ can be evaluated at a grid of $L$ evenly spaced frequencies $f \in [L]/L$ in $O(L \log L)$ operations via three length-$L$ FFTs/inverse FFTs. Evaluating the other $\#(\setI_2\cup\setI_3) = O(\log(NW)\log\tfrac{1}{\eps})$ terms in the expression for $\tildeSmt_K(f)$ at the $L$ grid frequencies can be done in $O(L\log L\log(NW)\log\tfrac{1}{\eps})$ operations via $\#(\setI_2\cup\setI_3)$ length-$L$ FFTs. Using these results, we establish the following theorem which states how quickly $\tildeSmt_K(f)$ can be evaluated at the grid frequencies.

\begin{theorem}
\label{thm:FastMultitaper}
For any vector of samples $\vx \in \C^N$ and any number of grid frequencies $L \ge N$, the approximate multitaper spectral estimate $\tildeSmt_K(f)$ can be evaluated at the $L$ grid frequencies $f \in [L]/L$ in $O(L\log L \log(NW)\log\tfrac{1}{\eps})$ operations and using $O(L\log(NW)\log\tfrac{1}{\eps})$ memory.
\end{theorem}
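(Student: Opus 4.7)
The plan is to decompose the cost of evaluating $\tildeSmt_K(f)$ on the grid $\{\ell/L : \ell \in [L]\}$ into the three summands that define it and to bound each contribution separately. The crucial quantitative ingredient is the transition-region count recalled in Section~\ref{sec:SlepianBasisEigenvalues}, which gives
$$|\setI_2 \cup \setI_3| \;=\; \#\{k : \eps < \lambda_k < 1-\eps\} \;\le\; \tfrac{2}{\pi^2}\log(100NW+25)\log\tfrac{5}{\eps(1-\eps)} + 7 \;=\; O\!\left(\log(NW)\log\tfrac{1}{\eps}\right).$$
Once this number is pinned down, the remainder of the argument reduces to counting FFTs.

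First I would evaluate $\Psi(f)$ on the grid by directly invoking Lemma~\ref{lem:FastPsi}: its closed form $\mF^{-1}(\vb \circ \mF|\mF\mZ\vx|^2)$ uses three length-$L$ FFT/IFFT calls together with two pointwise operations on length-$L$ vectors, for a total of $O(L\log L)$ operations and $O(L)$ memory, the vector $\vb$ being precomputed once from the known Toeplitz entries of $\mB$. Next, for each $k \in \setI_2 \cup \setI_3$, I would pointwise-multiply the precomputed Slepian taper $\vs_k$ by $\vx$, zero-pad the resulting length-$N$ vector to length $L$, apply one length-$L$ FFT, take the pointwise squared magnitude to obtain $\hatS_k(\ell/L)$ for all $\ell \in [L]$, and then accumulate the weighted contribution $\tfrac{1}{K}(1-\lambda_k)\hatS_k$ (for $k \in \setI_2$) or $-\tfrac{1}{K}\lambda_k \hatS_k$ (for $k \in \setI_3$) into a running length-$L$ sum. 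Each iteration of this loop costs $O(L\log L)$ operations and reuses the same $O(L)$ auxiliary workspace.

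Combining the two contributions yields an operation count of $O(L\log L) + |\setI_2 \cup \setI_3|\cdot O(L\log L) = O(L\log L \log(NW)\log\tfrac{1}{\eps})$, matching the stated time bound. For the memory bound, in addition to the $O(L)$ running total and the $O(L)$ FFT workspace that is overwritten on every pass through the loop, the only persistent storage is the collection $\{\vs_k : k \in \setI_2 \cup \setI_3\}$ together with the associated eigenvalues, which occupies $O(N \cdot |\setI_2 \cup \setI_3|) = O(L\log(NW)\log\tfrac{1}{\eps})$ words since $N \le L$. There is no genuine obstacle in this argument; the only subtlety is making sure the FFT buffers are recycled between iterations so that their workspace does not accumulate and inflate the memory bound by an extra factor of $|\setI_2 \cup \setI_3|$.
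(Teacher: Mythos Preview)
Your proposal is correct and follows essentially the same decomposition as the paper: bound $|\setI_2\cup\setI_3|$ via the transition-region estimate, evaluate $\Psi$ on the grid via Lemma~\ref{lem:FastPsi}, and handle the remaining terms with one length-$L$ FFT per taper. Two small technical points the paper addresses explicitly that you gloss over: (i) Lemma~\ref{lem:FastPsi} requires $L\ge 2N$, whereas the theorem only assumes $L\ge N$; for $N\le L<2N$ the paper evaluates $\Psi$ on the finer grid $[2L]/2L$ and downsamples, which keeps the cost at $O(L\log L)$; (ii) you treat the tapers $\vs_k$ as ``precomputed'' and only count their storage, but the theorem's operation bound must include the cost of computing them --- the paper invokes an $O(N\log N)$-per-taper method (exploiting the tridiagonal commuting operator) so that this precomputation contributes $O(N\log N\cdot|\setI_2\cup\setI_3|)$ operations, which is dominated by the stated bound since $N\le L$.
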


Note if $N \le L < 2N$, we can apply the method briefly described above to evaluate $\Psi(f)$ at $f \in [2L]/2L$, and then downsample the result. The proofs of Theorems~\ref{thm:ApproxMultitaper} and \ref{thm:FastMultitaper} are given in Appendix~\ref{sec:FastAlgorithmsProofs}. In Section~\ref{sec:Simulations}, we perform simulations comparing the time needed to evaluate $\hatSmt_K(f)$ and $\tildeSmt_K(f)$ at a grid of frequencies.  

\section{Simulations}
\label{sec:Simulations}

In this section, we show simulations to demonstrate three observations. (1) Using $K = 2NW-O(\log(NW))$ tapers instead of the traditional choice of $K = \floor{2NW}-1$ tapers significantly reduce the effects of spectral leakage. (2) Using a larger bandwidth $W$, and thus, more tapers can produce a more robust spectral estimate. (3) As the number of samples $N$ and the number of tapers $K$ grows, our approximation $\tildeSmt_K(f)$ becomes significantly faster to use than the exact multitaper spectral estimate $\hatSmt_K(f)$. 

\subsection{Spectral leakage}
\label{sec:SpectralWindow}
First, we demonstrate that choosing $K = 2NW-O(\log(NW))$ tapers instead of the traditional choice of $K = \floor{2NW}-1$ tapers significantly reduces the effects of spectral leakage. We fix a signal length of $N = 2000$, a bandwidth parameter of $W = \tfrac{1}{100}$ (so $2NW = 40$) and consider four choices for the number of tapers: $K = 39$, $36$, $32$, and $29$. Note that $K = 39 = \floor{2NW}-1$ is the traditional choice as to how many tapers to use, while $36$, $32$, and $29$ are the largest values of $K$ such that $\lambda_{K-1}$ is at least $1-10^{-3}$, $1-10^{-6}$, and $1-10^{-9}$ respectively.

In Figure~\ref{fig:SpectralWindow}, we show three plots of the spectral window $$\psi(f) = \dfrac{1}{K}\sum_{k = 0}^{K-1}\left|\sum_{n = 0}^{N-1}\vs_k[n]e^{-j 2\pi fn}\right|^2$$ of the multitaper spectral estimate for each of those values of $K$. At the top of Figure~\ref{fig:SpectralWindow}, we plot $\psi(f)$ over the entire range $[-\tfrac{1}{2},\tfrac{1}{2}]$ using a logarithmic scale. The lines outside appear thick due to the highly oscillatory behavior of $\psi(f)$. This can be better seen in the middle of Figure~\ref{fig:SpectralWindow}, where we plot $\psi(f)$ over $[-2W,2W]$ using a logarithmic scale. The behavior of $\psi(f)$ inside $[-W,W]$ can be better seen at the bottom of Figure~\ref{fig:SpectralWindow}, where we plot $\psi(f)$ over $[-2W,2W]$ using a linear scale.

All four spectral windows have similar behavior in that $\psi(f)$ is small outside $[-W,W]$ and large near $0$. However, outside of $[-W,W]$ the spectral windows using $K = \floor{2NW}-O(\log(NW))$ tapers are multiple orders of magnitude smaller than the spectral window using $K = \floor{2NW}-1$ tapers. Hence, the amount of spectral leakage can be reduced by multiple orders of magnitude by trimming the number of tapers used from $K = \floor{2NW}-1$ to $K = 2NW-O(\log(NW))$. 

\begin{figure}
\centering
\includegraphics[scale=0.35]{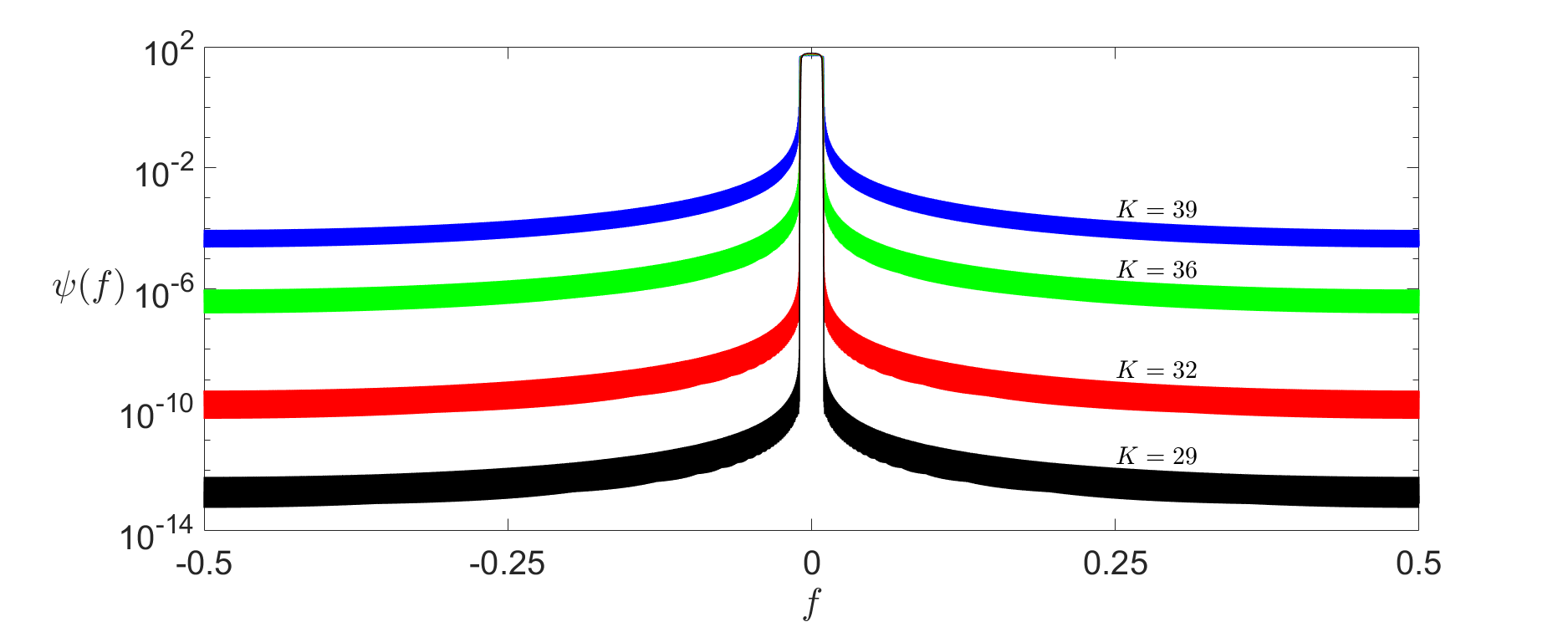}

\includegraphics[scale=0.35]{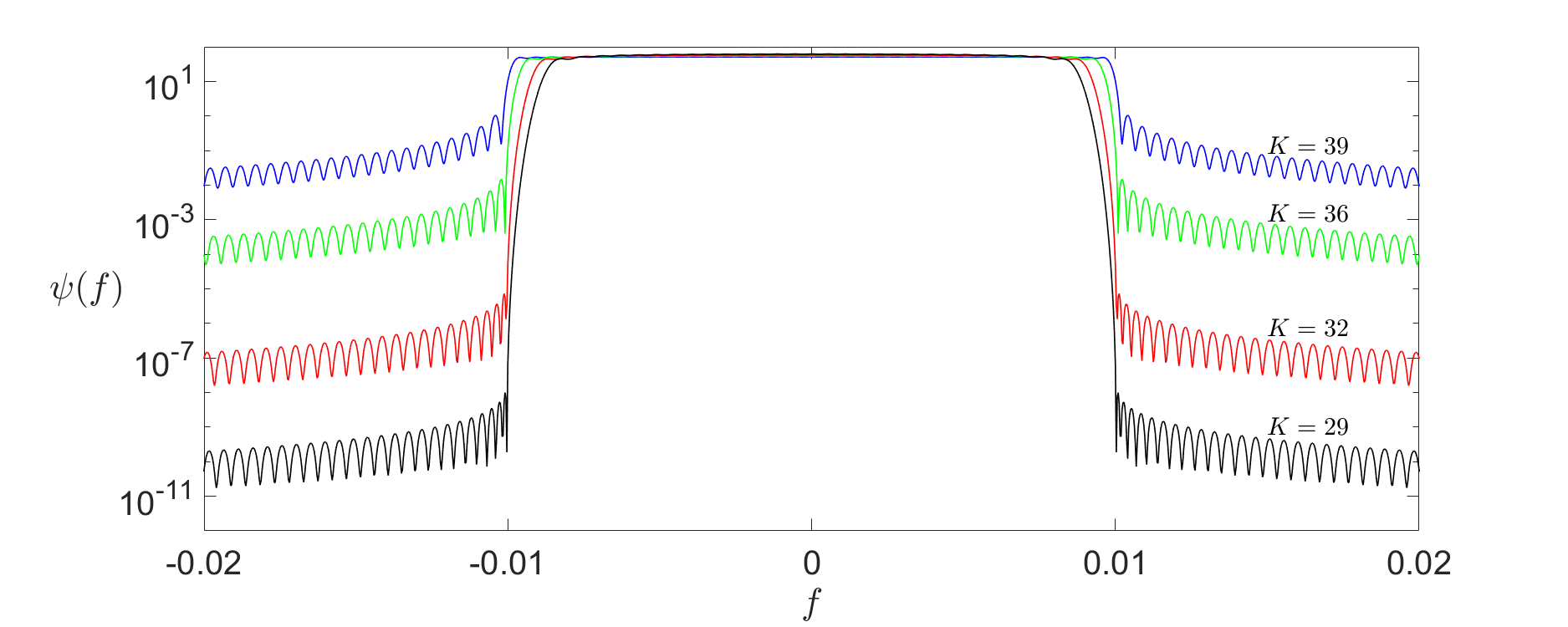}

\includegraphics[scale=0.35]{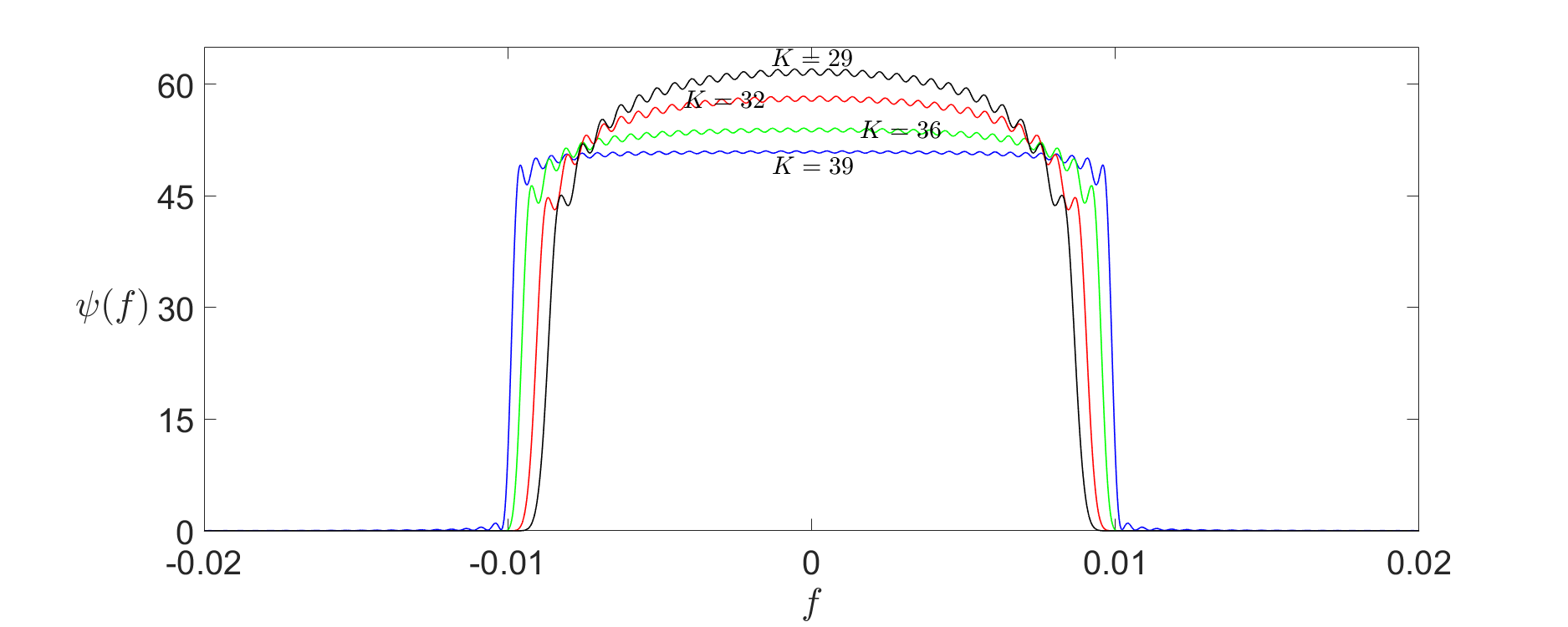}

\caption{Plots of the spectral windows $\psi(f)$ for $N = 2000$, $W = \tfrac{1}{100}$, and $K = 39$, $36$, $32$, and $29$ tapers. (Top) A logarithmic scale plot over $f \in [-\tfrac{1}{2},\tfrac{1}{2}]$. (Middle) A logarithmic scale plot over $f \in [-2W,2W]$. (Bottom) A linear scale plot over $f \in [-2W,2W]$.}
\label{fig:SpectralWindow}
\end{figure}

We further demonstrate the importance of using $K = 2NW-O(\log(NW))$ tapers to reduce spectral leakage by showing a signal detection example. We generate a vector $\vx \in \C^N$ of $N = 2000$ samples of a Gaussian random process with a power spectral density function of $$S(f) = \begin{cases}10^3 & \text{if} \ f \in [0.18,0.22] \\ 10^9 & \text{if} \ f \in [0.28,0.32] \\ 10^2 & \text{if} \ f \in [0.38,0.42] \\ 10^1 & \text{if} \ f \in [0.78,0.82]  \\ 10^0 & \text{else}\end{cases}.$$ This simulates an antenna receiving signals from four narrowband sources with some background noise. Note that one source is significantly stronger than the other three sources. In Figure~\ref{fig:MultibandSpectrum}, we plot: 
\begin{itemize}
\item the periodogram of $\vx$, 

\item the multitaper spectral estimate of $\vx$ with $W = \tfrac{1}{100}$ and $K = \floor{2NW}-1 = 39$ tapers, 

\item the multitaper spectral estimate of $\vx$ with $W = \tfrac{1}{100}$ and $K = 29$ tapers (chosen so $\lambda_{K-1} \ge 1-10^{-9}$).
\end{itemize}

\begin{figure}
\centering
\includegraphics[scale=0.35]{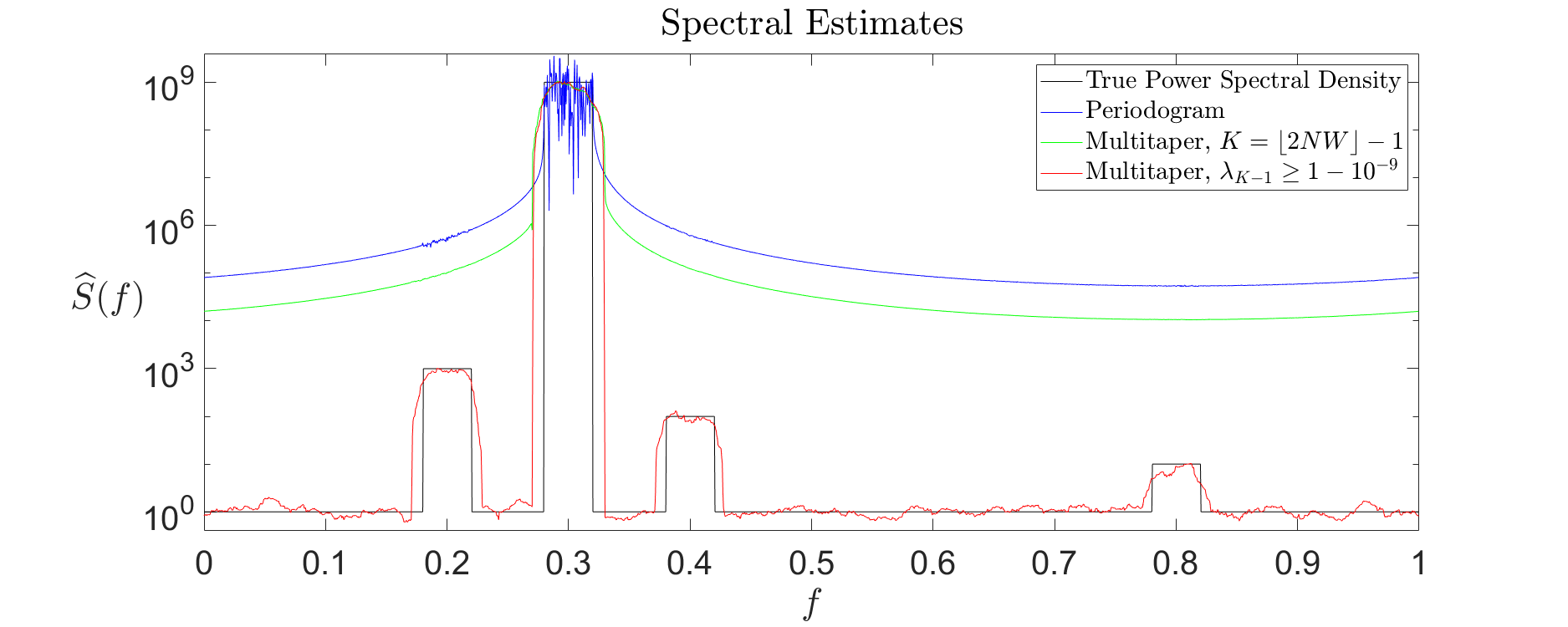}

\caption{Plots of the true power spectral density, the periodogram, the multitaper spectral estimate with $W = \tfrac{1}{100}$ and $K = \floor{2NW}-1 = 39$, and the multitaper spectral estimate with $W = \tfrac{1}{100}$ and $K = 29$ tapers (chosen so $\lambda_{K-1} \ge 1-10^{-9}$).}
\label{fig:MultibandSpectrum}
\end{figure}

We note that all three spectral estimates yield large values in the frequency band $[0.28,0.32]$. However, in the periodogram and the multitaper spectral estimate with $K = 39$ tapers, the energy in the frequency band $[0.28,0.32]$ ``leaks'' into the frequency bands occupied by the smaller three sources. As a result, the smaller three sources are hard to detect using the periodogram or the multitaper spectral estimate with $K = 39$ tapers. However, all four sources are clearly visible when looking at the multitaper spectral estimate with $K = 29$ tapers. For frequencies $f$ not within $W$ of the edges of the frequency band, the multitaper spectral estimate is within a small constant factor of the true power spectral density. 

\subsection{Comparison of spectral estimation methods}
Next, we demonstrate a few key points about selecting the bandwidth parameter $W$ and the number of tapers $K$ used in the multitaper spectral estimate. We compare the following eight methods to estimate the power spectrum of a Gaussian random process from a vector $\vx \in \C^N$ of $N = 2^{18} = 262144$ samples: 
\begin{enumerate}
\item The classic periodogram

\item A tapered periodogram using a single DPSS taper $\vs_0$ with $2NW = 8$ (chosen such that $\lambda_0 \ge 1-10^{-9}$)

\item The exact multitaper spectral estimate $\hatSmt_K(f)$ with a small bandwidth parameter of $W = 1.25 \times 10^{-4}$ and $K = \floor{2NW}-1 = 64$ tapers

\item The exact multitaper spectral estimate $\hatSmt_K(f)$ with a small bandwidth parameter of $W = 1.25 \times 10^{-4}$ and $K = 53$ tapers (chosen such that $\lambda_{K-1} \ge 1-10^{-9} > \lambda_K$)

\item The approximate multitaper spectral estimate $\tildeSmt_K(f)$ with a larger bandwidth parameter of $W = 2.0 \times 10^{-3}$, $K = \floor{2NW}-1 = 1047$ tapers, and a tolerance parameter of $\eps = 10^{-9}$

\item The approximate multitaper spectral estimate $\tildeSmt_K(f)$ with a larger bandwidth parameter of $W = 2.0 \times 10^{-3}$, $K = 1031$ tapers (chosen such that $\lambda_{K-1} \ge 1-10^{-9} > \lambda_K$), and a tolerance parameter of $\eps = 10^{-9}$

\item The exact multitaper spectral estimate with the adaptive weighting scheme suggested by Thomson\cite{Thomson82} with a small bandwidth parameter of $W = 1.25 \times 10^{-4}$ and $K = \floor{2NW}-1 = 64$ tapers

\item The exact multitaper spectral estimate with the adaptive weighting scheme suggested by Thomson\cite{Thomson82} with a larger bandwidth parameter of $W = 2.0 \times 10^{-3}$ and $K = \floor{2NW}-1 = 1047$ tapers
\end{enumerate}

The adaptive weighting scheme computes the single taper periodograms $\hatS_k(f)$ for $k \in [K]$, and then forms a weighted estimate $$\hatSad_K(f) = \dfrac{\sum_{k = 0}^{K-1}\alpha_k(f)\hatS_k(f)}{\sum_{k = 0}^{K-1}\alpha_k(f)}$$ where the frequency dependent weights $\alpha_k(f)$ satisfy $$\alpha_k(f) = \dfrac{\lambda_k\hatSad_K(f)^2}{\left(\lambda_k\hatSad_K(f)+(1-\lambda_k)\sigma^2\right)^2}$$ where $\sigma^2 = \tfrac{1}{N}\|\vx\|_2^2$. Of course, solving for the weights directly is difficult, so this method requires initializing the weights and alternating between updating the estimate $\hatSad_K(f)$ and updating the weights $\alpha_k(f)$. This weighting procedure is designed to keep all the weights large at frequencies where $S(f)$ is large and reduce the weights of the last few tapers at frequencies where $S(f)$ is small. Effectively, this allows the spectrum to be estimated with more tapers at frequencies where $S(f)$ is large while simultaneously reducing the spectral leakage from the last few tapers at frequencies where $S(f)$ is small. The cost is the increased computation time due to setting the weights iteratively. For more details regarding this adaptive scheme, see \cite{Haykin09}. 

In Figure~\ref{fig:SpectralEstimates}, we plot the power spectrum and the eight estimates for a single realization $\vx \in \C^N$ of the Gaussian random process. Additionally, for $1000$ realizations $\vx_i \in \C^N$, $i = 1,\ldots,1000$ of the Gaussian random process, we compute a spectral estimate using each of the above eight methods. In Figure~\ref{fig:EmpiricalMLD}, we plot the empirical mean logarithmic deviation in dB, i.e., $$\dfrac{1}{1000}\sum_{i = 1}^{1000}\left|10\log_{10}\dfrac{\hatS[\vx_i](f)}{S(f)}\right|$$ for each of the eight methods. In Table~\ref{tab:SpectralEstimationResults}, we list the average time needed to precompute the DPSS tapers, the average time to compute the spectral estimate after the tapers are computed, and the average of the empirical mean logarithmic deviation in dB across the frequency spectrum.

We make the following observations:

\begin{itemize}
\item The periodogram and the single taper periodogram (methods 1 and 2) are too noisy to be useful spectral estimates.

\item Methods 3, 4, and 7 yield a noticeably noisier spectral estimate than methods 5, 6, and 8. This is due to the fact that methods 5, 6, and 8 use a larger bandwidth parameter and more tapers.

\item The spectral estimates obtained with methods 1, 3 and 5 suffer from spectral leakage, i.e., the error is large at frequencies $f$ where $S(f)$ is small, as can be seen in Figure~\ref{fig:EmpiricalMLD}. This is due to the fact that they use $K = \floor{2NW}-1$ tapers, and thus, include tapered periodograms $\hatS_k(f)$ for which $\lambda_k$ is not very close to $1$.

\item Methods 4 and 6 avoid using tapered periodograms $\hatS_k(f)$ for which $\lambda_k < 1-10^{-9}$ and methods 7 and 8 use these tapered periodograms but assign a low weight to them at frequencies where $S(f)$ is small. Hence, methods 4, 6, 7, and 8 are able to mitigate the spectral leakage phenomenon.

\item Methods 5 and 6 are slightly faster than methods 3 and 4 due to the fact that our approximate multitaper spectral estimate only needs to compute $\#\{k : \eps < \lambda_k < 1-\eps\} = 36$ tapers and $36$ tapered periodograms. 

\item Method 7 takes noticeably longer than methods 3 and 4, and method 8 takes considerably longer than methods 5 and 6. This is because the iterative method for computing the adaptive weights requires many iterations to converge when the underlying power spectral density has a high dynamic range.

\item Methods 6 and 8 exhibit very similar performance. This is to be expected, as using a weighted average of $1047$ tapered periodograms is similar to using the unweighted average of the first $1031$ tapered periodograms. The empirical mean logarithmic deviation is larger at frequencies where $S(f)$ is rapidly varying and smaller at frequencies where $S(f)$ is slowly varying. This is to be expected as the local bias (caused due to the smoothing effect of the tapers) dominates the variance at these frequencies.   
\end{itemize}

\begin{figure}%
   \centering
  \includegraphics[width = 0.49\textwidth]{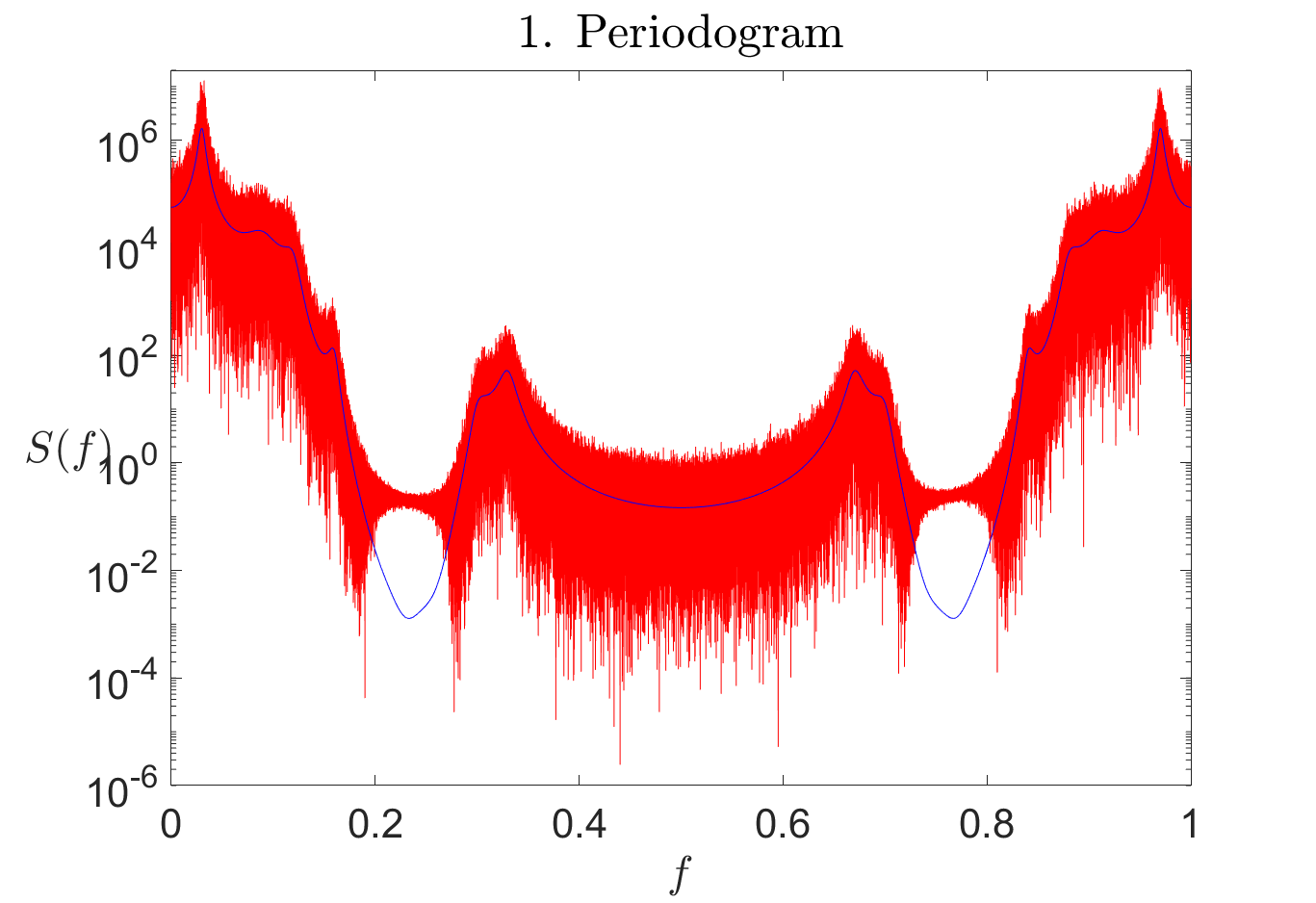} \includegraphics[width = 0.49\textwidth]{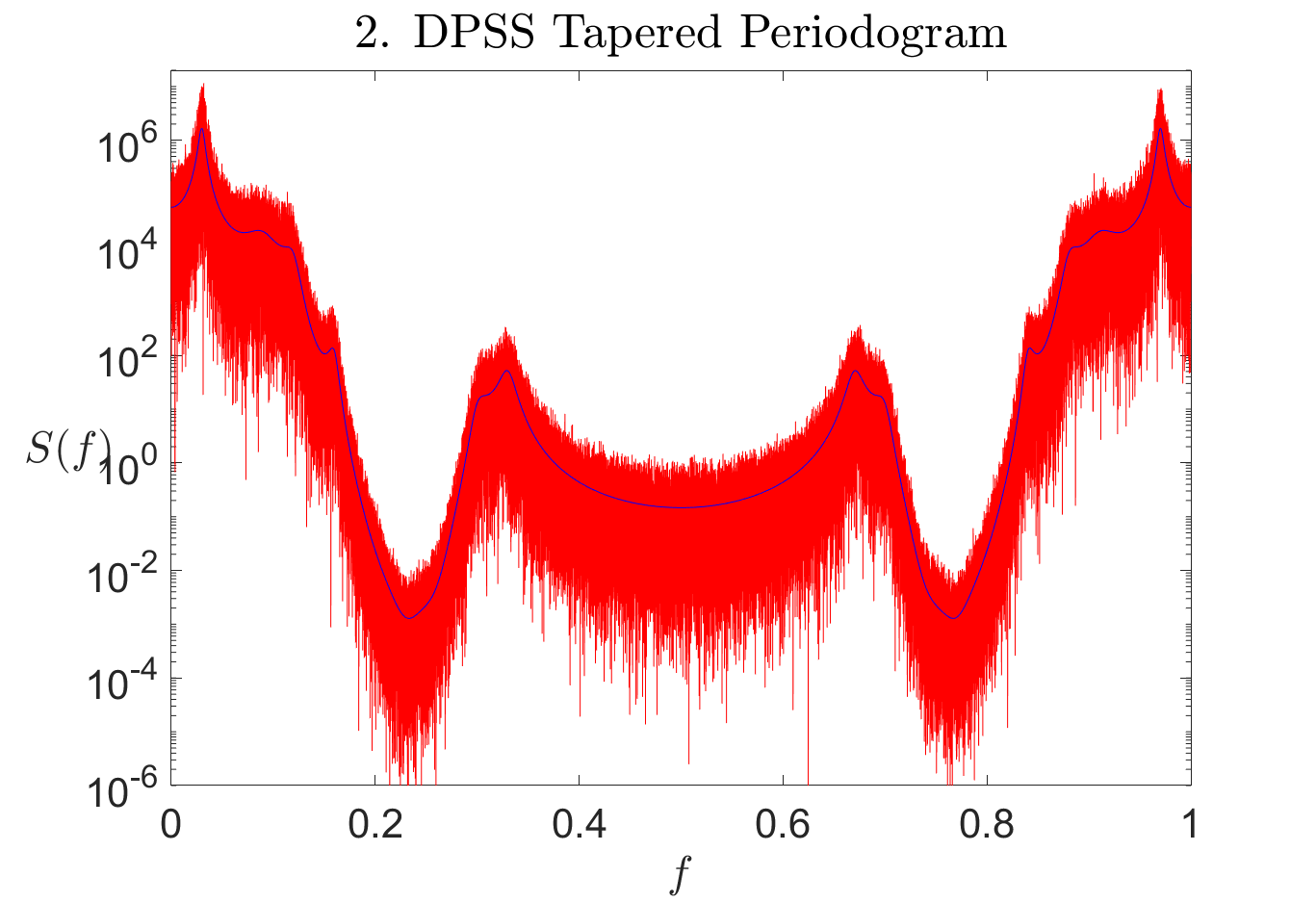}
  
  \includegraphics[width = 0.49\textwidth]{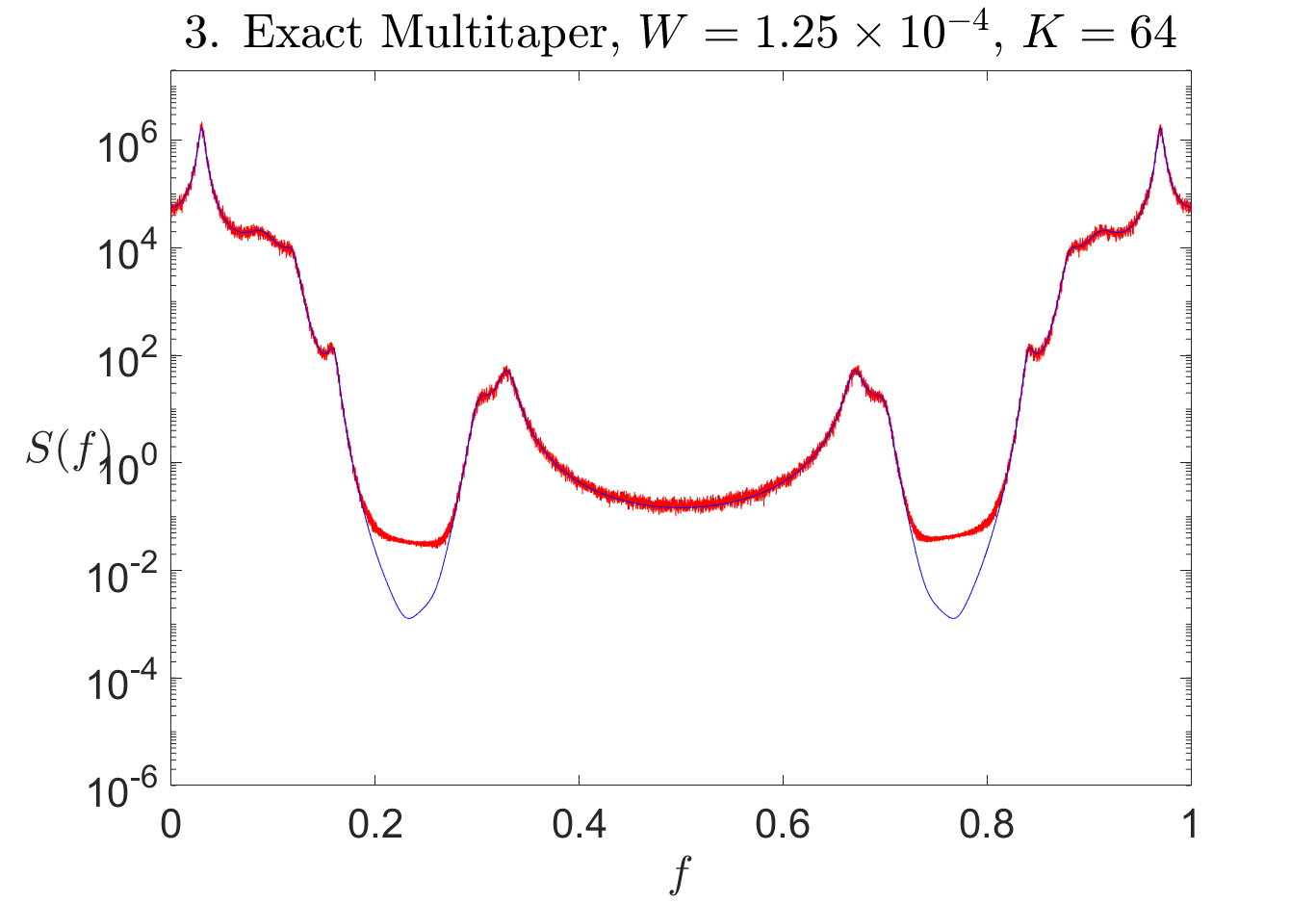} \includegraphics[width = 0.49\textwidth]{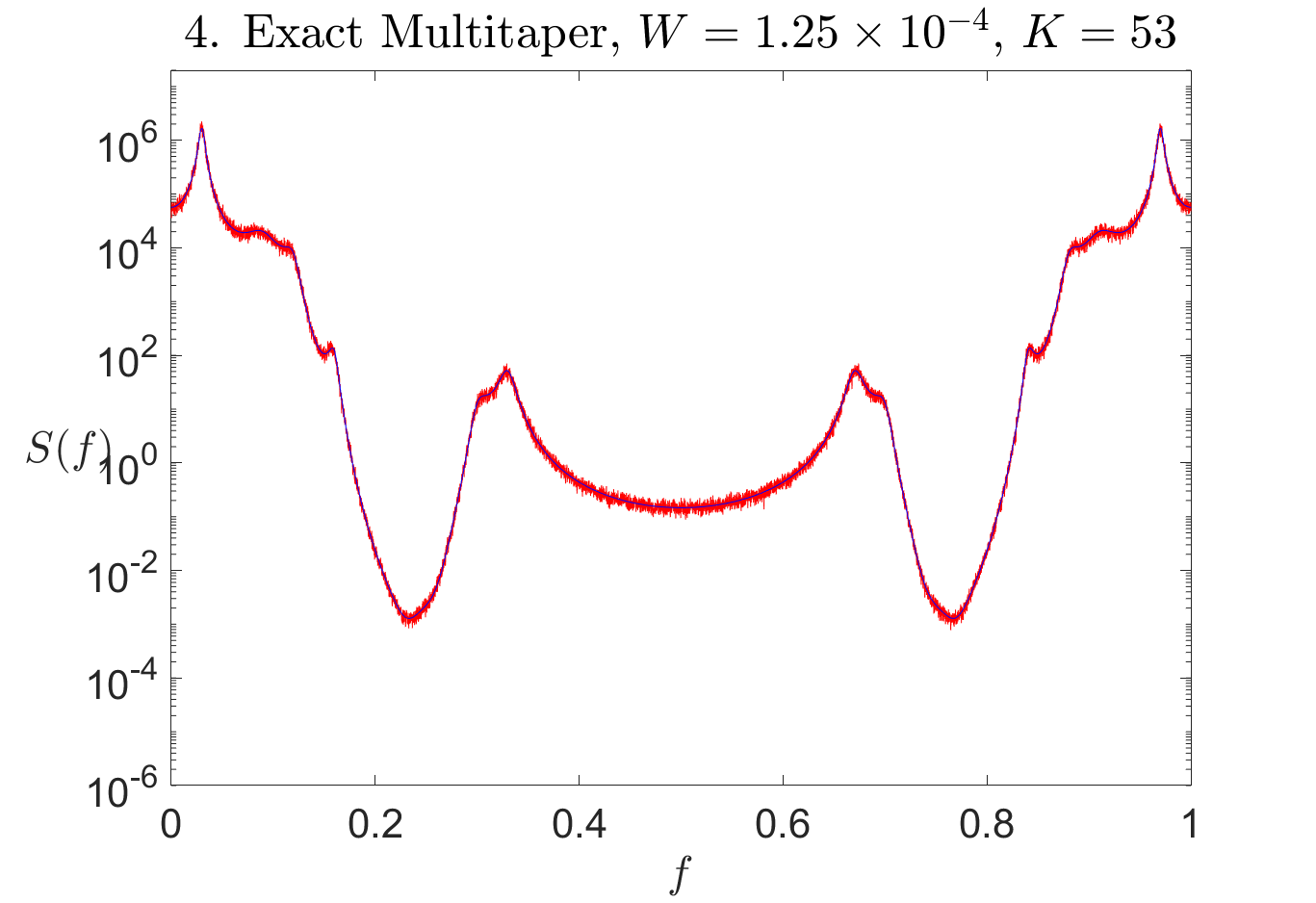}
  
  \includegraphics[width = 0.49\textwidth]{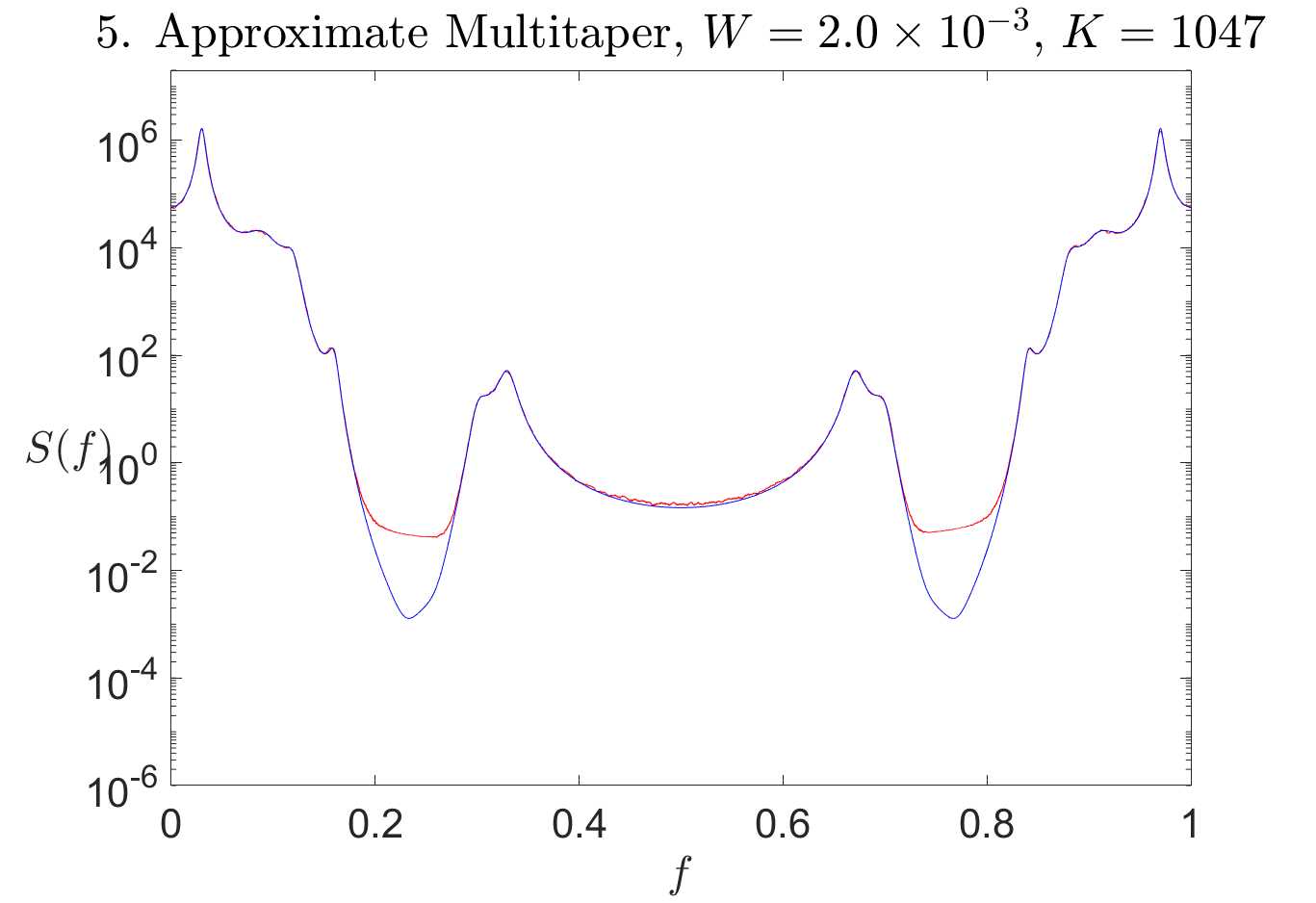} \includegraphics[width = 0.49\textwidth]{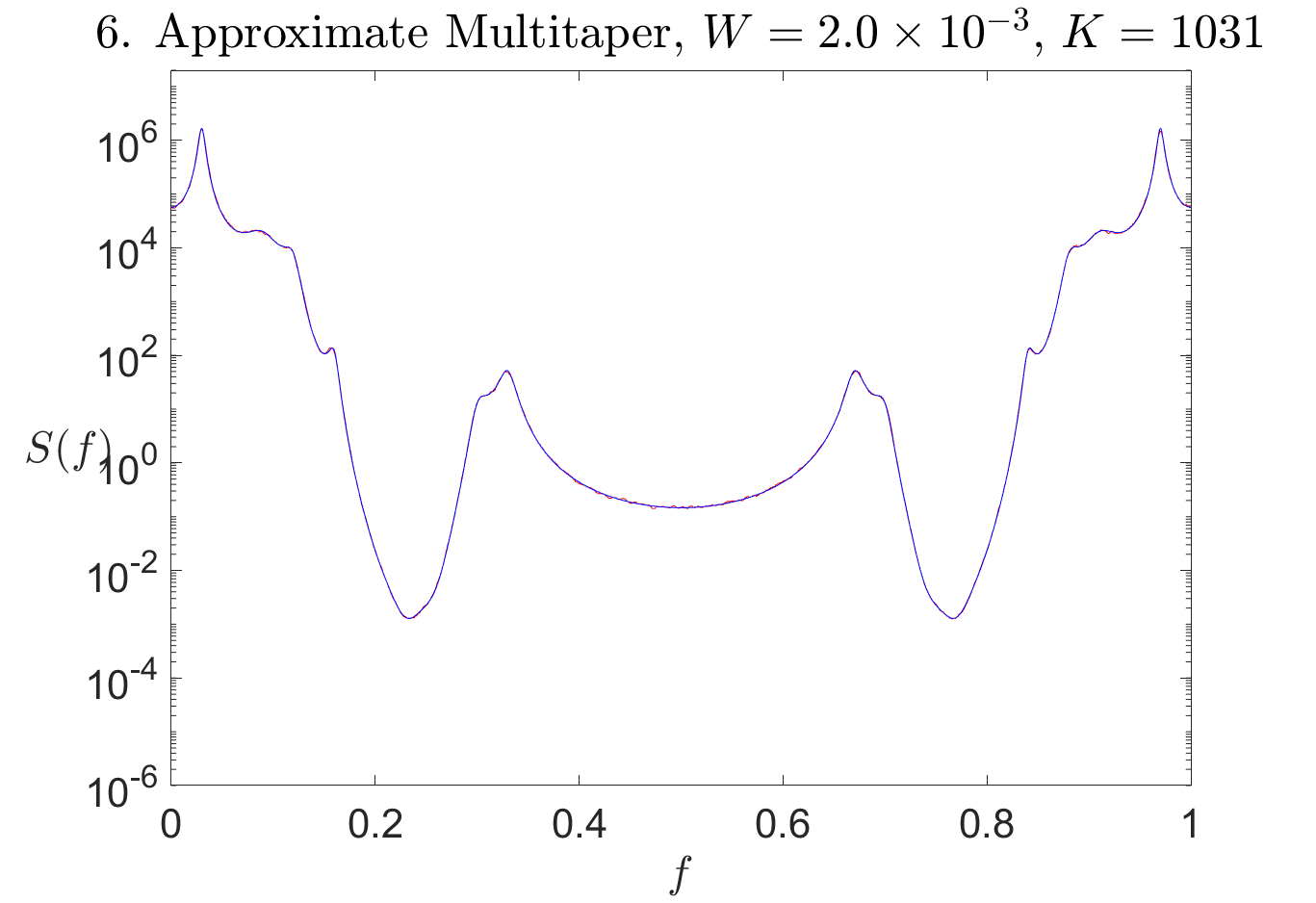}
    
  \includegraphics[width = 0.49\textwidth]{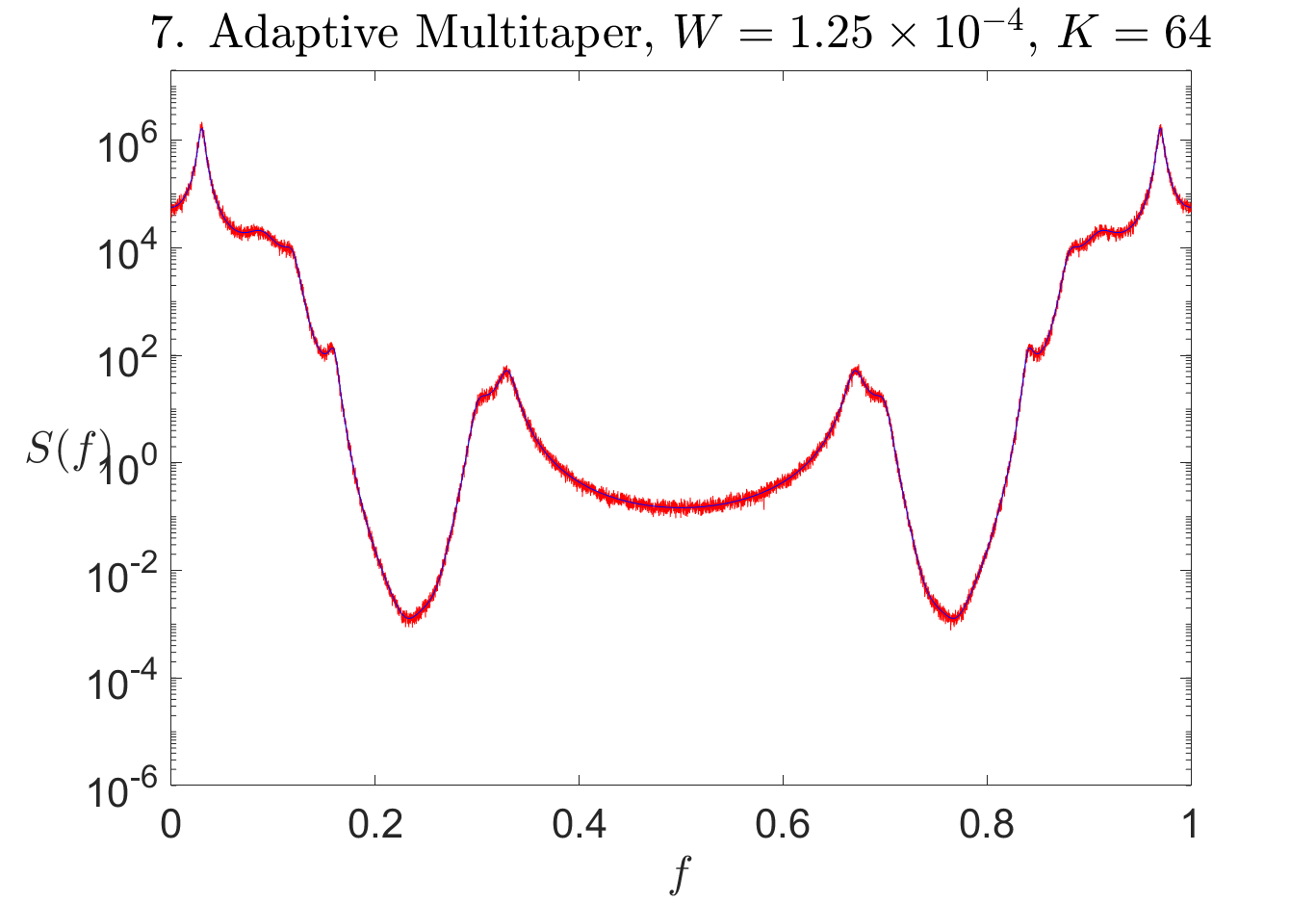} \includegraphics[width = 0.49\textwidth]{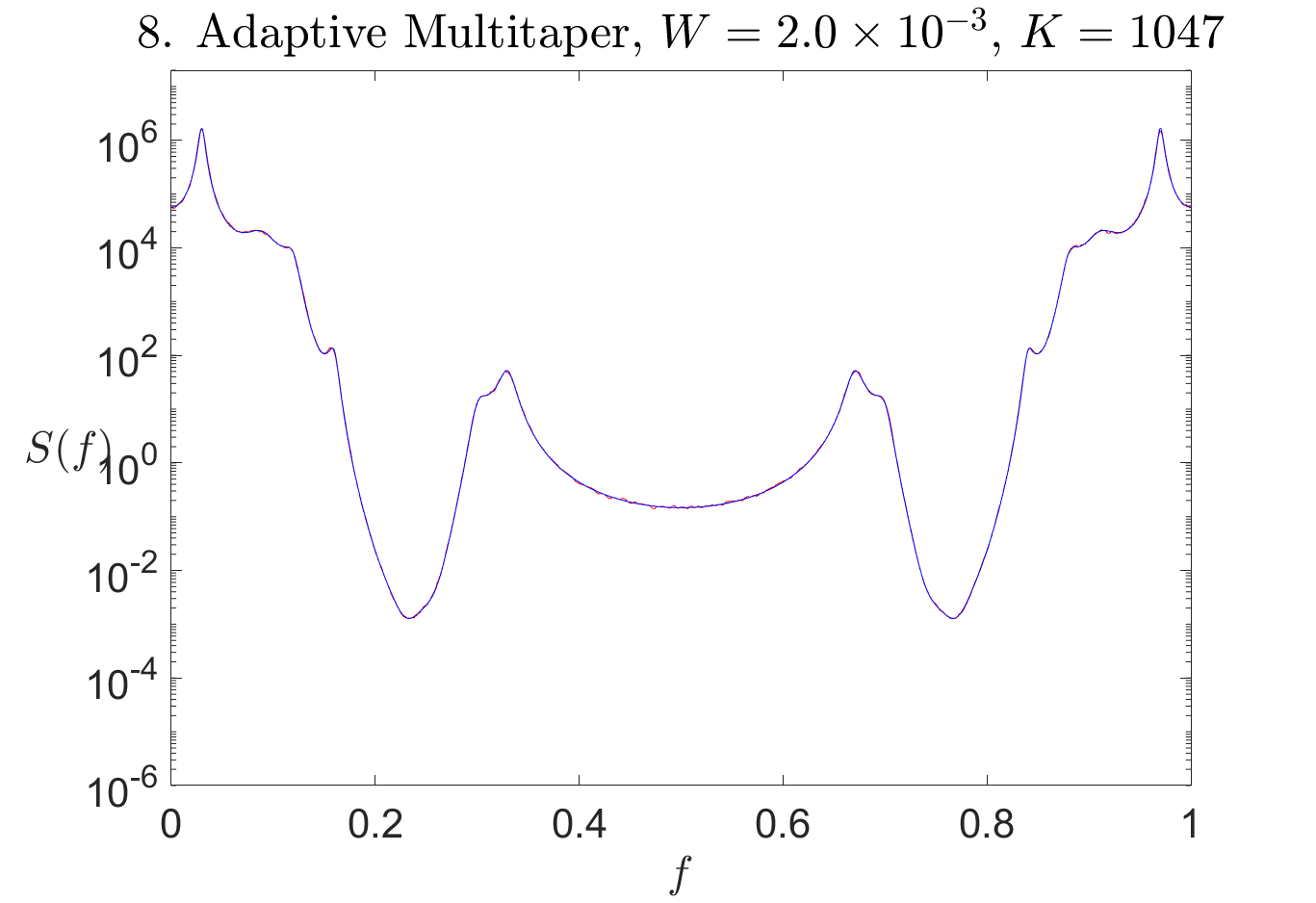}

   \caption{Plots of the true spectrum (blue) and the spectral estimates (red) using each of the eight methods.}
   \label{fig:SpectralEstimates}
\end{figure}

\begin{figure}%
   \centering
  \includegraphics[width = 0.49\textwidth]{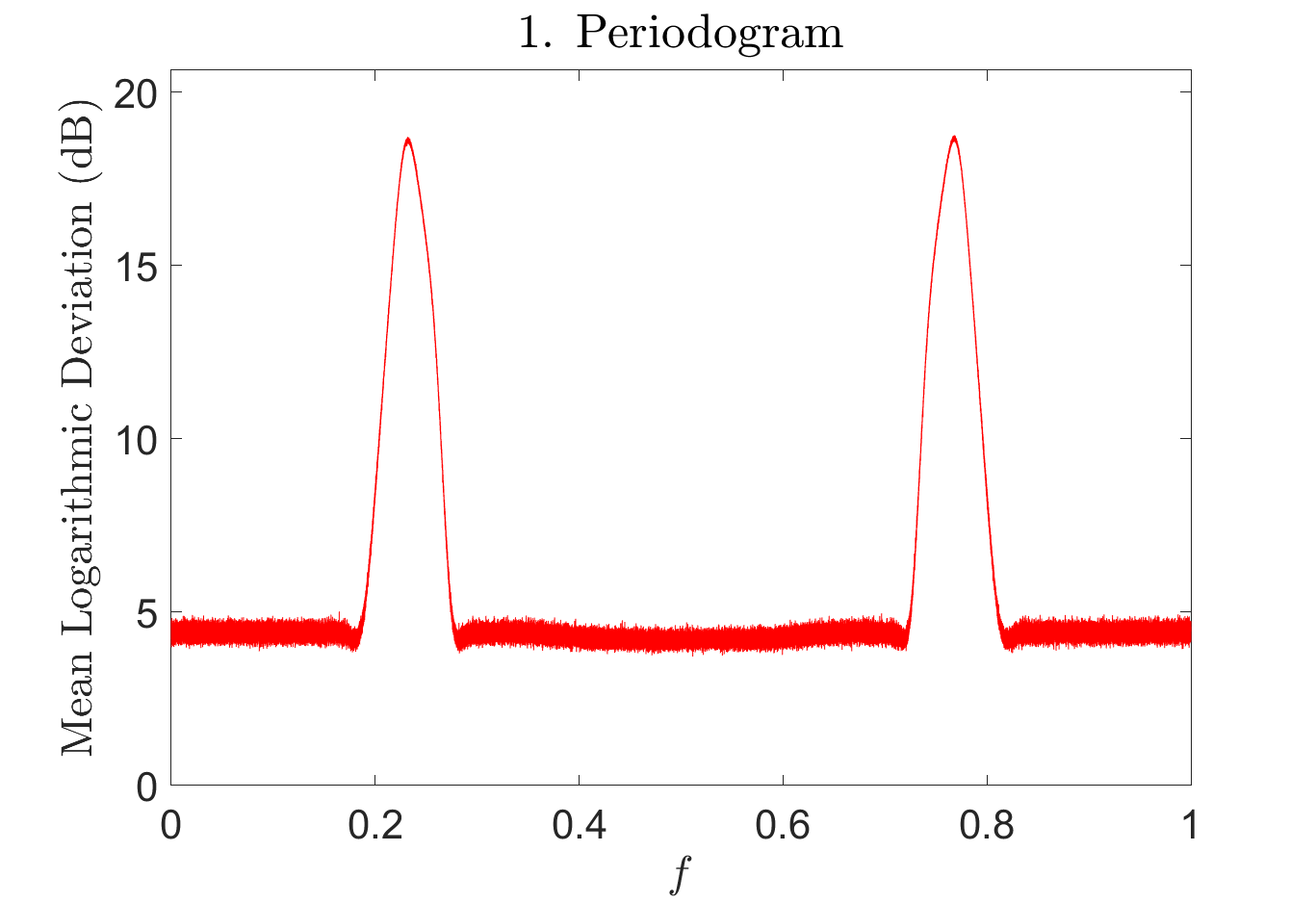} \includegraphics[width = 0.49\textwidth]{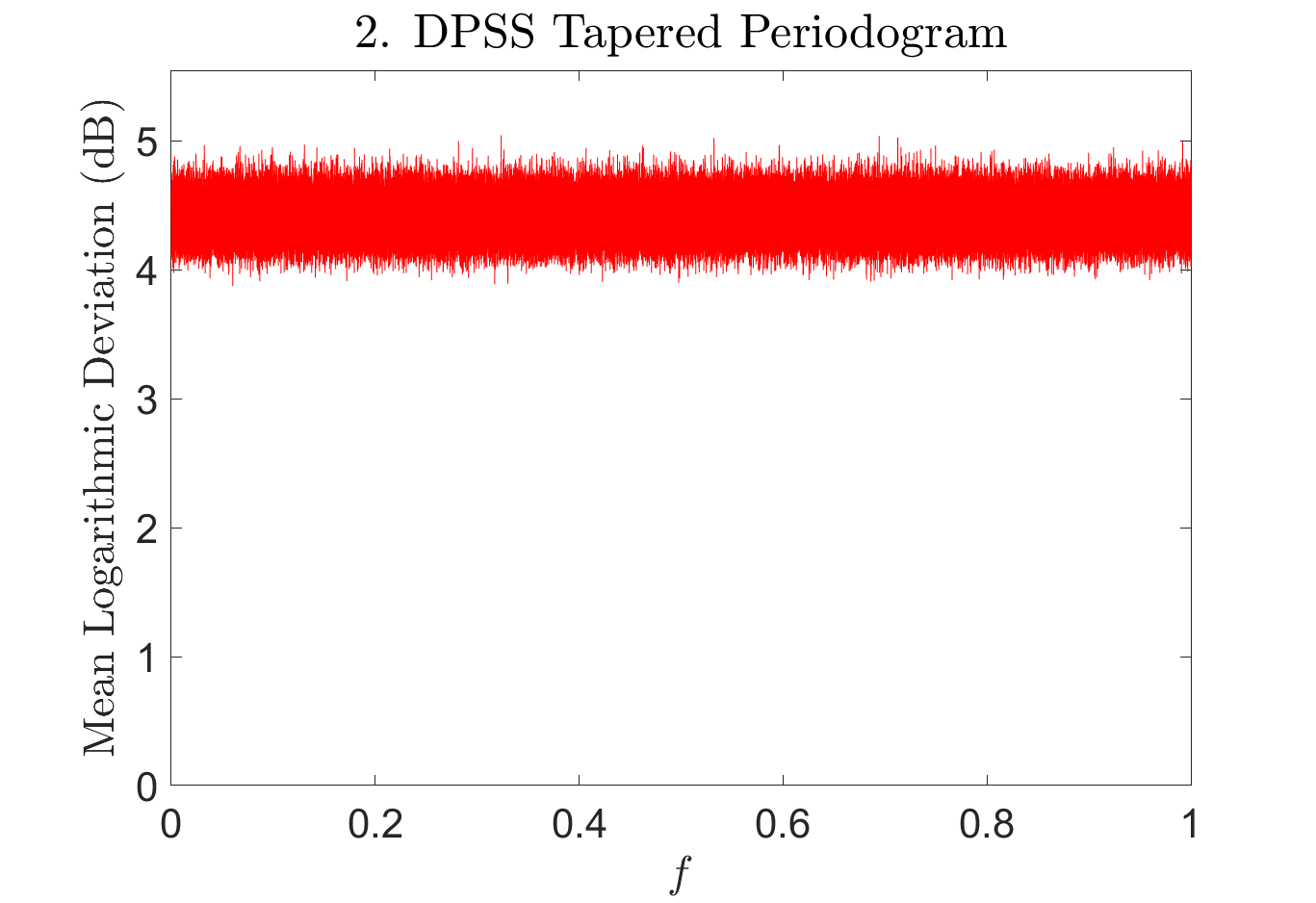}
  
  \includegraphics[width = 0.49\textwidth]{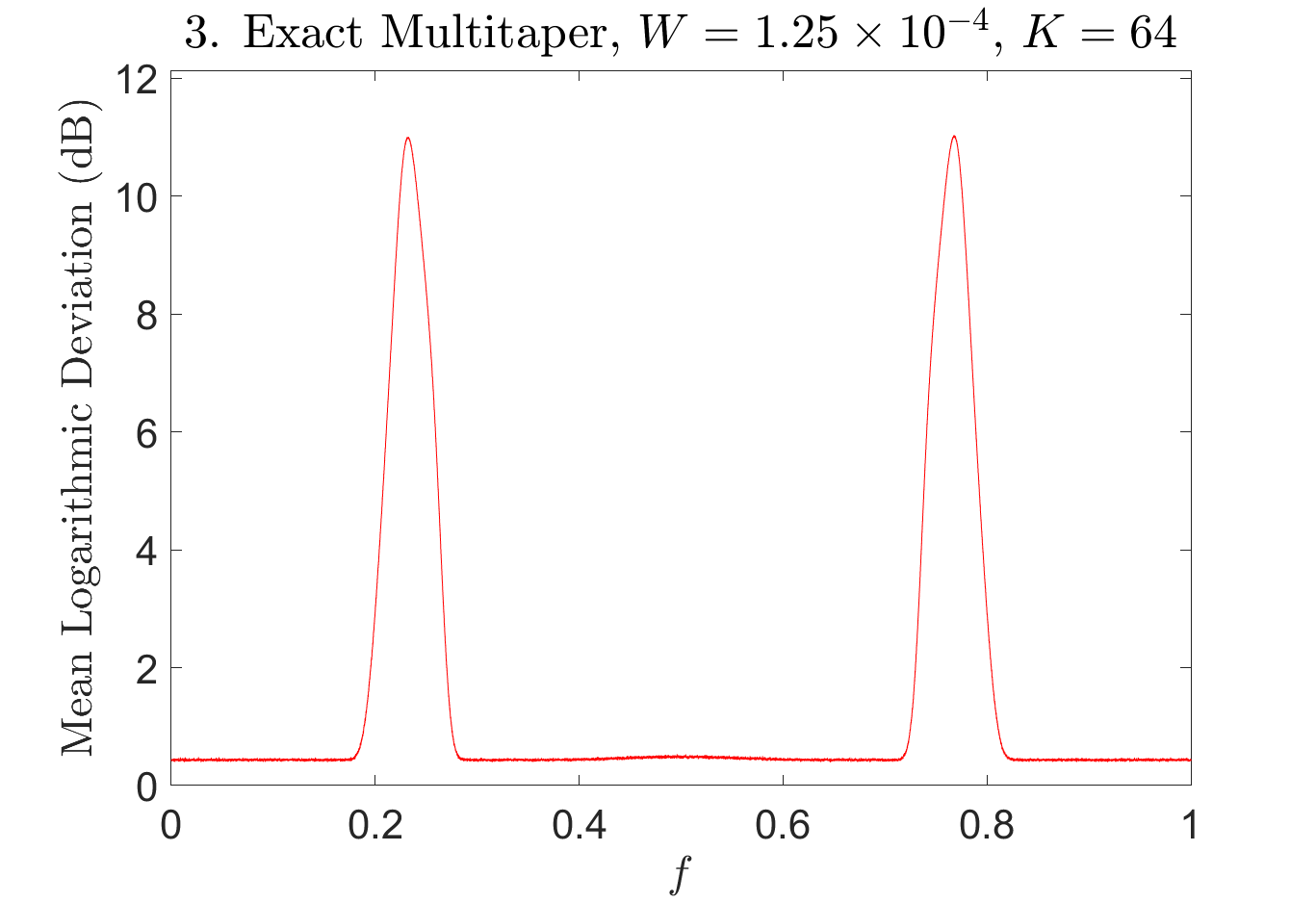} \includegraphics[width = 0.49\textwidth]{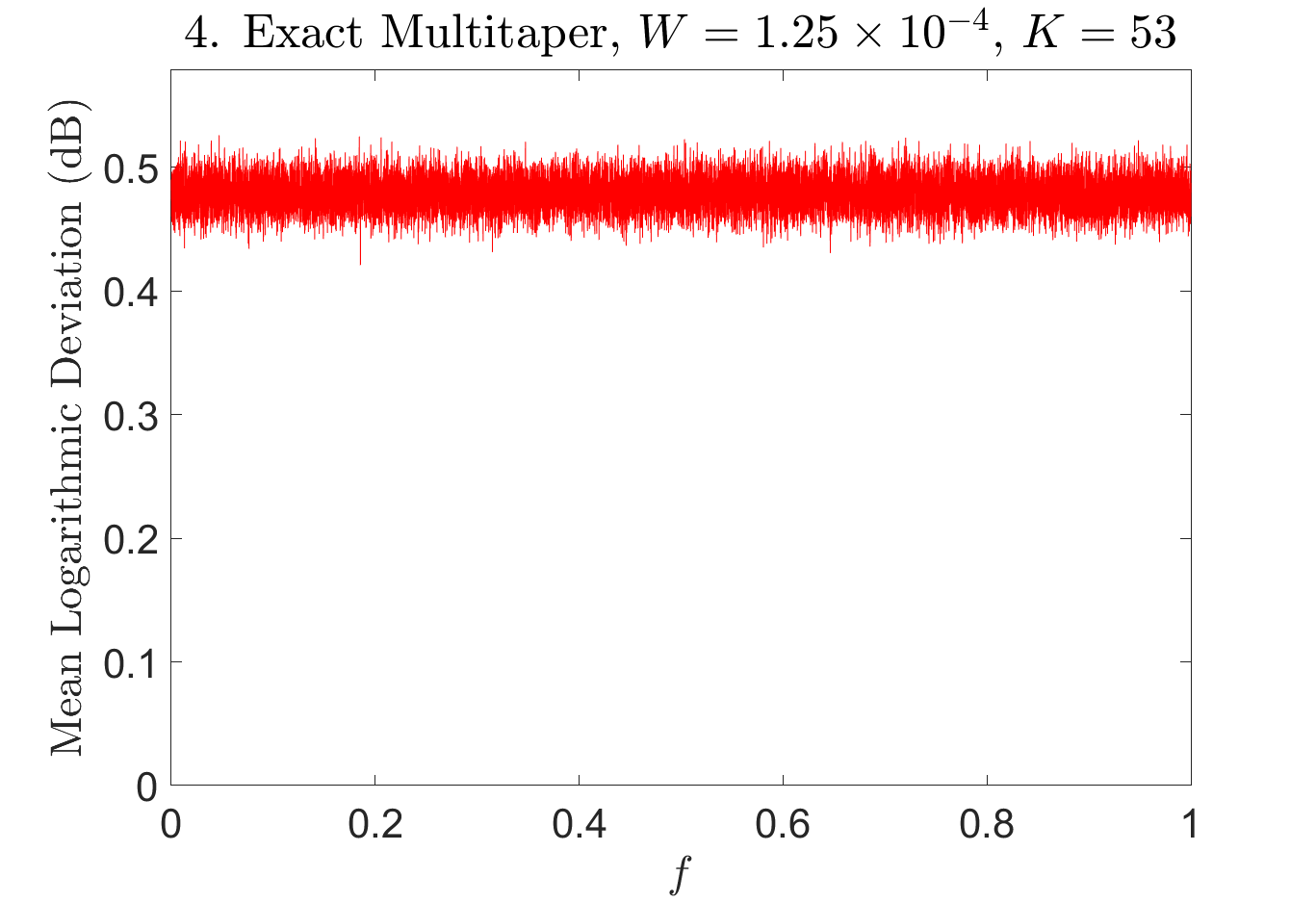}
  
  \includegraphics[width = 0.49\textwidth]{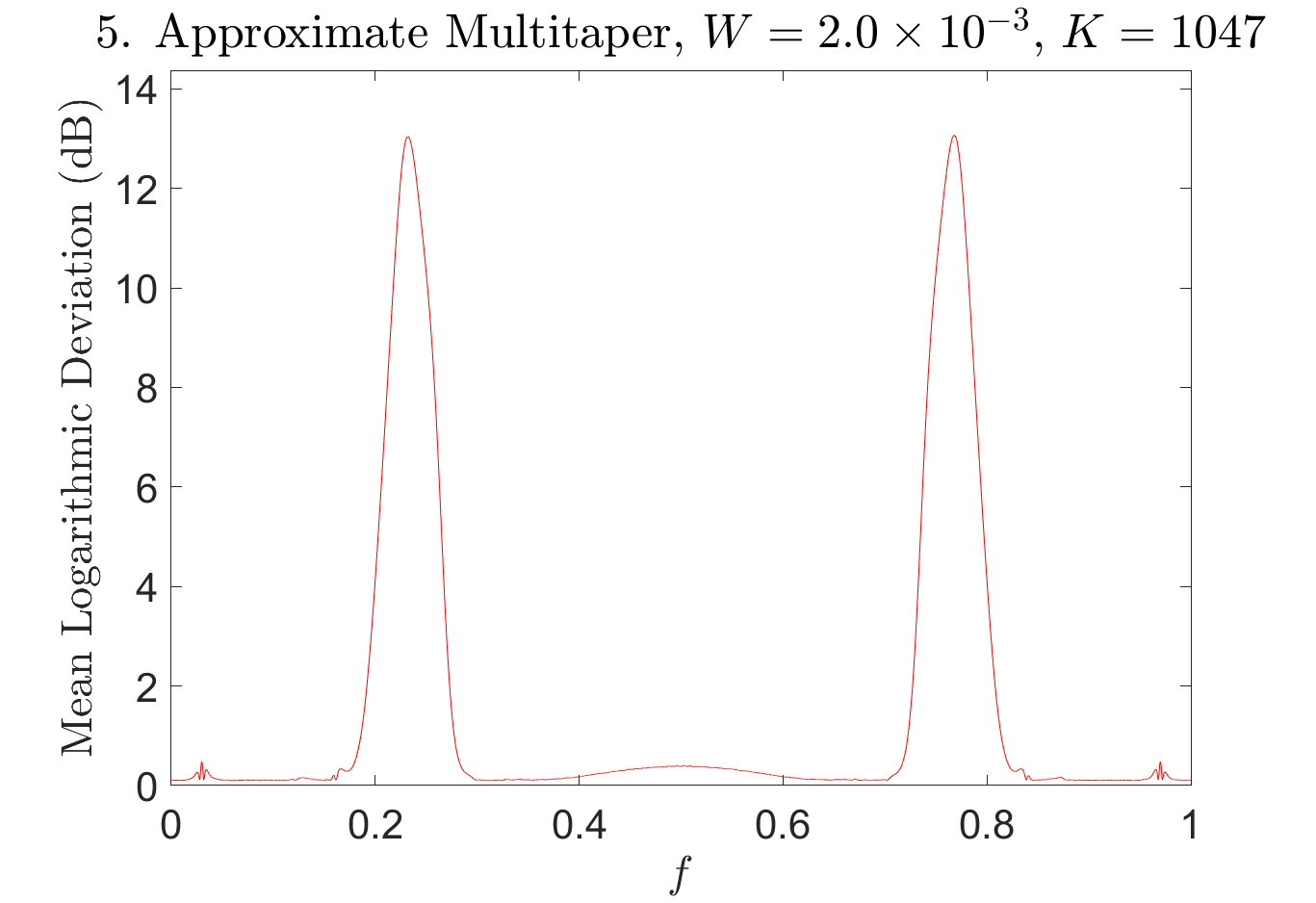} \includegraphics[width = 0.49\textwidth]{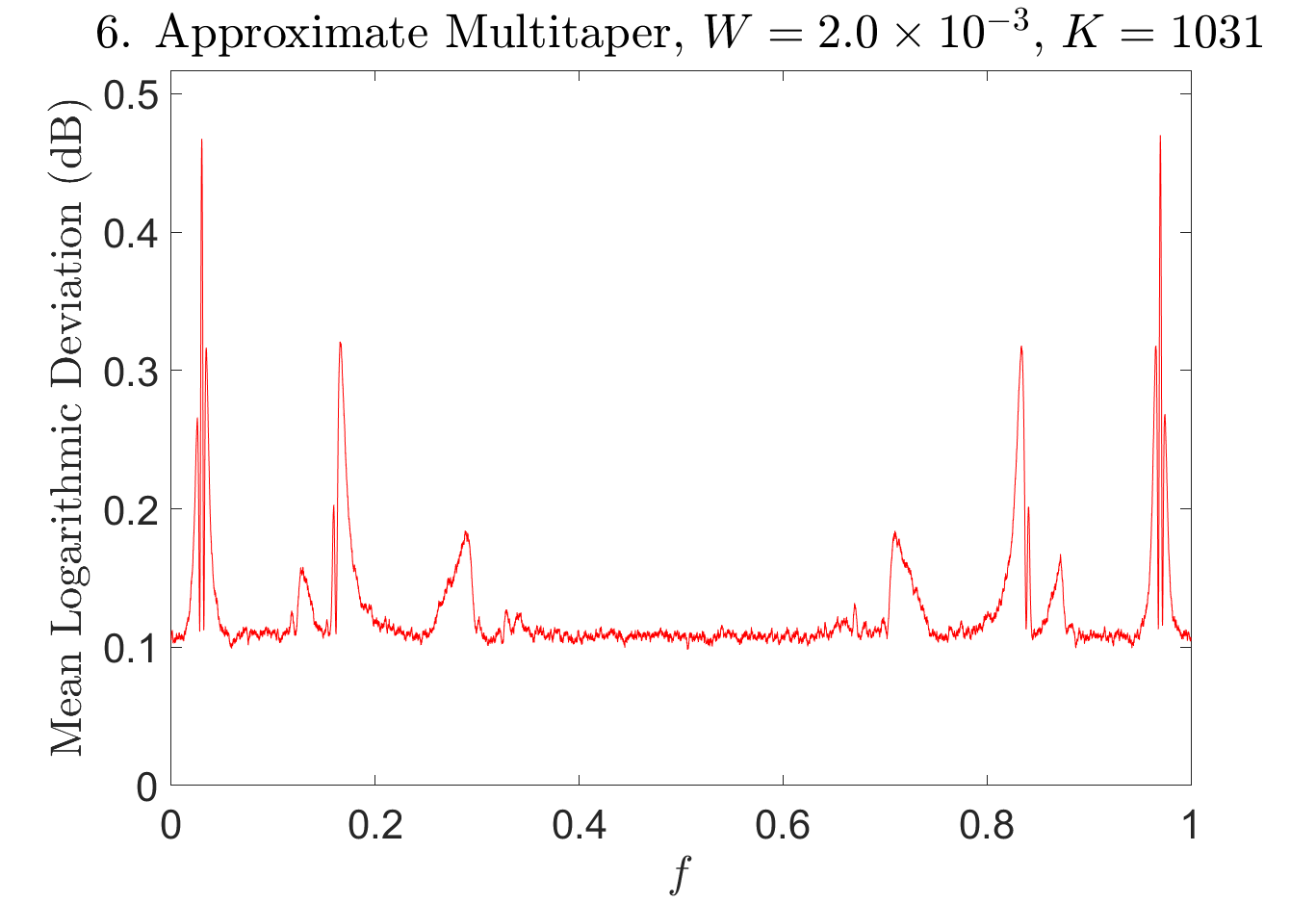}
    
  \includegraphics[width = 0.49\textwidth]{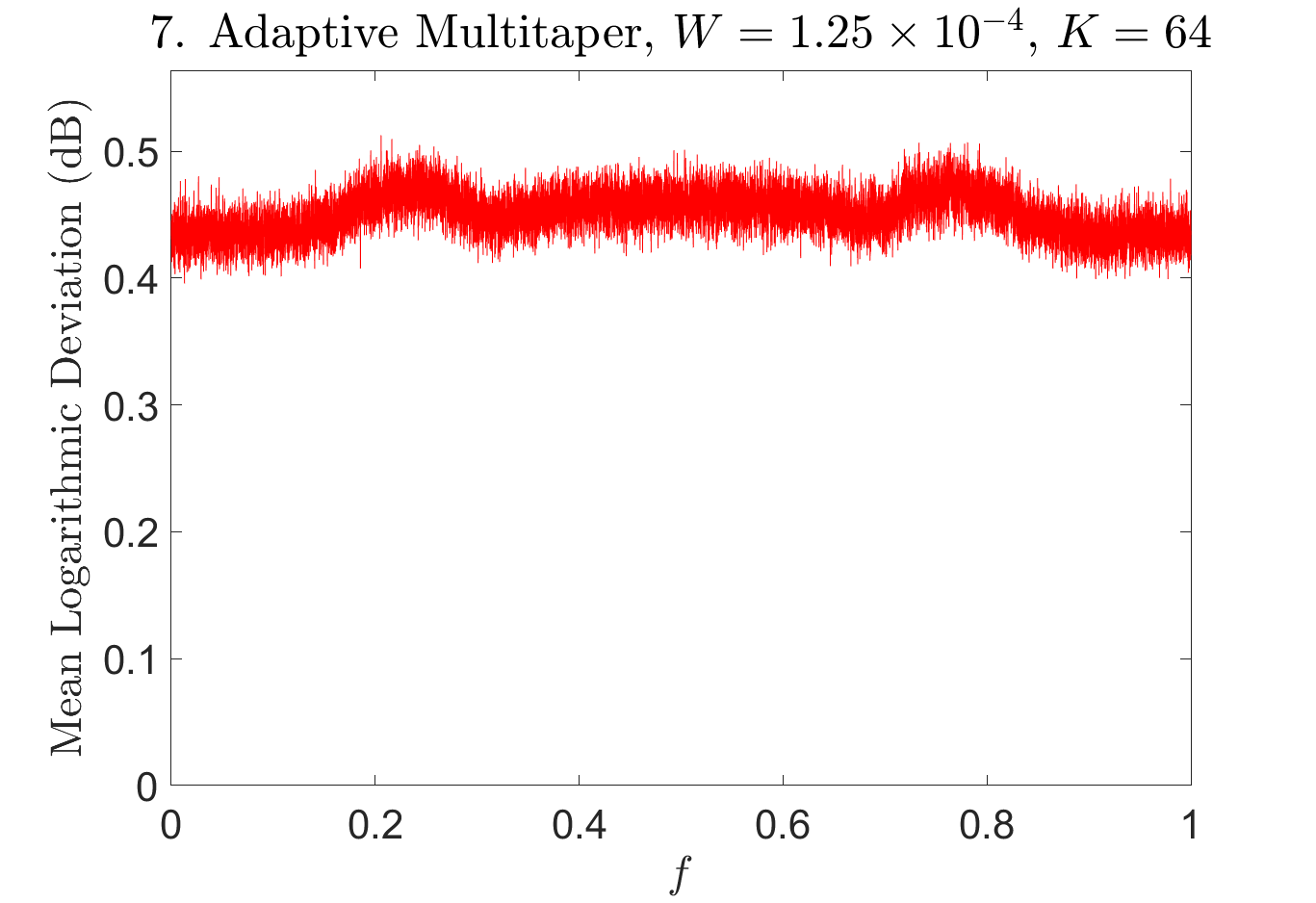} \includegraphics[width = 0.49\textwidth]{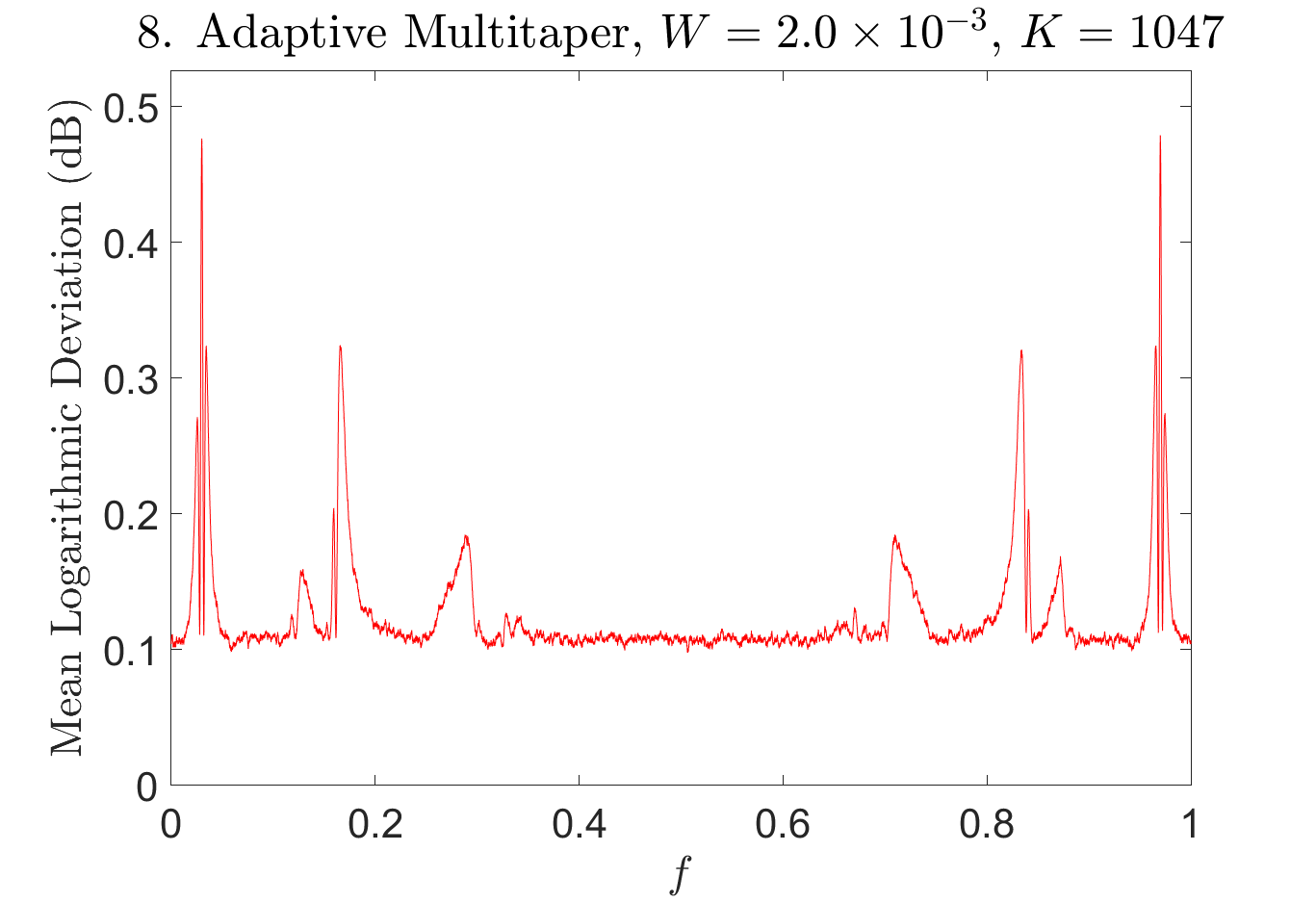}

   \caption{Plots of the empirical mean logarithmic deviation using each of the eight methods.}
   \label{fig:EmpiricalMLD}
\end{figure}
   
\begin{table}
\begin{center}
\begin{tabular}{|c|r|r|r|} \hline
Method & Avg MLD (dB) & Avg Precomputation Time (sec) & Avg Run Time (sec) \\ \hline
1 & 5.83030 &     N/A &  0.0111 \\ \hline
2 & 4.41177 &  0.1144 &  0.0111 \\ \hline
3 & 1.48562 &  4.3893 &  0.3771 \\ \hline
4 & 0.47836 &  3.6274 &  0.3093 \\ \hline
5 & 1.56164 &  3.3303 &  0.2561 \\ \hline
6 & 0.12534 &  3.3594 &  0.2533 \\ \hline
7 & 0.45080 &  4.3721 &  1.4566 \\ \hline
8 & 0.12532 & 78.9168 & 17.2709 \\ \hline
\end{tabular}
\end{center}
\caption{Table of mean logarithmic deviations (averaged across entire frequency spectrum), precomputation times, and computation times for each of the eight spectral estimation methods.}
\label{tab:SpectralEstimationResults}
\end{table}

From this, we can draw several conclusions. First, by slightly trimming the number of tapers from $K = \floor{2NW}-1$ to $2NW-O(\log(NW))$, one can significantly mitigate the amount of spectral leakage in the spectral estimate. Second, using a larger bandwidth parameter and averaging over more tapered periodograms will result in a less noisy spectral estimate. Third, our fast approximate multitaper spectral estimate can yield a more accurate estimate in the same amount of time as an exact multitaper spectral estimate because our fast approximation needs to compute significantly fewer tapers and tapered periodograms. Fourth, a multitaper spectral estimate with $\floor{2NW}-1$ tapers and adaptive weights can yield a slightly lower error than a multitaper spectral estimate with $2NW-O(\log(NW))$ tapers and fixed weights, but as $2NW$ increases, this effect becomes minimal and not worth the increased computational cost. 

\subsection{Fast multitaper spectral estimation}
\label{sec:SpeedTest}
Finally, we demonstrate that the runtime for computing the approximate multitaper spectral estimate $\tildeSmt_K(f)$ scales roughly linearly with the number of samples. We vary the signal length $N$ over several values between $2^{10}$ and $2^{20}$. For each value of $N$, we set the bandwidth parameter to be $W = 0.08 \cdot N^{-1/5}$ as this is similar to what is asymptotically optimal for a twice differentiable spectrum \cite{Abreu17}. We then choose a number of tapers $K$ such that $\lambda_{K-1} \ge 1-10^{-3} > \lambda_K$, and then compute the approximate multitaper spectral estimate $\tildeSmt_K(f)$ at $f \in [N]/N$ for each of the tolerances $\eps = 10^{-4}$, $10^{-8}$, and $10^{-12}$. If $N \le 2^{17}$, we also compute the exact multitaper spectral estimate $\hatSmt_K(f)$ at $f \in [N]/N$. For $N > 2^{17}$, we did not evaluate the exact multitaper spectral estimate due to the excessive computational requirements. 

We split the work needed to produce the spectral estimate into a precomputation stage and a computation stage. The precomputation stage involves computing the Slepian tapers which are required for the spectral estimate. In applications where a multitaper spectral estimate needs to be computed for several signals (using the same parameters $N,W,K$), computing the tapers only needs to be done once. The exact multitaper spectral estimate $\hatSmt_K(f)$ requires computing $\vs_k$ for $k \in [K]$, while the approximate multitaper spectral estimate $\tildeSmt_K(f)$ requires computing $\vs_k$ for $k \in \setI_2 \cup \setI_3 = \{k : \eps < \lambda_k < 1-\eps\}$. The computation stage involves evaluating $\hatSmt_K(f)$ or $\tildeSmt_K(f)$ at $f \in [N]/N$ with the required tapers $\vs_k$ already computed. 

In Figures~\ref{fig:PrecomputationTime} and \ref{fig:ComputationTime}, we plot the precomputation and computation time respectively versus the signal length $N$ for the exact multitaper spectral estimate $\hatSmt_K(f)$ as well as the approximate multitaper spectral estimate $\tildeSmt_K(f)$ for $\eps = 10^{-4}$, $10^{-8}$, and $10^{-12}$. The times were averaged over $100$ trials of the procedure described above. The plots are on a log-log scale. We note that the precomputation and computation times for the approximate multitaper spectral estimate grow roughly linearly with $N$ while the precomputation and computation times for the exact multitaper spectral estimate grow significantly faster. Also, the precomputation and computation times for the approximate multitaper spectral estimate with the very small tolerance $\eps = 10^{-12}$ are only two to three times more than those for the larger tolerance $\eps = 10^{-4}$.  

\begin{figure}
\centering
\includegraphics[scale=0.40]{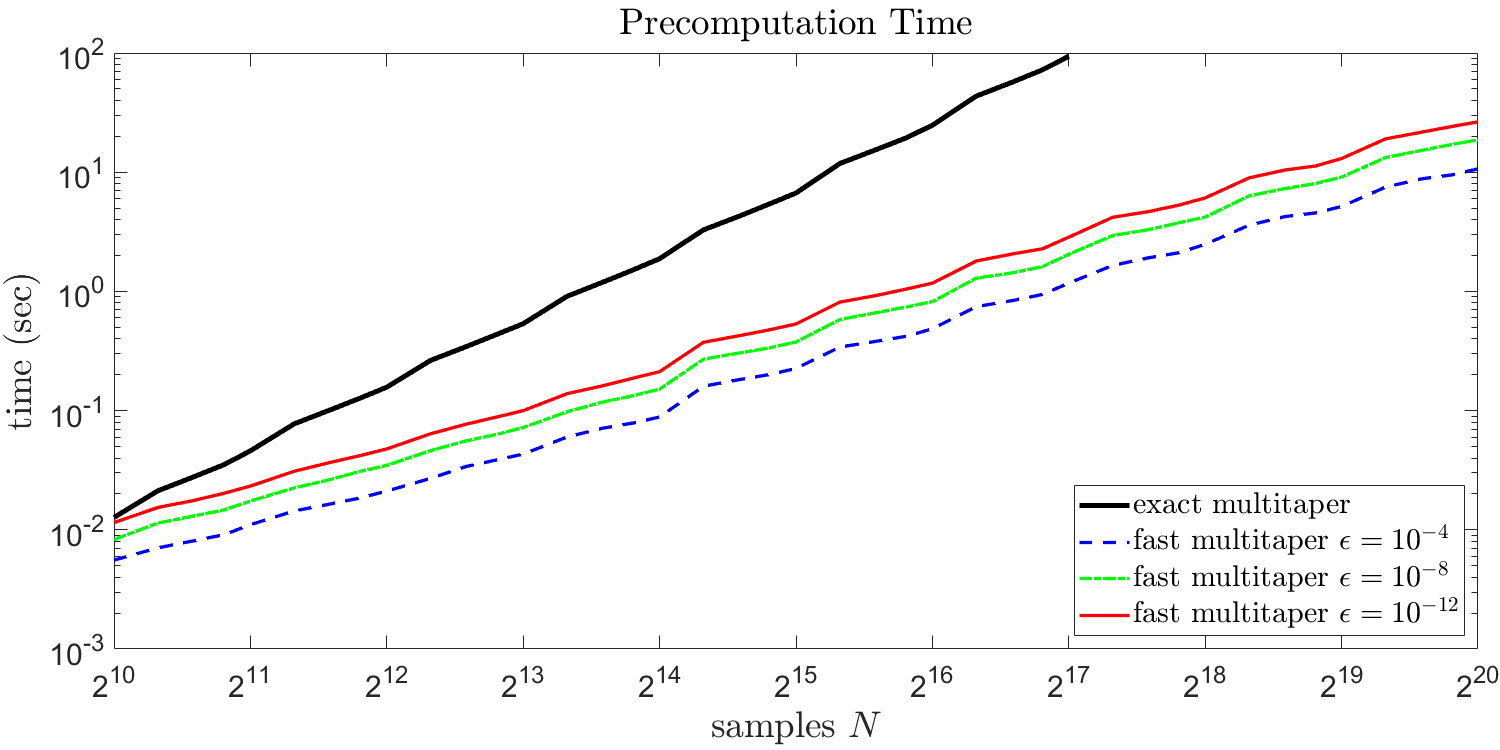}
\caption{Plot of the average precomputation time vs. signal length $N$ for the exact multitaper spectral estimate and our fast approximation for $\eps = 10^{-4}$, $10^{-8}$, and $10^{-12}$.}
\label{fig:PrecomputationTime}
\end{figure}

\begin{figure}
\centering
\includegraphics[scale=0.40]{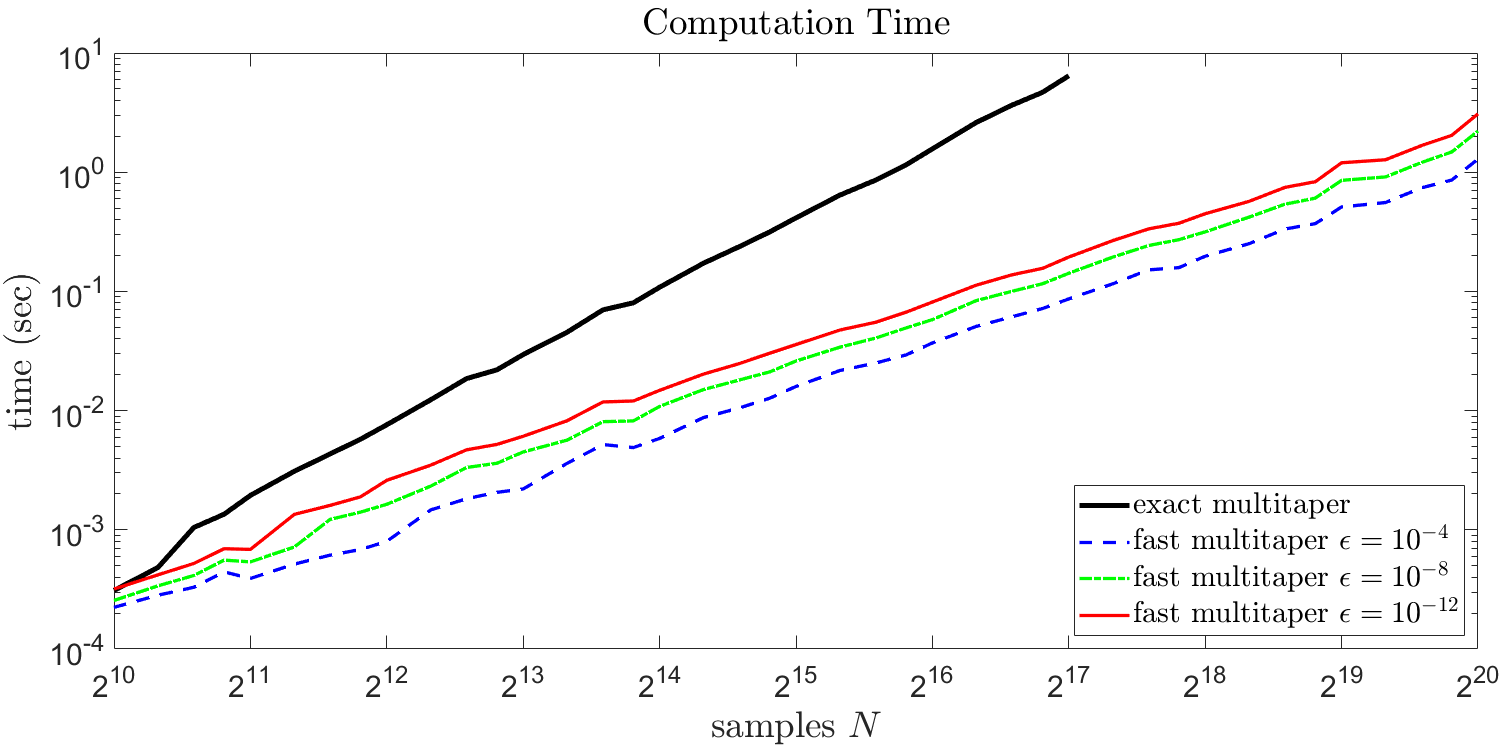}
\caption{Plot of the average computation time vs. signal length $N$ for the exact multitaper spectral estimate and our fast approximation for $\eps = 10^{-4}$, $10^{-8}$, and $10^{-12}$.}
\label{fig:ComputationTime}
\end{figure}

\section{Conclusions}
\label{sec:Conclusions}
Thomson's multitaper method is a widely-used tool for spectral estimation. In this paper, we have presented a linear algebra based perspective of Thomson's multitaper method. Specifically, the multitaper spectral estimate $\hatSmt_K(f)$ can be viewed as the $2$-norm energy of the projection of the signal $\vx \in \C^N$ onto a $K$-dimensional subspace $\mathcal{S}_f \subset \C^N$. The subspace $\mathcal{S}_f$ is chosen to be the optimal $K$-dimensional subspace for representing sinusoids with frequencies between $f-W$ and $f+W$. 

We have also provided non-asymptotic bounds on the bias, variance, covariance, and probability tails of the multitaper spectral estimate. In particular, if the multitaper spectral estimate uses $K = 2NW-O(\log(NW))$ tapers, the effects of spectral leakage are mitigated, and our non-asymptotic bounds are very similar to existing asymptotic results. However, when the traditional choice of $K = \floor{2NW}-1$ tapers is used, the non-asymptotic bounds are significantly worse, and our simulations show the multitaper spectral estimate suffers severely from spectral leakage. 

Finally, we have derived an $\eps$-approximation to the multitaper spectral estimate which can be evaluated at a grid of $L$ evenly spaced frequencies in $O(L\log L \log(NW)\log\tfrac{1}{\eps})$ operations. Since evaluating the exact multitaper spectral estimate at a grid of $L$ evenly spaced frequencies requires $O(KL\log L)$ operations, our $\eps$-approximation is faster when $K \gtrsim \log(NW)\log\tfrac{1}{\eps}$ tapers are required. In problems involving a large number of samples, minimizing the mean-squared error of the multitaper spectral estimate requires using a large number of tapers, and thus, our $\eps$-approximation can drastically reduce the required computation.






\bibliographystyle{unsrt}
\bibliography{citationsThomsonsMethodRevisited}

\appendix
\section{Proof of Results in Section~\ref{sec:ParameterSelection}}
\label{sec:ParameterSelectionProofs}

\subsection{Norms of Gaussian random variables}
In this subsection, we develop results on the $2$-norm of linear combinations of Gaussian random variables, which will be critical to our proofs of the theorems in Section~\ref{sec:ParameterSelection}.

First, we state a result from \cite{JanssenStoica88} regarding the product of four jointly complex Gaussian vectors.
\begin{lemma} \cite{JanssenStoica88}
\label{lem:FourGaussians}
Suppose $\va, \vb, \vc, \vd \in \C^K$ have a joint complex Gaussian distribution. Then $$\E[\va^*\vb\vc^*\vd] = \E[\va^*\vb]\E[\vc^*\vd] + \E[\vc^* \otimes \va^*]\E[\vd \otimes \vb] + \E[\va^*\E[\vb\vc^*]\vd] + 2\E[\va^*]\E[\vb]\E[\vc^*]\E[\vd],$$ where $\otimes$ denotes the Kronecker product.
\end{lemma}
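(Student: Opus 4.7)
The plan is to reduce the vector identity to a pointwise scalar identity over the index pair $(i,j)$, and then prove the scalar version by applying Isserlis's theorem (the Gaussian Wick expansion) after decomposing each variable into its mean and its zero-mean fluctuation. First I would write
\[
\E[\va^*\vb\,\vc^*\vd] \;=\; \sum_{i,j}\E\bigl[\bar a_i\, b_i\, \bar c_j\, d_j\bigr],
\]
and similarly expand each of the three sum-style terms on the right-hand side elementwise, observing that $\E[\va^*\vb]\E[\vc^*\vd] = \sum_{i,j}\E[\bar a_i b_i]\E[\bar c_j d_j]$, $\E[\vc^*\otimes\va^*]\E[\vd\otimes\vb] = \sum_{i,j}\E[\bar c_j \bar a_i]\E[d_j b_i]$, and $\E[\va^*\E[\vb\vc^*]\vd] = \sum_{i,j}\E[\bar a_i d_j]\E[b_i \bar c_j]$. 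In this form it suffices to prove a single scalar identity for four jointly complex Gaussian scalars $X_1,X_2,X_3,X_4$ (playing the roles of $\bar a_i,b_i,\bar c_j,d_j$) that relates $\E[X_1X_2X_3X_4]$ to the three pair products $\E[X_iX_j]\E[X_kX_l]$ over the three perfect matchings of $\{1,2,3,4\}$, plus a scalar multiple of $\prod_k \E[X_k]$.

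Second, I would split each $X_k = \mu_k + Y_k$ with $\mu_k = \E[X_k]$ and $Y_k$ zero-mean jointly complex Gaussian, and expand $\prod_k(\mu_k+Y_k)$ into its $16$ monomials indexed by subsets of $\{1,2,3,4\}$. Monomials with an odd number of fluctuation factors vanish in expectation by Gaussianity, leaving the constant term $\mu_1\mu_2\mu_3\mu_4$, the six ``two-$Y$'' mean--covariance cross terms of the form $\mu_i\mu_j\,\E[Y_kY_l]$, and a single pure fourth-moment $\E[Y_1Y_2Y_3Y_4]$. The key input at this step is Isserlis's theorem for jointly complex Gaussians (which follows from the real Isserlis theorem applied to the vector of real and imaginary parts), giving
\[
\E[Y_1Y_2Y_3Y_4] \;=\; \E[Y_1Y_2]\E[Y_3Y_4] + \E[Y_1Y_3]\E[Y_2Y_4] + \E[Y_1Y_4]\E[Y_2Y_3].
\]

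Third, I would re-express the three pair-product terms on the right-hand side in the same $(\mu_k,\E[Y_iY_j])$ variables using $\E[X_iX_j] = \mu_i\mu_j + \E[Y_iY_j]$. Each pair product contributes one copy of $\mu_1\mu_2\mu_3\mu_4$, two mean--covariance cross terms, and one Wick pairing; summing the three perfect matchings therefore yields three copies of $\mu_1\mu_2\mu_3\mu_4$, all six cross terms, and the full Isserlis expansion. Comparing with the direct expansion from Step~2 pins down the coefficient multiplying $\mu_1\mu_2\mu_3\mu_4$, and summing the resulting scalar identity over $i,j$ reconstructs the four vector-valued terms in the statement. I expect the main obstacle to be precisely this sign/coefficient bookkeeping around the pure-mean term, since each of the three pair-expectation terms independently produces a $\mu_1\mu_2\mu_3\mu_4$ contribution and one must carefully account for how these interact with the single $\mu_1\mu_2\mu_3\mu_4$ from the direct expansion to arrive at the stated prefactor; once that is settled, the rest of the argument is a mechanical regrouping.
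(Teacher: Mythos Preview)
The paper does not prove this lemma; it merely quotes it from \cite{JanssenStoica88} and immediately specializes to the zero-mean case in Lemma~\ref{lem:ExpCovNormGaussians}. So there is no ``paper's own proof'' to compare against, and your Isserlis-based argument is the natural way to establish the identity from first principles.

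Your approach is sound, and in fact carrying it out exposes an apparent sign error in the statement as printed. In your Step~2 the direct expansion gives one copy of $\mu_1\mu_2\mu_3\mu_4$, the six cross terms $\mu_i\mu_j\E[Y_kY_l]$, and the Wick sum. In your Step~3 the three pair-expectation terms together produce \emph{three} copies of $\mu_1\mu_2\mu_3\mu_4$, all six cross terms, and the same Wick sum. Matching the two sides forces the residual mean term to carry coefficient $-2$, not $+2$. A one-line sanity check with $X_1=X_2=X_3=X_4=X$ real Gaussian, mean $\mu$, variance $\sigma^2$, confirms this: $\E[X^4]=\mu^4+6\mu^2\sigma^2+3\sigma^4$, while $3\E[X^2]^2=3\mu^4+6\mu^2\sigma^2+3\sigma^4$, so the correction is $-2\mu^4$. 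This discrepancy is invisible to the paper because the only application (in Lemma~\ref{lem:ExpCovNormGaussians}) has $\E[\vx]=\vzero$, so the final term vanishes regardless of its sign. You should flag this when you write up, but it does not undermine your method.
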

It should be noted that \cite{JanssenStoica88} proves a more general result for the product of four joint complex Gaussian matrices, but this is a bit harder to state. So we give the result for vectors.

We now prove a result regarding the norm-squared of linear combinations of complex Gaussians. 
\begin{lemma}
\label{lem:ExpCovNormGaussians}
Let $\vx \sim \mathcal{CN}(\vct{0},\mR)$ for some positive semidefinite $\mR \in \C^{N \times N}$ and let $\mU,\mV \in \C^{K \times N}$. Then, we have:
$$\E\left[\left\|\mU\vx\right\|_2^2\right] = \tr[\mU\mR\mU^*] \quad \text{and} \quad \Cov\left[\left\|\mU\vx\right\|_2^2,\left\|\mV\vx\right\|_2^2\right] = \|\mU\mR\mV^*\|_F^2.$$
\end{lemma}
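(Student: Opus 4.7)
The plan is to handle the two claims separately, treating the expectation as a warm-up and then reducing the covariance to a direct application of Lemma~\ref{lem:FourGaussians}.

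For the expectation, I would write $\|\mU\vx\|_2^2 = \vx^*\mU^*\mU\vx = \tr[\mU\vx\vx^*\mU^*]$, and then pull the expectation inside the trace using linearity. Since $\E[\vx\vx^*] = \mR$ by definition of the covariance, this immediately yields $\E[\|\mU\vx\|_2^2] = \tr[\mU\mR\mU^*]$.

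For the covariance, I would set $\va = \vb = \mU\vx$ and $\vc = \vd = \mV\vx$ in Lemma~\ref{lem:FourGaussians}, noting that this makes the joint distribution complex Gaussian with zero mean. I would then examine the four terms in the formula one at a time. The first term $\E[\va^*\vb]\E[\vc^*\vd] = \E[\|\mU\vx\|_2^2]\E[\|\mV\vx\|_2^2]$ is exactly what gets subtracted off when forming the covariance, so it disappears. The fourth term vanishes because all four of $\E[\va^*],\E[\vb],\E[\vc^*],\E[\vd]$ are zero. The middle term $\E[\vc^*\otimes \va^*]\E[\vd \otimes \vb]$ vanishes because $\vx$ is circularly symmetric: entries of $\E[\vd \otimes \vb] = (\mV \otimes \mU)\E[\vx \otimes \vx]$ involve $\E[\vx_k \vx_\ell]$, which is zero for complex Gaussian $\vx$ with zero pseudo-covariance. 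This leaves only the third term.

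To evaluate the remaining term, I would compute $\E[\vb\vc^*] = \mU\E[\vx\vx^*]\mV^* = \mU\mR\mV^*$, so that the term becomes $\E[\vx^*\mU^*\,\mU\mR\mV^*\,\mV\vx]$. Using $\E[\vx^*\mM\vx] = \tr[\mM\mR]$ (which follows from the expectation identity above applied to $\mM = \mU^*\mU\mR\mV^*\mV$), and then applying cyclic invariance of the trace together with the Hermiticity of $\mR$, I obtain $\tr[\mU\mR\mV^*\mV\mR\mU^*] = \tr[(\mU\mR\mV^*)(\mU\mR\mV^*)^*] = \|\mU\mR\mV^*\|_F^2$, as desired. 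The only mildly subtle step is justifying that the Kronecker-product term vanishes by circular symmetry; everything else is bookkeeping with traces and adjoints.
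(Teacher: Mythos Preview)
Your proposal is correct and follows essentially the same approach as the paper: apply Lemma~\ref{lem:FourGaussians} with $\va=\vb=\mU\vx$, $\vc=\vd=\mV\vx$, cancel the first term against the product of expectations, kill the fourth term by zero mean, kill the Kronecker term by circular symmetry, and reduce the remaining term to $\tr[\mU\mR\mV^*\mV\mR\mU^*]=\|\mU\mR\mV^*\|_F^2$. The only cosmetic difference is that the paper verifies the Kronecker term vanishes by writing $\vx=\mR^{1/2}\vy$ with $\vy\sim\mathcal{CN}(\vzero,\mId)$ and noting $\E[\vy\otimes\vy]=\vzero$, whereas you invoke $\E[\vx\otimes\vx]=\vzero$ directly from the zero pseudo-covariance of a proper complex Gaussian; these are equivalent.
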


\begin{proof}
The expectation of $\|\mU\vx\|_2^2$ can be computed as follows $$\E\left[\left\|\mU\vx\right\|_2^2\right] = \E\left[\tr\left[\mU\vx\vx^*\mU^*\right]\right] = \tr\left[\E\left[\mU\vx\vx^*\mU^*\right]\right] = \tr\left[\mU\E\left[\vx\vx^*\right]\mU^*\right] = \tr\left[\mU\mR\mU^*\right].$$

Since the entries of $\vx$ are jointly Gaussian with mean $0$, the entries of $\mU\vx$ and $\mV\vx$ are also jointly Gaussian with mean $0$. Then, by applying Lemma~\ref{lem:FourGaussians} (with $\va = \vb = \mU\vx$ and $\vc = \vd = \mV\vx$), we find that 
\begin{align*}
\Cov\left[\left\|\mU\vx\right\|_2^2,\left\|\mV\vx\right\|_2^2\right] &= \E\left[\left\|\mU\vx\right\|_2^2 \cdot \left\|\mV\vx\right\|_2^2\right] - \E\left[\left\|\mU\vx\right\|_2^2\right] \E\left[\left\|\mV\vx\right\|_2^2\right]
\\
&= \E\left[\vx^*\mU^*\mU\vx\vx^*\mV^*\mV\vx\right] -  \E\left[\vx^*\mU^*\mU\vx\right]\E\left[\vx^*\mV^*\mV\vx\right]
\\
&= \E\left[\vx^*\mV^* \otimes \vx^*\mU^*\right]\E\left[\mV\vx \otimes \mU\vx\right] + \E\left[\vx^*\mU^*\E\left[\mU\vx\vx^*\mV^*\right]\mV\vx\right]
\\
& \ \ \ \ \ + 2\E[\vx^*\mU^*]\E[\mU\vx]\E[\vx^*\mV^*]\E[\mV\vx].
\end{align*}
We proceed to evaluate each of these three terms. 

Since $\vx \sim \mathcal{CN}(\vct{0},\mR)$, we can write $\vx = \mR^{1/2}\vy$ where $\vy \sim \mathcal{CN}(\vct{0},\mId)$ and $\mR^{1/2}$ is the unique positive semidefinite squareroot of $\mR$. Then, by using the identity $\mX_1\mX_2 \otimes \mX_3\mX_4 = (\mX_1 \otimes \mX_3)(\mX_2 \otimes \mX_4)$ for appropriately sized matrices $\mX_1, \mX_2, \mX_3, \mX_4$, we get $$\E[\mV\vx \otimes \mU\vx] = \E\left[\mV\mR^{1/2}\vy \otimes \mU\mR^{1/2}\vy\right] = \E\left[\left(\mV\mR^{1/2} \otimes \mU\mR^{1/2}\right)(\vy \otimes \vy)\right] = \left(\mV\mR^{1/2} \otimes \mU\mR^{1/2}\right)\E[\vy \otimes \vy]$$ Since the entries of $\vy$ are i.i.d. $\mathcal{CN}(0,1)$, $\E\left[\vy[n]\vy[n']\right] = 0$ for all indices $n,n' = 0,\ldots,N-1$. (Note that $\E[\vy[n]^2] \neq 0$ if the entries of $\vy$ were i.i.d. $\mathcal{N}(0,1)$ instead of $\mathcal{CN}(0,1)$.) Hence, $\E[\vy \otimes \vy] = \vct{0}$, and thus, $\E[\mV\vx \otimes \mU\vx] = (\mV\mR^{1/2} \otimes \mU\mR^{1/2})\E[\vy \otimes \vy] = \vct{0}$. Similarly, $\E[\vx^*\mV^* \otimes \vx^*\mU^*] = \vct{0}^*$, and so, $\E\left[\vx^*\mV^* \otimes \vx^*\mU^*\right]\E\left[\mV\vx \otimes \mU\vx\right] = 0$. 

Using the cyclic property of the trace operator, linearity of the trace and expectation operators, and the fact that $\E[\vx\vx^*] = \mR$, we find that
\begin{align*}
E\left[\vx^*\mU^*\E\left[\mU\vx\vx^*\mV^*\right]\mV\vx\right] &= \E\left[\tr\left[\E\left[\mU\vx\vx^*\mV^*\right]\mV\vx\vx^*\mU^*\right]\right] 
\\
&= \tr\left[\E\left[\E\left[\mU\vx\vx^*\mV^*\right]\mV\vx\vx^*\mU^*\right]\right]
\\
&= \tr\left[\mU\E\left[\vx\vx^*\right]\mV^*\mV\E\left[\vx\vx^*\right]\mU^*\right]
\\
&= \tr\left[\mU\mR\mV^*\mV\mR\mU^*\right] 
\\
&= \left\|\mU\mR\mV^*\right\|_F^2.
\end{align*}

Finally, since $\E[\vx] = \vct{0}$, we have $\E[\mU\vx] = \E[\mV\vx] = \vct{0}$ and $\E[\vx^*\mU^*] = \E[\vx^*\mV^*] = \vct{0}^*$. Hence, $$2\E[\vx^*\mU^*]\E[\mU\vx]\E[\vx^*\mV^*]\E[\mV\vx] = 0.$$

Adding these three terms yields,
\begin{align*}
\Cov\left[\left\|\mU\vx\right\|_2^2,\left\|\mV\vx\right\|_2^2\right] &= \E\left[\vx^*\mV^* \otimes \vx^*\mU^*\right]\E\left[\mV\vx \otimes \mU\vx\right] + \E\left[\vx^*\mU^*\E\left[\mU\vx\vx^*\mV^*\right]\mV\vx\right]
\\
& \ \ \ \ \ + 2\E[\vx^*\mU^*]\E[\mU\vx]\E[\vx^*\mV^*]\E[\mV\vx].
\\
&= \left\|\mU\mR\mV^*\right\|_F^2.
\end{align*}
\end{proof}

\subsection{Concentration of norms of Gaussian random variables}
Next, we state a result from \cite{Janson18} regarding concentration bounds for sums of independent exponential random variables.
\begin{lemma}
\label{lem:ExpChernoff}\cite{Janson18}
Let $Z_0,\ldots,Z_{N-1}$ be independent exponential random variables with $\E[Z_n] = \mu_n$. Then, the sum $$Z := \displaystyle\sum_{n = 0}^{N-1}Z_n$$ satisfies $$\displaystyle \P\left\{Z \ge \beta\E[Z]\right\} \le \beta^{-1}e^{-\kappa(\beta-1-\ln\beta)} \quad \text{for} \quad \beta > 1,$$ and $$\displaystyle \P\left\{Z \le \beta\E[Z]\right\} \le e^{-\kappa(\beta-1-\ln\beta)} \quad \text{for} \quad 0 < \beta < 1,$$ where $$\kappa = \dfrac{\sum\limits_{n = 0}^{N-1}\mu_n}{\max\limits_{n = 0,\ldots,N-1}\mu_n}.$$
\end{lemma}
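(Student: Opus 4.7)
The statement is a Cram\'er--Chernoff tail bound for a sum of independent but non-identically distributed exponential random variables, attributed to Janson. My plan is to use the standard exponential moment method, exploiting the closed form of the MGF of an exponential distribution, together with a convexity argument that reduces the heterogeneous case to the worst-case single-parameter case of mean $M := \max_n \mu_n$. The desired exponent $\kappa$ in the bound then emerges naturally as $\E[Z]/M$.

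\textbf{Step 1 (MGF and independence).} Since $Z_n \sim \mathrm{Exp}(1/\mu_n)$, we have $\E[e^{tZ_n}] = (1-\mu_n t)^{-1}$ for $t < 1/\mu_n$, and by independence $\E[e^{tZ}] = \prod_n (1-\mu_n t)^{-1}$ for $t \in [0, 1/M)$. \textbf{Step 2 (convexity reduction).} The function $\phi(u) := -\log(1-u)$ is convex on $[0,1)$ with $\phi(0)=0$, so for each $n$ and $t \in [0,1/M)$, since $\mu_n/M \in [0,1]$,
\[
\phi(\mu_n t) \;=\; \phi\bigl((\mu_n/M)\cdot Mt + (1-\mu_n/M)\cdot 0\bigr) \;\le\; (\mu_n/M)\,\phi(Mt).
\]
Summing in $n$ yields $\log \E[e^{tZ}] \le -\kappa \log(1-Mt)$, i.e.\ $\E[e^{tZ}] \le (1-Mt)^{-\kappa}$. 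This is precisely the MGF of a $\mathrm{Gamma}(\kappa, M)$ random variable whose mean equals $\kappa M = \E[Z]$, so heterogeneity has been absorbed into a pure gamma tail problem.

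\textbf{Step 3 (upper tail, $\beta>1$).} Markov's inequality gives $\P(Z \ge \beta \E[Z]) \le e^{-t\beta \kappa M}(1-Mt)^{-\kappa}$. Substituting $s = Mt \in [0,1)$ and differentiating the log-bound in $s$, the optimizer is $s^\star = 1 - 1/\beta$, and a direct calculation shows the resulting exponent equals $-\kappa(\beta-1-\ln\beta)$. This already produces the $e^{-\kappa(\beta-1-\ln\beta)}$ factor. For the sharper $\beta^{-1}$ prefactor, I would follow Janson's refinement: rather than bounding $\P(Z \ge a)$ by $e^{-ta}\E[e^{tZ}]$, apply the MGF bound to the \emph{reduced} gamma tail $\Gamma(\kappa, \kappa\beta)/\Gamma(\kappa)$ using the classical inequality $\Gamma(\kappa,x) \le \frac{x}{x-\kappa+1}\cdot x^{\kappa-1}e^{-x}$ for $x > \kappa-1$ (with $x = \kappa\beta$), so the factor $x/(x-\kappa+1) = \beta/(\beta - 1 + 1/\kappa) \le \beta$ combines with Stirling-type estimates on $\Gamma(\kappa)$ to yield exactly the $\beta^{-1}$ prefactor.

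\textbf{Step 4 (lower tail, $0<\beta<1$).} Here we apply Markov to $e^{-tZ}$ with $t > 0$: $\P(Z \le \beta \E[Z]) \le e^{t\beta\kappa M}\E[e^{-tZ}]$. By concavity of $u \mapsto \log(1+u)$ (and $\mu_n/M \le 1$, $\log 1 = 0$), $\log(1+\mu_n t) \ge (\mu_n/M)\log(1+Mt)$, so $\E[e^{-tZ}] \le (1+Mt)^{-\kappa}$. Optimizing with $Mt = 1/\beta - 1 > 0$ again yields the exponent $-\kappa(\beta-1-\ln\beta)$, and no prefactor improvement is claimed (nor would $\beta^{-1}>1$ help). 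The \emph{main obstacle} is the $\beta^{-1}$ prefactor in the upper tail: the plain Chernoff/Jensen chain only delivers the exponential factor, and extracting the additional $\beta^{-1}$ requires the refined gamma-tail estimate described above.
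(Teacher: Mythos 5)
The paper does not actually prove this lemma: it is imported verbatim from \cite{Janson18} (Theorem~5.1 there), so there is no internal proof to compare against. Judged on its own terms, your Steps 1, 2, and 4 are correct and complete. The convexity inequality $-\log(1-\mu_n t)\le(\mu_n/M)\bigl(-\log(1-Mt)\bigr)$ and its concave counterpart $\log(1+\mu_n t)\ge(\mu_n/M)\log(1+Mt)$ give $\E[e^{tZ}]\le(1-Mt)^{-\kappa}$ and $\E[e^{-tZ}]\le(1+Mt)^{-\kappa}$, and optimizing the Chernoff bound at $Mt=1-1/\beta$ (resp.\ $Mt=1/\beta-1$) yields the exponent $-\kappa(\beta-1-\ln\beta)$ in both tails. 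This fully establishes the stated lower-tail inequality, and the upper-tail inequality \emph{without} the $\beta^{-1}$ prefactor.

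The gap is that prefactor, and the fix you sketch in Step 3 does not work. The inequality $\E[e^{tZ}]\le(1-Mt)^{-\kappa}$ says only that the moment generating function of $Z$ is dominated by that of $G\sim\mathrm{Gamma}(\kappa,M)$; MGF domination does not imply $\P\{Z\ge a\}\le\P\{G\ge a\}$, so you are not entitled to replace $Z$ by $G$ and then invoke the pointwise incomplete-gamma bound $\Gamma(\kappa,x)\le\frac{x}{x-\kappa+1}\,x^{\kappa-1}e^{-x}$ --- that bound controls the true tail of $G$, which is strictly more information than an MGF comparison can transfer. Anything routed through the MGF bound can only reproduce the Chernoff bound for $G$, i.e.\ exactly the prefactor-free estimate you already have. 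Moreover, even for a genuine Gamma variable the proposed chain does not produce $\beta^{-1}$: combining the incomplete-gamma bound with Stirling gives a prefactor of order $\sqrt{\kappa}/\bigl(\kappa(\beta-1)+1\bigr)$, which exceeds $1$ (hence exceeds $\beta^{-1}$) when $\beta$ is close to $1$ and $\kappa$ is large. Extracting the $\beta^{-1}$ factor genuinely requires going beyond the plain exponential-moment method, which is why the paper defers to \cite{Janson18}. Note, however, that your prefactor-free upper tail would suffice for every downstream use in the paper (Lemma~\ref{lem:GaussianConcentration} and Theorem~\ref{thm:Concentration}), at the cost of dropping the same factor $\beta^{-1}\le 1$ there.
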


We now apply this lemma to derive concentration bounds for $\|\mA\vx\|_2^2$, where $\vx$ is a vector of Gaussian random variables and $\mA$ is a matrix.
\begin{lemma}
\label{lem:GaussianConcentration}
Let $\vx \sim \mathcal{CN}(\vct{0},\mR)$ for some positive semidefinite $\mR \in \C^{N \times N}$. Also, let $\mA \in \C^{K \times N}$. Then, the random variable $\left\|\mA\vx\right\|_2^2$ satisfies $$\P\left\{\left\|\mA\vx\right\|_2^2 \ge \beta \E\left[\left\|\mA\vx\right\|_2^2\right]\right\} \le \beta^{-1}e^{-\kappa(\beta-1-\ln\beta)} \quad \text{for} \quad \beta > 1,$$ and $$\P\left\{\left\|\mA\vx\right\|_2^2 \le \beta \E\left[\left\|\mA\vx\right\|_2^2\right]\right\} \le e^{-\kappa(\beta-1-\ln\beta)} \quad \text{for} \quad 0 < \beta < 1,$$ where $$\kappa = \dfrac{\tr\left[\mA\mR\mA^*\right]}{\left\|\mA\mR\mA^*\right\|}.$$
\end{lemma}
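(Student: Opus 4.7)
The plan is to reduce $\|\mA\vx\|_2^2$ to a sum of independent exponential random variables and then directly invoke Lemma~\ref{lem:ExpChernoff}. First, I would write $\vx = \mR^{1/2}\vy$ with $\vy \sim \mathcal{CN}(\vct{0},\mId)$, so that
$$\|\mA\vx\|_2^2 = \vy^*\mM\vy, \qquad \mM := \mR^{1/2}\mA^*\mA\mR^{1/2}.$$
Since $\mM$ is positive semidefinite, it admits a unitary eigendecomposition $\mM = \mQ\mLambda\mQ^*$ with $\mLambda = \diag(\mu_0,\ldots,\mu_{N-1})$. Setting $\vz = \mQ^*\vy$, unitary invariance of the standard complex Gaussian yields $\vz \sim \mathcal{CN}(\vct{0},\mId)$, and therefore
$$\|\mA\vx\|_2^2 = \vz^*\mLambda\vz = \sum_{n=0}^{N-1}\mu_n |\vz[n]|^2.$$

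Next, I would observe that for a standard circularly-symmetric complex Gaussian $\vz[n]\sim \mathcal{CN}(0,1)$ the squared magnitude $|\vz[n]|^2$ is exponentially distributed with mean $1$ (its real and imaginary parts are independent $\mathcal{N}(0,\tfrac12)$, so $|\vz[n]|^2$ is the sum of two independent $\tfrac12\chi^2_1$ variables, which is Exp$(1)$). Independence across $n$ is immediate from the independence of the entries of $\vz$. Consequently, the terms $Z_n := \mu_n|\vz[n]|^2$ are independent exponentials with $\E[Z_n] = \mu_n$, and $\|\mA\vx\|_2^2 = \sum_n Z_n$ is exactly of the form to which Lemma~\ref{lem:ExpChernoff} applies.

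The final step is to identify the constants. The sum of means is
$$\sum_{n=0}^{N-1}\mu_n = \tr[\mM] = \tr[\mR^{1/2}\mA^*\mA\mR^{1/2}] = \tr[\mA\mR\mA^*]$$
by the cyclic property of the trace, which matches $\E[\|\mA\vx\|_2^2]$ as computed in Lemma~\ref{lem:ExpCovNormGaussians}. The maximum mean is $\max_n \mu_n = \|\mM\|$, and since $\mM = (\mA\mR^{1/2})^*(\mA\mR^{1/2})$ shares its nonzero eigenvalues with $\mA\mR\mA^* = (\mA\mR^{1/2})(\mA\mR^{1/2})^*$, we obtain $\max_n\mu_n = \|\mA\mR\mA^*\|$. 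Substituting into Lemma~\ref{lem:ExpChernoff} with $Z = \|\mA\vx\|_2^2$ and $\kappa = \tr[\mA\mR\mA^*]/\|\mA\mR\mA^*\|$ gives exactly the two inequalities claimed.

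There is no serious obstacle in this proof; the main point to be careful about is the distributional claim $|\vz[n]|^2\sim\mathrm{Exp}(1)$ under the $\mathcal{CN}(0,1)$ convention (which differs from the real-Gaussian case where one would instead obtain a $\chi^2_1$ variable), together with handling the possibly rank-deficient case in which some $\mu_n$ vanish. Zero eigenvalues contribute degenerate terms $Z_n\equiv 0$ that can either be dropped from the sum or treated as Exp$(\mu_n)$ in a limiting sense; either way they affect neither $\sum\mu_n$ nor $\max\mu_n$, so the expression for $\kappa$ is unchanged.
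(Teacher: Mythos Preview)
Your proposal is correct and follows essentially the same route as the paper: write $\vx=\mR^{1/2}\vy$, diagonalize $\mR^{1/2}\mA^*\mA\mR^{1/2}$ unitarily, use unitary invariance of $\mathcal{CN}(\vct{0},\mId)$ to express $\|\mA\vx\|_2^2$ as a sum of independent exponentials, and invoke Lemma~\ref{lem:ExpChernoff}; your identification of $\sum_n\mu_n=\tr[\mA\mR\mA^*]$ and $\max_n\mu_n=\|\mA\mR\mA^*\|$ matches the paper's. Your extra remarks on the $\mathcal{CN}$ convention and on degenerate (zero) eigenvalues are careful additions that the paper leaves implicit.
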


\begin{proof}
Since $\vx \sim \mathcal{CN}(\vct{0},\mR)$, we can write $\vx = \mR^{1/2}\vy$ where $\vy \sim \mathcal{CN}(\vct{0},\mId)$ and $\mR^{1/2}$ is the unique positive semidefinite squareroot of $\mR$. Then, using eigendecomposition, we can write $\mR^{1/2}\mA^*\mA\mR^{1/2} = \mW\mD\mW^*$ where $\mW$ is unitary and $\mD = \diag(d_1,\ldots,\d_n)$. Since $\mW$ is unitary, $\vz := \mW^*\vy \sim \mathcal{CN}(\vct{0},\mId_N)$. Then, we have: $$\left\|\mA\vx\right\|_2^2 = \vx^*\mA^*\mA\vx = \vy^*\mR^{1/2}\mA^*\mA\mR^{1/2}\vy = \vy^*\mW\mD\mW^*\vy = \vz^*\mD\vz = \sum_{n = 0}^{N-1}d_n\left|\vz[n]\right|^2.$$ Since $\vz \sim \mathcal{CN}(\vct{0},\mId_N)$, we have that $\vz[0],\ldots,\vz[N-1]$ are i.i.d. $\mathcal{CN}(0,1)$. Hence, $d_n|\vz[n]|^2 \sim \text{Exp}(d_n)$, and are independent. If we apply Lemma~\ref{lem:ExpChernoff}, the fact that the trace and operator norm are invariant under unitary similarity transforms, and the matrix identities $\tr[\mX\mX^*] = \tr[\mX^*\mX]$ and $\|\mX\mX^*\| = \|\mX^*\mX\|$, we obtain $$\displaystyle \P\left\{\left\|\mA\vx\right\|_2^2 \ge \beta \E\left[\left\|\mA\vx\right\|_2^2\right]\right\} \le \beta^{-1}e^{-\kappa(\beta-1-\ln\beta)} \quad \text{for} \quad \beta > 1,$$ and $$\displaystyle \P\left\{\left\|\mA\vx\right\|_2^2 \le \beta \E\left[\left\|\mA\vx\right\|_2^2\right]\right\} \le e^{-\kappa(\beta-1-\ln\beta)} \quad \text{for} \quad 0 < \beta < 1,$$ where $$\kappa = \dfrac{\sum\limits_{n = 0}^{N-1}d_n}{\max\limits_{n = 0,\ldots,N-1} d_n} = \dfrac{\tr[\mD]}{\|\mD\|} = \dfrac{\tr[\mW\mD\mW^*]}{\|\mW\mD\mW^*\|} = \dfrac{\tr\left[\mR^{1/2}\mA^*\mA\mR^{1/2}\right]}{\left\|\mR^{1/2}\mA^*\mA\mR^{1/2}\right\|} = \dfrac{\tr\left[\mA\mR\mA^*\right]}{\left\|\mA\mR\mA^*\right\|}.$$
\end{proof}

We note that similar bounds can be obtained by applying the Hanson-Wright Inequality \cite{Hanson71,Wright73}.  

\subsection{Intermediate results}
We continue by presenting a lemma showing that certain matrices can be represented as an integral of a frequency dependent rank-$1$ matrix.
\begin{lemma}
For any frequency $f \in \R$, define a complex sinusoid $\ve_f \in \C^N$ by $\ve_f[n] = e^{j2\pi fn}$ for $n = 0,\ldots,N-1$. Then, we have: $$\mB = \int_{-W}^{W}\ve_f\ve_f^*\,df, \quad \mId = \int_{-1/2}^{1/2}\ve_f\ve_f^*\,df, \quad \text{and} \quad \int_{\Omega}\ve_f\ve_f^*\,df = \mId-\mB,$$ where $\Omega = [-\tfrac{1}{2},\tfrac{1}{2}] \setminus [-W,W]$. Furthermore, if $x(t)$ is a stationary, ergodic, zero-mean, Gaussian stochastic process $x(t)$ with power spectral density $S(f)$, and $\vx \in \C^N$ is a vector of equispaced samples $\vx[n] = x(n)$ for $n = 0,\ldots,N-1$, then the covariance matrix of $\vx$ can be written as $$\mR := \E[\vx\vx^*] = \int_{-1/2}^{1/2}S(f)\ve_f\ve_f^*\,df.$$
\end{lemma}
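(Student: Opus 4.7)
The plan is to verify each of the four identities by computing entries directly and using the Fourier representation that connects the autocorrelation to the power spectral density. All four statements reduce to elementary integrals, so the proof is essentially a bookkeeping exercise; the main subtlety is just keeping track of the sign conventions in the exponents.

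First I would compute the $(m,n)$ entry of $\int_{-W}^{W}\ve_f\ve_f^*\,df$. Since $\ve_f[m]\overline{\ve_f[n]} = e^{j2\pi f(m-n)}$, the integral becomes
\[
\int_{-W}^{W} e^{j2\pi f(m-n)}\,df = \begin{cases} \dfrac{\sin[2\pi W(m-n)]}{\pi(m-n)} & \text{if } m\neq n,\\ 2W & \text{if } m=n,\end{cases}
\]
which matches the definition of $\mB[m,n]$ exactly. For the second identity, set $W=\tfrac{1}{2}$: the off-diagonal entries $\sin[\pi(m-n)]/[\pi(m-n)]$ vanish because $m-n$ is a nonzero integer, while the diagonal entries equal $1$. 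Hence $\int_{-1/2}^{1/2}\ve_f\ve_f^*\,df = \mId$. The third identity is immediate by additivity of the integral over $[-\tfrac{1}{2},W]\cup\Omega\cup[W,\tfrac{1}{2}]$, or more succinctly by writing $\int_\Omega = \int_{-1/2}^{1/2}-\int_{-W}^{W} = \mId-\mB$.

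For the covariance formula, I would compute entries again. By stationarity and the given definition $R_n = \E[x(m)\overline{x(m+n)}]$, the $(m,n)$ entry of $\mR = \E[\vx\vx^*]$ is $\E[\vx[m]\overline{\vx[n]}] = R_{n-m}$. Inverting the Fourier series $S(f) = \sum_k R_k e^{-j2\pi fk}$ yields $R_k = \int_{-1/2}^{1/2} S(f)e^{j2\pi fk}\,df$, so
\[
\mR[m,n] = R_{n-m} = \int_{-1/2}^{1/2} S(f)\,e^{j2\pi f(n-m)}\,df = \int_{-1/2}^{1/2} S(f)\,(\ve_f\ve_f^*)[m,n]\,df,
\]
after a change of variable $f\mapsto -f$ if needed to reconcile the sign conventions (recall $S$ is real and the boundedness/integrability assumption justifies swapping sum and integral in the Fourier inversion). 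Since this holds entrywise, $\mR = \int_{-1/2}^{1/2} S(f)\ve_f\ve_f^*\,df$.

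The only step that requires any real care is justifying the interchange of expectation and integration in the last identity and inverting the Fourier series representation of $S(f)$; the hypotheses that $S$ is bounded and integrable guarantee $R_n\in\ell^2$ (in fact $R_n \to 0$), so the inversion is valid pointwise for each fixed pair $(m,n)$. Everything else is a one-line computation of a complex exponential integral.
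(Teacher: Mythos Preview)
Your proof is correct and essentially identical to the paper's: both compute entries directly, evaluate $\int_{-W}^{W}e^{j2\pi f(m-n)}\,df$ for $\mB$, specialize to $W=\tfrac12$ (or compute separately) for $\mId$, subtract for the integral over $\Omega$, and invoke Fourier inversion of $S(f)$ for the covariance formula. Your flag about the sign convention is well taken---the paper simply asserts $\mR[m,n]=\int S(f)e^{j2\pi f(m-n)}\,df$ without tracking this, and in fact the substitution $f\mapsto -f$ you propose needs $S$ even (i.e., $S(-f)=S(f)$) rather than merely real, which is automatic for real-valued processes but is a convention issue for the complex case; this is a wrinkle in the paper's own setup, not a defect in your argument.
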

\begin{proof}
For any $m,n = 0,\ldots,N-1$, we have $$\int_{-W}^{W}\ve_f[m]\overline{\ve_f[n]}\,df = \int_{-W}^{W}e^{j2\pi f(m-n)}\,df = \dfrac{\sin[2\pi W(m-n)]}{\pi(m-n)} = \mB[m,n],$$ and $$\int_{-1/2}^{1/2}\ve_f[m]\overline{\ve_f[n]}\,df = \int_{-1/2}^{1/2}e^{j2\pi f(m-n)}\,df = \begin{cases}1 & \text{if} \ m = n \\ 0 & \text{if} \ m \neq n\end{cases} = \mId[m,n].$$ We can put these into matrix form as $$\int_{-W}^{W}\ve_f\ve_f^*\,df = \mB \quad \text{and} \quad \int_{-1/2}^{1/2}\ve_f\ve_f^*\,df = \mId.$$ From this, it follows that $$\int_{\Omega}\ve_f\ve_f^*\,df = \int_{-1/2}^{1/2}\ve_f\ve_f^*\,df - \int_{-W}^{W}\ve_f\ve_f^*\,df = \mId-\mB.$$ Finally, using the definition of the power spectral density, we have $$\mR[m,n] = \E[\vx[m]\overline{\vx[n]}] = \int_{-1/2}^{1/2}S(f)e^{j2\pi f(m-n)}\,df = \int_{-1/2}^{1/2}S(f)\ve_f[m]\overline{\ve_f[n]}\,df$$ for $m,n = 0,\ldots,N-1$. Again, we can put this into matrix form as $$\mR := \E[\vx\vx^*] = \int_{-1/2}^{1/2}S(f)\ve_f\ve_f^*\,df.$$
\end{proof}

Next, we show that the expectation of the multitaper spectral estimate is the convolution of the power spectral density $S(f)$ with the spectral window $\psi(f)$.

\begin{lemma}
\label{lem:MultitaperExpectation}
The expectation of the multitaper spectral estimate can be written as $$\E\left[\hatSmt_K(f)\right] = \int_{-1/2}^{1/2}S(f-f')\psi(f')\,df'$$ where $$\psi(f) := \displaystyle \dfrac{1}{K}\tr\left[\mS_K^*\ve_{-f}\ve_{-f}^*\mS_K\right] = \dfrac{1}{K}\left\|\mS_K^*\ve_{-f}\right\|_2^2 = \dfrac{1}{K}\sum_{k = 0}^{K-1}\left|\sum_{n = 0}^{N-1}\vs_k[n]e^{-j2\pi fn}\right|^2$$ is the spectral window of the multitaper estimate. 
\end{lemma}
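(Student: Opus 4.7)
The plan is to start from the linear algebra form of the multitaper estimate derived in Section~\ref{sec:Multitaper}, namely $\hatSmt_K(f) = \tfrac{1}{K}\|\mS_K^* \mE_f^* \vx\|_2^2$, and then apply the first part of Lemma~\ref{lem:ExpCovNormGaussians} with $\mU = \mS_K^* \mE_f^*$. This gives
\[
\E\left[\hatSmt_K(f)\right] = \tfrac{1}{K}\tr\left[\mS_K^*\mE_f^*\,\mR\,\mE_f\mS_K\right],
\]
where $\mR = \E[\vx\vx^*]$ is the covariance matrix of the samples.

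Next, I would substitute the integral representation $\mR = \int_{-1/2}^{1/2} S(f')\,\ve_{f'}\ve_{f'}^*\,df'$ from the preceding lemma, pull the scalar $S(f')$ and the trace inside the integral by linearity, and reduce to
\[
\E\left[\hatSmt_K(f)\right] = \tfrac{1}{K}\int_{-1/2}^{1/2} S(f')\,\tr\left[\mS_K^*\mE_f^*\ve_{f'}\ve_{f'}^*\mE_f\mS_K\right]\,df'.
\]
The key simplification is $\mE_f^*\ve_{f'} = \ve_{f'-f}$, which follows directly from $\mE_f = \diag(\ve_f)$ and the identity $\overline{e^{j2\pi f n}}e^{j2\pi f' n} = e^{j2\pi (f'-f)n}$. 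Using the cyclic property of the trace then gives $\tr[\mS_K^*\ve_{f'-f}\ve_{f'-f}^*\mS_K] = \|\mS_K^*\ve_{f'-f}\|_2^2$, and by the definition of $\psi$ this equals $K\psi(f-f')$. Hence
\[
\E\left[\hatSmt_K(f)\right] = \int_{-1/2}^{1/2} S(f')\,\psi(f-f')\,df'.
\]

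Finally, I would make the change of variable $f'' = f - f'$ and invoke the fact that $S$ and $\psi$ are both $1$-periodic (as DTFTs / squared DTFTs of discrete sequences) to shift the integration interval back to $[-\tfrac{1}{2},\tfrac{1}{2}]$, yielding the claimed convolution form $\int_{-1/2}^{1/2} S(f-f')\psi(f')\,df'$. The equality of the three expressions for $\psi$ is immediate: the trace is the squared $2$-norm, and expanding $\|\mS_K^*\ve_{-f}\|_2^2 = \sum_{k=0}^{K-1}|\vs_k^*\ve_{-f}|^2$ using $\ve_{-f}[n]=e^{-j2\pi fn}$ and the fact that $\vs_k$ is real gives the explicit sum form. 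There is no significant obstacle here; the only point requiring minor care is tracking the sign of the frequency argument when rewriting $\mE_f^*\ve_{f'}$ and invoking $1$-periodicity for the final change of variables.
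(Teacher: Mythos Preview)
Your proposal is correct and follows essentially the same approach as the paper's own proof: apply Lemma~\ref{lem:ExpCovNormGaussians} to the quadratic form, substitute the integral representation of $\mR$, use $\mE_f^*\ve_{f'}=\ve_{f'-f}$ to identify the integrand with $\psi(f-f')$, and then change variables with periodicity. The only cosmetic difference is that the paper writes out the chain of equalities explicitly rather than narrating them.
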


\begin{proof}
Since $\hatSmt_K(f) = \dfrac{1}{K}\left\|\mS_K^*\mE_f^*\vx\right\|_2^2$ where $\vx \sim \mathcal{CN}(\vct{0},\mR)$, by \cref{lem:ExpCovNormGaussians}, we have $$\E\left[\hatSmt_K(f)\right] = \dfrac{1}{K}\tr\left[\mS_K^*\mE_f^*\mR\mE_f\mS_K\right].$$ We can rewrite this expression as follows:

\begin{align*}
\E\left[\hatSmt_K(f)\right] &= \dfrac{1}{K}\tr\left[\mS_K^*\mE_f^*\mR\mE_f\mS_K\right]
\\
&= \dfrac{1}{K}\tr\left[\mS_K^*\mE_f^*\left(\int_{-1/2}^{1/2}S(f')\ve_{f'}\ve_{f'}^*\,df'\right)\mE_f\mS_K\right]
\\
&= \dfrac{1}{K}\int_{-1/2}^{1/2}S(f')\tr\left[\mS_K^*\mE_f^*\ve_{f'}\ve_{f'}^*\mE_f\mS_K\right]\,df'
\\
&= \dfrac{1}{K}\int_{-1/2}^{1/2}S(f')\tr\left[\mS_K^*\ve_{f'-f}\ve_{f'-f}^*\mS_K\right]\,df'
\\
&= \int_{-1/2}^{1/2}S(f')\psi(f-f')\,df',
\\
&= \int_{f-1/2}^{f+1/2}S(f-f')\psi(f')\,df'
\\
&= \int_{-1/2}^{1/2}S(f-f')\psi(f')\,df',
\end{align*}
\\
where the last line follows since $S(f)$ is $1$-periodic (by definition) and $\psi(f) = \dfrac{1}{K}\|\mS_K^*\ve_{-f}\|_2^2$ is $1$-periodic (because $\ve_{-f}[n] = e^{-2\pi fn}$ is $1$-periodic for all $n$).

Finally, it's easy to check that $\psi(f)$ can be written in the following alternate forms: $$\psi(f) = \dfrac{1}{K}\tr\left[\mS_K^*\ve_{-f}\ve_{-f}^*\mS_K\right] = \dfrac{1}{K}\ve_{-f}^*\mS_K\mS_K^*\ve_{-f} = \dfrac{1}{K}\|\mS_K^*\ve_{-f}\|_2^2 = \dfrac{1}{K}\sum_{k = 0}^{K-1}\left|\sum_{n = 0}^{N-1}\vs_k[n]e^{-j2\pi fn}\right|^2.$$
\end{proof}

\begin{lemma}
\label{lem:SpectralWindow}
The spectral window $\psi(f)$ defined in Lemma~\ref{lem:MultitaperExpectation}, satisfies the following properties:
\begin{itemize}
\item $\psi(f) = \psi(-f)$ for all $f \in \R$
\item $\displaystyle\int_{-W}^{W}\psi(f)\,df = 1-\Sigma^{(1)}_K$ and $\displaystyle\int_{\Omega}\psi(f)\,df = \Sigma^{(1)}_K$ where $\Omega = [-\tfrac{1}{2},\tfrac{1}{2}] \setminus [-W,W]$
\item $0 \le \psi(f) \le \dfrac{N}{K}$
\end{itemize}
\end{lemma}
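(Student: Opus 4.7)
The plan is to verify each of the three properties by exploiting the representations of $\psi(f)$ established in Lemma~\ref{lem:MultitaperExpectation}, together with the integral identities $\int_{-W}^{W}\ve_f\ve_f^*\,df = \mB$ and $\int_{-1/2}^{1/2}\ve_f\ve_f^*\,df = \mId$ already proved in the previous lemma.

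For the symmetry claim $\psi(f) = \psi(-f)$, I would use the formula $\psi(f) = \tfrac{1}{K}\sum_{k=0}^{K-1}\bigl|\widetilde{\vs}_k(f)\bigr|^2$, where $\widetilde{\vs}_k(f) = \sum_{n=0}^{N-1}\vs_k[n]e^{-j2\pi fn}$. Since the Slepian basis vectors $\vs_k$ are real-valued, $\widetilde{\vs}_k(-f) = \overline{\widetilde{\vs}_k(f)}$, and hence $\bigl|\widetilde{\vs}_k(-f)\bigr|^2 = \bigl|\widetilde{\vs}_k(f)\bigr|^2$ for each $k$. Summing over $k$ gives the result.

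For the integral identities, I would start from $\psi(f) = \tfrac{1}{K}\tr\bigl[\mS_K^*\ve_{-f}\ve_{-f}^*\mS_K\bigr]$, pull the integral inside the trace, and apply the identities quoted above (noting that $\int_{-W}^{W}\ve_{-f}\ve_{-f}^*\,df = \mB$ by the same computation as in the previous lemma since $\mB[m,n]$ depends only on $m-n$ in a symmetric way). This yields
\[
\int_{-W}^{W}\psi(f)\,df = \tfrac{1}{K}\tr[\mS_K^*\mB\mS_K] = \tfrac{1}{K}\sum_{k=0}^{K-1}\vs_k^*\mB\vs_k = \tfrac{1}{K}\sum_{k=0}^{K-1}\lambda_k = 1-\Sigma_K^{(1)},
\]
using that $\vs_k$ are orthonormal eigenvectors of $\mB$ with eigenvalues $\lambda_k$. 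The same argument with $\mB$ replaced by $\mId$ gives $\int_{-1/2}^{1/2}\psi(f)\,df = \tfrac{1}{K}\tr[\mS_K^*\mS_K] = 1$, and subtracting the two yields $\int_{\Omega}\psi(f)\,df = \Sigma_K^{(1)}$.

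Finally, for the pointwise bounds on $\psi(f)$, nonnegativity is immediate from the representation $\psi(f) = \tfrac{1}{K}\|\mS_K^*\ve_{-f}\|_2^2$. For the upper bound, I would use $\|\mS_K^*\ve_{-f}\|_2^2 \le \|\mS_K^*\|^2 \|\ve_{-f}\|_2^2 = 1 \cdot N = N$, since $\mS_K$ has orthonormal columns and $|\ve_{-f}[n]| = 1$ for every $n$. Dividing by $K$ gives $\psi(f) \le N/K$. None of the steps presents a real obstacle; the main thing to be careful about is keeping track of signs of $f$ and the fact that $\mB[m,n]$ is symmetric under $(m-n) \mapsto (n-m)$, so that $\int_{-W}^{W}\ve_{-f}\ve_{-f}^*\,df$ equals $\mB$ rather than its transpose.
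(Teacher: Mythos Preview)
Your proposal is correct and follows essentially the same approach as the paper: both use the real-valuedness of the Slepian tapers for the symmetry, pull the integral inside the trace and invoke $\int_{-W}^{W}\ve_f\ve_f^*\,df = \mB$ for the integral identities, and bound $\|\mS_K^*\ve_{-f}\|_2^2$ by $\|\mS_K^*\|^2\|\ve_{-f}\|_2^2$ for the pointwise estimate. The only cosmetic differences are that the paper first uses the symmetry $\psi(f)=\psi(-f)$ to replace $\ve_{-f}$ by $\ve_f$ before integrating, and computes $\int_\Omega\psi(f)\,df$ directly via $\int_\Omega\ve_f\ve_f^*\,df = \mId-\mB$ rather than by subtraction.
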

\begin{proof}
First, the spectral window is an even function since $$\psi(-f) = \dfrac{1}{K}\sum_{k = 0}^{K-1}\left|\sum_{n = 0}^{N-1}\vs_k[n]e^{j2\pi fn}\right|^2 = \dfrac{1}{K}\sum_{k = 0}^{K-1}\left|\sum_{n = 0}^{N-1}\vs_k[n]e^{-j2\pi fn}\right|^2 = \psi(f)$$ for all $f \in \R$. 

As a consequence, we can set $\mS_K = \begin{bmatrix}\vs_0 & \vs_1 & \cdots & \vs_{K-1}\end{bmatrix} \in \R^{N \times K}$ and write $\psi(f) = \psi(-f) = \dfrac{1}{K}\tr\left[\mS_K^*\ve_f\ve_f^*\mS_K\right]  = \dfrac{1}{K}\left\|\mS_K^*\ve_f\right\|_2^2$. Then, the integral of the spectral window $\psi(f)$ over $[-W,W]$ is
\begin{align*}
\int_{-W}^{W}\psi(f)\,df &= \int_{-W}^{W}\dfrac{1}{K}\tr\left[\mS_K^*\ve_f\ve_f^*\mS_K\right]\,df 
\\
&= \dfrac{1}{K}\tr\left[\mS_K^*\left(\int_{-W}^{W}\ve_f\ve_f^*\,df\right)\mS_K\right]
\\
&= \dfrac{1}{K}\tr\left[\mS_K^*\mB\mS_K\right] 
\\
&= \dfrac{1}{K}\tr\left[\mLambda_K\right]
\\
&= \dfrac{1}{K}\sum_{k = 0}^{K-1}\lambda_k 
\\
&= 1-\dfrac{1}{K}\sum_{k = 0}^{K-1}(1-\lambda_k) 
\\
&= 1 - \Sigma_K^{(1)},
\end{align*}
where we have used the notation $\mLambda_K = \diag(\lambda_0,\ldots,\lambda_{K-1})$. Similarly, the integral of the spectral window $\psi(f)$ over $\Omega = [-\tfrac{1}{2},-W) \cup (W,\tfrac{1}{2}]$ is
\begin{align*}
\int_{\Omega}\psi(f)\,df &= \int_{\Omega}\dfrac{1}{K}\tr\left[\mS_K^*\ve_f\ve_f^*\mS_K\right]\,df 
\\
&= \dfrac{1}{K}\tr\left[\mS_K^*\left(\int_{\Omega}\ve_f\ve_f^*\,df\right)\mS_K\right]
\\
&= \dfrac{1}{K}\tr\left[\mS_K^*(\mId-\mB)\mS_K\right] 
\\
&= \dfrac{1}{K}\tr\left[\mId-\mLambda_K\right] 
\\
&= \dfrac{1}{K}\sum_{k = 0}^{K-1}(1-\lambda_k) 
\\
&= \Sigma_K^{(1)}.
\end{align*}

Finally, the spectral window is bounded by $$0 \le \psi(f) = \dfrac{1}{K}\left\|\mS_K^*\ve_f\right\|_2^2 \le \dfrac{1}{K}\left\|\mS_K^*\right\|^2\left\|\ve_f\right\|_2^2 = \dfrac{N}{K},$$ where we have used the facts that $\|\mS_K^*\| = 1$ (because $\mS_K$ is orthonormal) and $\left\|\ve_f\right\|_2^2 = N$. 
\end{proof}

\subsection{Proof of Theorem~\ref{thm:Bias2Diff}}
By Lemma~\ref{lem:MultitaperExpectation}, we have $$\E\left[\hatSmt_K(f)\right] = \int_{-1/2}^{1/2}S(f-f')\psi(f')\,df'.$$ We now split the expression for the bias into two pieces as follows:
\begin{align*}
\Bias\left[\hatSmt_K(f)\right] &= \left|\E\hatSmt_K(f) - S(f)\right|
\\
&= \left|\int_{-1/2}^{1/2}S(f-f')\psi(f')\,df' - S(f)\right|
\\
&= \left|\int_{-W}^{W}S(f-f')\psi(f')\,df' + \int_{\Omega}S(f-f')\psi(f')\,df' - S(f)\right|
\\
&\le \underbrace{\left|\int_{-W}^{W}S(f-f')\psi(f')\,df' - S(f)\right|}_{\text{local bias}} + \underbrace{\left|\int_{\Omega}S(f-f')\psi(f')\,df'\right|}_{\text{broadband bias}}.
\end{align*}

Since $S(f)$ is twice continuously differentiable, for any $f' \in [-W,W]$, there exists a $\xi_{f'}$ between $f-f'$ and $f$ such that $$S(f-f') = S(f) - S'(f)f' + \dfrac{1}{2}S''(\xi_{f'})f'^2.$$ Then, since $\left|S''(\xi_{f'})\right| \le \displaystyle\max_{\xi \in [f-W,f+W]}|S''(\xi)| = M''_f$, $\int_{-W}^{W}\psi(f)\,df = 1-\Sigma^{(1)}_K$, and $0 \le \psi(f') \le \tfrac{N}{K}$ for all $f' \in \R$, we can bound the local bias as follows:
\begin{align*}
\left|\int_{-W}^{W}S(f-f')\psi(f')\,df' - S(f)\right| &= \left|\int_{-W}^{W}\left(S(f)-S'(f)f'+\dfrac{1}{2}S''(\xi_{f'})f'^2\right)\psi(f')\,df' - S(f)\right|
\\
&= \left|\int_{-W}^{W}S(f)\psi(f')\,df'-S(f) - \int_{-W}^{W}\underbrace{S'(f)f'\psi(f')}_{\text{odd w.r.t.} \ f'}\,df' +\dfrac{1}{2}\int_{-W}^{W}S''(\xi_{f'})f'^2\psi(f')\,df'\right|
\\
&= \left|\int_{-W}^{W}S(f)\psi(f')\,df'-S(f) - 0 +\dfrac{1}{2}\int_{-W}^{W}S''(\xi_{f'})f'^2\psi(f')\,df'\right|
\\
&\le \left|\int_{-W}^{W}S(f)\psi(f')\,df'-S(f)\right| + \left|\dfrac{1}{2}\int_{-W}^{W}S''(\xi_{f'})f'^2\psi(f')\,df'\right|
\\
&= \left|S(f)(1-\Sigma_K^{(1)})-S(f)\right| + \left|\dfrac{1}{2}\int_{-W}^{W}S''(\xi_{f'})f'^2\psi(f')\,df'\right|
\\
&\le S(f)\Sigma_K^{(1)} + \dfrac{1}{2}\int_{-W}^{W}|S''(\xi_{f'})||f'|^2\psi(f')\,df'
\\
&\le S(f)\Sigma_K^{(1)} + \dfrac{1}{2}\int_{-W}^{W}M''_f|f'|^2\dfrac{N}{K}\,df'
\\
&= S(f)\Sigma_K^{(1)} + \dfrac{M''_fNW^3}{3K}
\\
&\le M_f\Sigma_K^{(1)} + \dfrac{M''_fNW^3}{3K}
\end{align*}

Since $S(f') \le \displaystyle\max_{f' \in \R}S(f') = M$, we can bound the broadband bias as follows: 
\begin{align*}
\left|\int_{\Omega}S(f-f')\psi(f')\,df'\right| &= \int_{\Omega}S(f-f')\psi(f')\,df'
\\
&\le \int_{\Omega}M\psi(f')\,df'
\\
&= M\Sigma_K^{(1)}
\end{align*}

Combining the bounds on the local bias and broadband bias yields \begin{align*}\Bias\left[\hatSmt_K(f)\right] &\le \underbrace{\left|\int_{-W}^{W}S(f-f')\psi(f')\,df' - S(f)\right|}_{\text{local bias}} + \underbrace{\left|\int_{\Omega}S(f-f')\psi(f')\,df'\right|}_{\text{broadband bias}}
\\
&\le M_f\Sigma_K^{(1)}+\dfrac{M''_fNW^3}{3K} + M\Sigma_K^{(1)}
\\
&= \dfrac{M''_fNW^3}{3K} + (M+M_f)\Sigma_K^{(1)},
\end{align*}
which establishes Theorem~\ref{thm:Bias2Diff}.

\subsection{Proof of Theorem~\ref{thm:BiasNotDiff}}
Without the assumption that $S(f)$ is twice differentiable, we can still obtain a bound on the bias. Using the bounds $m_f = \displaystyle\min_{\xi \in [f-W,f+W]}S(\xi) \le S(f') \le \max_{\xi \in [f-W,f+W]}S(\xi) = M_f$ and $0 \le S(f') \le \max_{\xi \in \R}S(\xi) = M$ along with the integrals $\int_{-W}^{W}\psi(f)\,df = 1-\Sigma^{(1)}_K$ and $\int_{\Omega}\psi(f)\,df = \Sigma^{(1)}_K$, we can obtain the following upper bound on $\E\hatSmt_K(f)-S(f)$:
\vspace{0.00 in}
\begin{align*}
\E\hatSmt_K(f)-S(f) &= \int_{-1/2}^{1/2}S(f-f')\psi(f')\,df' - S(f') 
\\
&= \int_{-W}^{W}S(f-f')\psi(f')\,df' + \int_{\Omega}S(f-f')\psi(f')\,df' - S(f')
\\
&\le \int_{-W}^{W}M_f\psi(f')\,df' + \int_{\Omega}M\psi(f')\,df' - m_f
\\
&= M_f(1-\Sigma^{(1)}_K) + M\Sigma^{(1)}_K - m_f
\\
&= (M_f-m_f)(1-\Sigma^{(1)}_K) + (M-m_f)\Sigma^{(1)}_K
\\
&\le (M_f-m_f)(1-\Sigma^{(1)}_K) + M\Sigma^{(1)}_K,
\end{align*} 
\\
Similarly, we can obtain the following lower bound on $\E\hatSmt_K(f)-S(f)$:
\vspace{0.00 in}
\begin{align*}
\E\hatSmt_K(f)-S(f) &= \int_{-1/2}^{1/2}S(f-f')\psi(f')\,df' - S(f') 
\\
&= \int_{-W}^{W}S(f-f')\psi(f')\,df' + \int_{\Omega}S(f-f')\psi(f')\,df' - S(f')
\\
&\ge \int_{-W}^{W}m_f\psi(f')\,df' + \int_{\Omega}0\psi(f')\,df' - M_f
\\
&= m_f(1-\Sigma^{(1)}_K) + 0 - M_f
\\
&= -(M_f-m_f)(1-\Sigma^{(1)}_K) - M_f\Sigma^{(1)}_K
\\
&\ge -(M_f-m_f)(1-\Sigma^{(1)}_K) - M\Sigma^{(1)}_K.
\end{align*} 
\\
From the above two bounds, we have $$\Bias\left[\hatSmt_K(f)\right] = \left|\E\hatSmt_K(f)-S(f)\right| \le (M_f-m_f)(1-\Sigma_K^{(1)})+M\Sigma_K^{(1)},$$ which establishes Theorem~\ref{thm:BiasNotDiff}.

\subsection{Proof of Theorem~\ref{thm:Variance}}
Since $\hatSmt_K(f) = \dfrac{1}{K}\left\|\mS_K^*\mE_f^*\vx\right\|_2^2$ where $\vx \sim \mathcal{CN}(\vct{0},\mR)$, by Lemma~\ref{lem:ExpCovNormGaussians}, we have $$\Var\left[\hatSmt_K(f)\right] = \Var\left[\dfrac{1}{K}\left\|\mS_K^*\mE_f^*\vx\right\|_2^2\right] = \dfrac{1}{K^2}\Cov\left[\left\|\mS_K^*\mE_f^*\vx\right\|_2^2,\left\|\mS_K^*\mE_f^*\vx\right\|_2^2\right] = \dfrac{1}{K^2}\left\|\mS_K^*\mE_f^*\mR\mE_f\mS_K\right\|_F^2.$$ 
\\
We focus on bounding the Frobenius norm of $\mS_K^*\mE_f^*\mR\mE_f\mS_K$. To do this, we first split it into two pieces - an integral over $[-W,W]$ and an integral over $\Omega = [-\tfrac{1}{2},\tfrac{1}{2}]\setminus [-W,W]$: 
\\
\begin{align*}
\mS_K^*\mE_f^*\mR\mE_f\mS_K &= \mS_K^*\mE_f^*\left(\int_{-1/2}^{1/2}S(f')\ve_{f'}\ve_{f'}^*\,df'\right)\mE_f\mS_K
\\
&= \mS_K^*\left(\int_{-1/2}^{1/2}S(f')\mE_f^*\ve_{f'}\ve_{f'}^*\mE_f\,df'\right)\mS_K
\\
&= \mS_K^*\left(\int_{-1/2}^{1/2}S(f')\ve_{f'-f}\ve_{f'-f}^*\,df'\right)\mS_K
\\
&= \mS_K^*\left(\int_{f-1/2}^{f+1/2}S(f+f')\ve_{f'}\ve_{f'}^*\,df'\right)\mS_K
\\
&= \mS_K^*\left(\int_{-1/2}^{1/2}S(f+f')\ve_{f'}\ve_{f'}^*\,df'\right)\mS_K
\\
&= \mS_K^*\left(\int_{-W}^{W}S(f+f')\ve_{f'}\ve_{f'}^*\,df' + \int_{\Omega}S(f+f')\ve_{f'}\ve_{f'}^*\,df'\right)\mS_K
\\
&= \mS_K^*\left(\int_{-W}^{W}S(f+f')\ve_{f'}\ve_{f'}^*\,df'\right)\mS_K + \mS_K^*\left(\int_{\Omega}S(f+f')\ve_{f'}\ve_{f'}^*\,df'\right)\mS_K.
\end{align*}

We will proceed by bounding the Frobenius norm of the two pieces above, and then applying the triangle inequality. Since $S(f) \le \displaystyle\max_{f \in \R}S(f) = M$ for all $f \in \R$, we trivially have $$\mS_K^*\left(\int_{\Omega}S(f+f')\ve_{f'}\ve_{f'}^*\,df'\right)\mS_K \preceq \mS_K^*\left(\int_{\Omega}M\ve_{f'}\ve_{f'}^*\,df'\right)\mS_K = \mS_K^*[M(\mId-\mB)]\mS_K = M(\mId-\mLambda_K).$$
\\
Then, since $\mP \preceq \mQ$ implies $\|\mP\|_F \le \|\mQ\|_F$, we have $$\left\|\mS_K^*\left(\int_{\Omega}S(f+f')\ve_{f'}\ve_{f'}^*\,df'\right)\mS_K\right\|_F \le \left\|M(\mId-\mLambda_K)\right\|_F = \sqrt{\sum_{k = 0}^{K-1}M^2(1-\lambda_k)^2} = M\sqrt{K}\Sigma^{(2)}_K.$$

Obtaining a good bound on the first piece requires a more intricate argument. We define $\mathbbm{1}_W(f) = 1$ if $f \in [-W,W]$ and $\mathbbm{1}_W(f) = 0$ if $f \in [-
\tfrac{1}{2},\tfrac{1}{2}]\setminus[-W,W]$. For convenience, we also extend $\mathbbm{1}_W(f)$ to $f \in \R$ such that $\mathbbm{1}_W(f)$ is $1$-periodic. With this definition, we can write
\begin{align*}
\int_{-W}^{W}S(f+f')\ve_{f'}\ve_{f'}^*\,df' &= \int_{-1/2}^{1/2}S(f+f')\mathbbm{1}_W(f')\ve_{f'}\ve_{f'}^*\,df'
\\
&= \int_{0}^{1}S(f+f')\mathbbm{1}_W(f')\ve_{f'}\ve_{f'}^*\,df'
\\
&= \sum_{\ell = 0}^{N-1}\int_{\tfrac{\ell}{N}}^{\tfrac{\ell+1}{N}}S(f+f')\mathbbm{1}_W(f')\ve_{f'}\ve_{f'}^*\,df'
\\
&= \sum_{\ell = 0}^{N-1}\int_{0}^{\tfrac{1}{N}}S(f+f'+\tfrac{\ell}{N})\mathbbm{1}_W(f'+\tfrac{\ell}{N})\ve_{f'+\tfrac{\ell}{N}}\ve_{f'+\tfrac{\ell}{N}}^*\,df'
\\
&= \int_{0}^{\tfrac{1}{N}}\sum_{\ell = 0}^{N-1}S(f+f'+\tfrac{\ell}{N})\mathbbm{1}_W(f'+\tfrac{\ell}{N})\ve_{f'+\tfrac{\ell}{N}}\ve_{f'+\tfrac{\ell}{N}}^*\,df',
\end{align*}
where the second line holds since $S(f)$, $\mathbbm{1}_W(f)$, and $\ve_f$ are $1$-periodic. Now, for any $f' \in \R$, the vectors $\left\{\tfrac{1}{\sqrt{N}}\ve_{f'+\tfrac{\ell}{N}}\right\}_{\ell = 0}^{N-1}$ form an orthonormal basis of $\C^N$. Hence, we have $$\left\|\sum_{\ell = 0}^{N-1}a_{\ell}\ve_{f'+\tfrac{\ell}{N}}\ve_{f'+\tfrac{\ell}{N}}^*\right\|_F^2 = N^2\sum_{\ell = 0}^{N-1}|a_{\ell}|^2$$ for any choice of coefficients $\{a_{\ell}\}_{\ell = 0}^{N-1}$ and offset frequency $f' \in \R$. By applying this formula, along with the triangle inequality and the Cauchy-Shwarz Integral inequality, we obtain 
\begin{align*}
\left\|\int_{-W}^{W}S(f+f')\ve_{f'}\ve_{f'}^*\,df'\right\|_F^2 &= \left\|\int_{0}^{\tfrac{1}{N}}\sum_{\ell = 0}^{N-1}S(f+f'+\tfrac{\ell}{N})\mathbbm{1}_W(f'+\tfrac{\ell}{N})\ve_{f'+\tfrac{\ell}{N}}\ve_{f'+\tfrac{\ell}{N}}^*\,df'\right\|_F^2
\\
&\le \left(\int_{0}^{\tfrac{1}{N}}\left\|\sum_{\ell = 0}^{N-1}S(f+f'+\tfrac{\ell}{N})\mathbbm{1}_W(f'+\tfrac{\ell}{N})\ve_{f'+\tfrac{\ell}{N}}\ve_{f'+\tfrac{\ell}{N}}^*\right\|_F\,df'\right)^2
\\
&= \left(\int_{0}^{\tfrac{1}{N}}N\left(\sum_{\ell = 0}^{N-1}S(f+f'+\tfrac{\ell}{N})^2\mathbbm{1}_W(f'+\tfrac{\ell}{N})^2\right)^{1/2}\,df'\right)^2
\\
&\le \left(\int_{0}^{\tfrac{1}{N}}N^2\,df'\right)\left(\int_{0}^{\tfrac{1}{N}}\sum_{\ell = 0}^{N-1}S(f+f'+\tfrac{\ell}{N})^2\mathbbm{1}_W(f'+\tfrac{\ell}{N})^2\,df'\right)
\\
&= N\sum_{\ell = 0}^{N-1}\int_{0}^{\tfrac{1}{N}}S(f+f'+\tfrac{\ell}{N})^2\mathbbm{1}_W(f'+\tfrac{\ell}{N})^2\,df'
\\
&= N\sum_{\ell = 0}^{N-1}\int_{\tfrac{\ell}{N}}^{\tfrac{\ell+1}{N}}S(f+f')^2\mathbbm{1}_W(f')^2\,df'
\\
&= N\int_{0}^{1}S(f+f')^2\mathbbm{1}_W(f')^2\,df'
\\
&= N\int_{-1/2}^{1/2}S(f+f')^2\mathbbm{1}_W(f')^2\,df'
\\
&= N\int_{-W}^{W}S(f+f')^2\,df'
\\
&= 2NWR_f^2
\end{align*}
\\
Since $\mS_K \in \R^{N \times K}$ is orthonormal, $\|\mS_K^*\mX\mS_K\|_F \le \|\mX\|_F$ for any Hermitian matrix $\mX \in \C^{N \times N}$. Hence, $$\left\|\mS_K^*\int_{-W}^{W}S(f+f')\ve_{f'}\ve_{f'}^*\,df'\mS_K\right\|_F \le \left\|\int_{-W}^{W}S(f+f')\ve_{f'}\ve_{f'}^*\,df'\right\|_F \le R_f\sqrt{2NW}.$$

Finally, by applying the two bounds we've derived, we obtain
\begin{align*}
\left\|\mS_K^*\mE_f^*\mR\mE_f\mS_K\right\|_F &= \left\|\mS_K^*\left(\int_{-W}^{W}S(f+f')\ve_{f'}\ve_{f'}^*\,df'\right)\mS_K + \mS_K^*\left(\int_{\Omega}S(f+f')\ve_{f'}\ve_{f'}^*\,df'\right)\mS_K\right\|_F
\\
&\le \left\|\mS_K^*\left(\int_{-W}^{W}S(f+f')\ve_{f'}\ve_{f'}^*\,df'\right)\mS_K\right\|_F + \left\|\mS_K^*\left(\int_{\Omega}S(f+f')\ve_{f'}\ve_{f'}^*\,df'\right)\mS_K\right\|_F
\\
&\le R_f\sqrt{2NW} + M\sqrt{K}\Sigma^{(2)}_K,
\end{align*}
and thus,
$$\Var\left[\hatSmt_K(f)\right] = \dfrac{1}{K^2}\left\|\mS_K^*\mE_f^*\mR\mE_f\mS_K\right\|_F^2 \le \dfrac{1}{K}\left(R_f\sqrt{\dfrac{2NW}{K}} + M\Sigma^{(2)}_K\right)^2.$$

\subsection{Proof of Theorem~\ref{thm:Covariance}}
Since $\hatSmt_K(f) = \dfrac{1}{K}\left\|\mS_K^*\mE_f^*\vx\right\|_2^2$ where $\vx \sim \mathcal{CN}(\vct{0},\mR)$, by Lemma~\ref{lem:ExpCovNormGaussians}, we have $$\Cov\left[\hatSmt_K(f_1),\hatSmt_K(f_2)\right] = \Cov\left[\dfrac{1}{K}\left\|\mS_K^*\mE_{f_1}^*\vx\right\|_2^2 , \dfrac{1}{K}\left\|\mS_K^*\mE_{f_2}^*\vx\right\|_2^2\right] = \dfrac{1}{K^2}\left\|\mS_K^*\mE_{f_1}^*\mR\mE_{f_2}\mS_K\right\|_F^2.$$
We focus on bounding the Frobenius norm of $\mS_K^*\mE_{f_1}^*\mR\mE_{f_2}\mS_K$. To do this, we first split it into three pieces - an integral over $[f_1-W,f_1+W]$, an integral over $[f_2-W,f_2+W]$, and an integral over $\Omega' = [-\tfrac{1}{2},\tfrac{1}{2}]\setminus([f_1-W,f_1+W] \cup [f_2-W,f_2+W])$: 
\begin{align*}
\mS_K^*\mE_{f_1}^*\mR\mE_{f_2}\mS_K &= \mS_K^*\mE_{f_1}^*\left(\int_{-1/2}^{1/2}S(f)\ve_f\ve_f^*\,df\right)\mE_{f_2}\mS_K
\\
&= \int_{-1/2}^{1/2}S(f)\mS_K^*\mE_{f_1}^*\ve_f\ve_f^*\mE_{f_2}\mS_K\,df
\\
&= \int_{-1/2}^{1/2}S(f)\mS_K^*\ve_{f-f_1}\ve_{f-f_2}^*\mS_K\,df
\\
&= \int_{f_1-W}^{f_1+W}S(f)\mS_K^*\ve_{f-f_1}\ve_{f-f_2}^*\mS_K\,df + \int_{f_2-W}^{f_2+W}S(f)\mS_K^*\ve_{f-f_1}\ve_{f-f_2}^*\mS_K\,df 
\\
& \ \ \ \ \ + \int_{\Omega'}S(f)\mS_K^*\ve_{f-f_1}\ve_{f-f_2}^*\mS_K\,df.
\end{align*}

By using the triangle inequality, the identity $\|\vx\vy^*\|_F = \|\vx\|_2\|\vy\|_2$ for vectors $\vx,\vy$, the Cauchy-Shwarz Inequality, and the facts that $\psi(f) \le \tfrac{N}{K}$ and $\int_{\Omega}\psi(f)\,df = \Sigma^{(1)}_K$, we can bound the Frobenius norm of the first piece by
\begin{align*}
\left\|\int_{f_1-W}^{f_1+W}S(f)\mS_K^*\ve_{f-f_1}\ve_{f-f_2}^*\mS_K\,df\right\|_F^2 &\le \left(\int_{f_1-W}^{f_1+W}\left\|S(f)\mS_K^*\ve_{f-f_1}\ve_{f-f_2}^*\mS_K\right\|_F\,df\right)^2
\\
&\le \left(\int_{f_1-W}^{f_1+W}S(f)\left\|\mS_K^*\ve_{f-f_1}\right\|_2\left\|\mS_K^*\ve_{f-f_2}\right\|_2\,df\right)^2
\\
&\le \left(\int_{f_1-W}^{f_1+W}S(f)^2\left\|\mS_K^*\ve_{f-f_1}\right\|_2^2\,df \right)\left(\int_{f_1-W}^{f_1+W}\left\|\mS_K^*\ve_{f-f_2}\right\|_2^2\,df\right)
\\
&= \left(\int_{-W}^{W}S(f+f_1)^2\left\|\mS_K^*\ve_f\right\|_2^2\,df \right)\left(\int_{f_1-f_2-W}^{f_1-f_2+W}\left\|\mS_K^*\ve_f\right\|_2^2\,df\right)
\\
&= \left(\int_{-W}^{W}S(f+f_1)^2K\psi(f)\,df \right)\left(\int_{f_1-f_2-W}^{f_1-f_2+W}K\psi(f)\,df\right)
\\
&\le \left(\int_{-W}^{W}S(f+f_1)^2 \cdot N\,df \right)\left(\int_{\Omega}K\psi(f)\,df\right)
\\
&= 2NWR_{f_1}^2 \cdot K\Sigma^{(1)}_K
\\
&= R_{f_1}^2 \cdot 2NWK\Sigma^{(1)}_K
\end{align*}
In a nearly identical manner, we can bound the second piece by $$\left\|\int_{f_2-W}^{f_2+W}S(f)\mS_K^*\ve_{f-f_1}\ve_{f-f_2}^*\mS_K\,df\right\|_F^2 \le R_{f_2}^2 \cdot 2NWK\Sigma^{(1)}_K.$$

The third piece can also be bounded in a similar manner, but the details are noticeably different, so we show the derivation:
\begin{align*}
\left\|\int_{\Omega'}S(f)\mS_K^*\ve_{f-f_1}\ve_{f-f_2}^*\mS_K\,df\right\|_F^2 &\le \left(\int_{\Omega'}\left\|S(f)\mS_K^*\ve_{f-f_1}\ve_{f-f_2}^*\mS_K\right\|_F\,df\right)^2
\\
&\le \left(\int_{\Omega'}S(f)\left\|\mS_K^*\ve_{f-f_1}\right\|_2\left\|\mS_K^*\ve_{f-f_2}\right\|_2\,df\right)^2
\\
&\le \left(\int_{\Omega'}M\left\|\mS_K^*\ve_{f-f_1}\right\|_2\left\|\mS_K^*\ve_{f-f_2}\right\|_2\,df\right)^2
\\
&\le M^2\left(\int_{\Omega'}\left\|\mS_K^*\ve_{f-f_1}\right\|_2^2\,df\right)\left(\int_{\Omega'}\left\|\mS_K^*\ve_{f-f_2}\right\|_2^2\,df\right)
\\
&= M^2\left(\int_{\Omega'}K\psi(f-f_1)\,df\right)\left(\int_{\Omega'}K\psi(f-f_2)\,df\right)
\\
&\le M^2\left(\int_{\Omega'_1}K\psi(f-f_1)\,df\right)\left(\int_{\Omega'_2}K\psi(f-f_2)\,df\right)
\\
&= M^2\left(\int_{\Omega}K\psi(f)\,df\right)\left(\int_{\Omega}K\psi(f)\,df\right)
\\
&= M^2\left(K\Sigma^{(1)}_K\right)^2
\end{align*}
where $\Omega'_1 = [-\tfrac{1}{2},\tfrac{1}{2}]\setminus[f_1-W,f_1+W]$ and $\Omega'_2 = [-\tfrac{1}{2},\tfrac{1}{2}]\setminus[f_2-W,f_2+W]$.

Finally, by applying the three bounds we've derived, we obtain
\begin{align*}
\left\|\mS_K^*\mE_{f_1}^*\mR\mE_{f_2}\mS_K\right\|_F &\le \left\|\int_{f_1-W}^{f_1+W}S(f)\mS_K^*\ve_{f-f_1}\ve_{f-f_2}^*\mS_K\,df\right\|_F + \left\|\int_{f_2-W}^{f_2+W}S(f)\mS_K^*\ve_{f-f_1}\ve_{f-f_2}^*\mS_K\,df\right\|_F
\\
& \ \ \ \ \ + \left\|\int_{\Omega'}S(f)\mS_K^*\ve_{f-f_1}\ve_{f-f_2}^*\mS_K\,df\right\|_F
\\
&\le R_{f_1}\sqrt{2NWK\Sigma^{(1)}_K} + R_{f_2}\sqrt{2NWK\Sigma^{(1)}_K} + MK\Sigma^{(1)}_K,
\end{align*}
and thus, 
$$0 \le \Cov\left[\hatSmt_K(f_1),\hatSmt_K(f_2)\right] = \dfrac{1}{K^2}\left\|\mS_K^*\mE_{f_1}^*\mR\mE_{f_2}\mS_K\right\|_F^2 \le \left((R_{f_1}+R_{f_2})\sqrt{\dfrac{2NW}{K}\Sigma^{(1)}_K} + M\Sigma^{(1)}_K\right)^2.$$

\subsection{Proof of Theorem~\ref{thm:Concentration}}
Since $\hatSmt_K(f) = \dfrac{1}{K}\left\|\mS_K^*\mE_f^*\vx\right\|_2^2$ where $\vx \sim \mathcal{CN}(\vct{0},\mR)$, by Lemma~\ref{lem:GaussianConcentration}, we have $$\P\left\{\hatSmt_K(f) \ge \beta\E\hatSmt_K(f)\right\} \le \beta^{-1}\e^{-\kappa_f(\beta-1-\ln \beta)} \quad \text{for} \quad \beta > 1,$$ and $$\P\left\{\hatSmt_K(f) \le \beta\E\hatSmt_K(f)\right\} \le \e^{-\kappa_f(\beta-1-\ln \beta)} \quad \text{for} \quad 0 < \beta < 1,$$ where $$\kappa_f = \dfrac{\tr\left[\mS_K^*\mE_f^*\mR\mE_f\mS_K\right]}{\left\|\mS_K^*\mE_f^*\mR\mE_f\mS_K\right\|}.$$

We can get an upper bound on $\mS_K^*\mE_f^*\mR\mE_f\mS_K$ in the Loewner ordering by splitting it into two pieces as done in the proof of Theorem~\ref{thm:Variance}, and then bounding each piece:
\begin{align*}
\mS_K^*\mE_f^*\mR\mE_f\mS_K &= \mS_K^*\left(\int_{-W}^{W}S(f+f')\ve_{f'}\ve_{f'}^*\,df'\right)\mS_K + \mS_K^*\left(\int_{\Omega}S(f+f')\ve_{f'}\ve_{f'}^*\,df'\right)\mS_K
\\
&\preceq \mS_K^*\left(\int_{-W}^{W}M_f\ve_{f'}\ve_{f'}^*\,df'\right)\mS_K + \mS_K^*\left(\int_{\Omega}M\ve_{f'}\ve_{f'}^*\,df'\right)\mS_K
\\
&= \mS_K^*\left(M_f\mB\right)\mS_K + \mS_K^*\left(M(\mId-\mB)\right)\mS_K
\\
&= M_f\mLambda_K + M(\mId-\mLambda_K)
\\
&= M_f\mId + (M-M_f)(\mId-\mLambda_K).
\end{align*}
Then, by using the fact that $\mP \preceq \mQ \implies \|\mP\| \le \|\mQ\|$ for PSD matrices $\mP$ and $\mQ$, we can bound, $$\left\|\mS_K^*\mE_f^*\mR\mE_f\mS_K\right\| \le \left\|M_f\mId + (M-M_f)(\mId-\mLambda_K)\right\| = M_f+(M-M_f)(1-\lambda_{K-1}).$$ 

We can also get a lower bound on $\tr[\mS_K^*\mE_f^*\mR\mE_f\mS_K] = K\E\left[\hatSmt_K(f)\right]$ by using the formula for $E\left[\hatSmt_K(f)\right]$ from Lemma~\ref{lem:MultitaperExpectation} along with the properties of the spectral window derived in Lemma~\ref{lem:SpectralWindow} as follows:
\begin{align*}
\tr[\mS_K^*\mE_f^*\mR\mE_f\mS_K] &= K\E\left[\hatSmt_K(f)\right]
\\
&= K\int_{-1/2}^{1/2}S(f-f')\psi(f')\,df'
\\
&\ge K\int_{-W}^{W}S(f-f')\psi(f')\,df'
\\
&= K\int_{-W}^{W}S(f-f')\left[\dfrac{N}{K}-\left(\dfrac{N}{K}-\psi(f')\right)\right]\,df'
\\
&= K\int_{-W}^{W}S(f-f')\dfrac{N}{K}\,df' - K\int_{-W}^{W}S(f-f')\left(\dfrac{N}{K}-\psi(f')\right)\,df'
\\
&= N\int_{f-W}^{f+W}S(f')\,df' - \int_{-W}^{W}S(f-f')(N-K\psi(f'))\,df'
\\
&\ge N\int_{f-W}^{f+W}S(f')\,df' - \int_{-W}^{W}M_f(N-K\psi(f'))\,df'
\\
&= 2NWA_f - \left(2NW-K\left(1-\Sigma^{(1)}_K\right)\right)M_f
\\
&= K\left(1-\Sigma^{(1)}_K\right)M_f - 2NW(M_f-A_f)
\end{align*}

Combining the upper bound on $\left\|\mS_K^*\mE_f^*\mR\mE_f\mS_K\right\|$ with the lower bound on $\tr[\mS_K^*\mE_f^*\mR\mE_f\mS_K]$, yields $$\kappa_f = \dfrac{\tr\left[\mS_K^*\mE_f^*\mR\mE_f\mS_K\right]}{\left\|\mS_K^*\mE_f^*\mR\mE_f\mS_K\right\|} \ge \dfrac{K\left(1-\Sigma^{(1)}_K\right)M_f - 2NW(M_f-A_f)}{M_f+\left(M-M_f\right)(1-\lambda_{K-1})}.$$

\section{Proof of Results in Section~\ref{sec:FastAlgorithms}}
\label{sec:FastAlgorithmsProofs}

\subsection{Fast algorithm for computing $\Psi(f)$ at grid frequencies}
\label{sec:FastPsi}
To begin developing our fast approximations for $\hatS^{\mt}_{K}(f)$, we first show that an eigenvalue weighted sum of $N$ tapered spectral estimates can be evaluated at a grid of frequencies $f \in [L]/L$ where $L \ge 2N$ in $O(L \log L)$ operations and using $O(L)$ memory. 
\begin{lemma}
\label{lem:FastPsi}
For any vector $\vx \in \C^N$ and any integer $L \ge 2N$, the quantity $$\Psi(f) := \sum_{k = 0}^{N-1}\lambda_k\hatS_k(f) \quad \text{where} \quad \hatS_k(f) = \left|\sum_{n = 0}^{N-1}\vs_k[n]\vx[n]e^{-j2\pi fn}\right|^2$$ can be evaluated at the grid frequencies $f \in [L]/L$ in $O(L \log L)$ operations and using $O(L)$ memory. 
\end{lemma}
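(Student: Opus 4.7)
The plan is to rewrite $\Psi(f)$ as a quadratic form in $\vx$ involving the prolate matrix $\mB$, and then exploit the Toeplitz structure of $\mB$ to realize the evaluation of $\Psi$ on the $L$-point frequency grid as a short pipeline of FFTs. First, using $\hatS_k(f) = |\vs_k^\T \mE_f^* \vx|^2 = \vx^* \mE_f \vs_k \vs_k^\T \mE_f^* \vx$ and the eigendecomposition $\mB = \sum_{k=0}^{N-1} \lambda_k \vs_k \vs_k^\T$, I would obtain $\Psi(f) = \vx^* \mE_f \mB \mE_f^* \vx$. This representation is the crucial step because it eliminates all Slepian vectors from the picture, replacing them with the single (data-independent, easy-to-describe) Toeplitz matrix $\mB$.

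Next, I would expand this quadratic form using $\mB[m,n] = b[m-n]$, where $b[\ell] = \sin(2\pi W \ell)/(\pi \ell)$ for $\ell \neq 0$ and $b[0] = 2W$. Substituting $\ell = m-n$ gives
\begin{equation*}
\Psi(f) = \sum_{\ell = -(N-1)}^{N-1} b[\ell]\, \rho[\ell]\, e^{j 2\pi f \ell},
\qquad
\rho[\ell] := \sum_{\substack{n,\,n+\ell \in [N]}} \overline{\vx[n+\ell]}\,\vx[n],
\end{equation*}
so that $\Psi(f)$ is the discrete-time Fourier transform of the finitely-supported sequence $b[\ell]\rho[\ell]$. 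Evaluating this DTFT on the grid $f \in [L]/L$ is then an $L$-point FFT, provided we can produce the sequence $b[\ell]\rho[\ell]$ in a length-$L$ array without aliasing.

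To form the autocorrelation $\rho$, I would use the standard zero-pad--FFT--squared-magnitude--FFT identity: let $\vz = \mZ\vx$ be the length-$L$ zero-padding of $\vx$, so that $\mF|\mF\vz|^2$ yields $L$ times the circular autocorrelation of $\vz$. The hypothesis $L \ge 2N$ guarantees that the nonzero lags $\ell \in \{-(N-1),\ldots,N-1\}$ fit into the length-$L$ circular buffer without overlap, so the circular autocorrelation of $\vz$ agrees with the linear autocorrelation $\rho$ stored at positions $0,1,\ldots,N-1$ and (with the appropriate $L$-periodic wraparound) $L-(N-1),\ldots,L-1$, and vanishes in the middle. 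The vector $\vb \in \R^L$ is chosen in precisely the same way (it is the first column of \emph{any} $L \times L$ circulant extension of $\mB$, which agrees with $b[\ell]$ on the outer band and is zero in the middle band), so that the pointwise product $\vb \circ \mF|\mF\mZ\vx|^2$ stores $L \cdot b[\ell]\rho[\ell]$ in the same wrapped layout. An inverse FFT then produces $\bigl(\Psi(0/L),\ldots,\Psi((L-1)/L)\bigr)^\T$, which is exactly the identity claimed in the lemma statement.

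Finally, I would add up the cost: one length-$L$ FFT applied to $\mZ\vx$, one pointwise magnitude-squared, one length-$L$ FFT, one pointwise product with the precomputed $\vb$, and one length-$L$ inverse FFT---three FFTs and $O(L)$ pointwise work, giving $O(L \log L)$ operations and $O(L)$ memory. The only mild obstacle is the bookkeeping to verify that the circular-versus-linear autocorrelation wraparound introduces no cross-terms; this is exactly where the assumption $L \ge 2N$ is used, and it also explains why the middle entries of $\vb$ can be set arbitrarily without affecting the output.
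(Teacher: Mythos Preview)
Your proposal is correct and takes essentially the same approach as the paper: both first reduce to $\Psi(f)=\vx^*\mE_f\mB\mE_f^*\vx$ via the eigendecomposition of $\mB$, then exploit the Toeplitz structure (you via the windowed-autocorrelation/DTFT representation and the zero-pad FFT autocorrelation trick, the paper via circulant embedding and DFT diagonalization together with modulation/shift identities) to arrive at the identical formula $\mF^{-1}\!\bigl(\vb\circ\mF|\mF\mZ\vx|^2\bigr)$, with $L\ge 2N$ serving the same no-aliasing role in each. The two derivations are notational variants of the same computation.
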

\begin{proof}
Using eigendecomposition, we can write $\mB = \mS\mLambda\mS^*$, where $$\mS =  \begin{bmatrix}\vs_0 & \vs_1 & \cdots & \vs_{N-1}\end{bmatrix}$$ and $$\mLambda = \diag(\lambda_0,\lambda_1,\ldots,\lambda_{N-1}).$$ For any $f \in \R$, we let $\mE_f \in \C^{N \times N}$ be a diagonal matrix with diagonal entries $$\mE_f[n,n] = e^{j2\pi fn} \quad \text{for} \quad n \in [N].$$ With this definition, $\Psi(f)$ can be written as
\begin{align*}
\Psi(f) &= \sum_{k = 0}^{N-1}\lambda_k\hatS_k(f) 
\\
&= \sum_{k = 0}^{N-1}\lambda_k\left|\sum_{n = 0}^{N-1}\vs_k[n]\vx[n]e^{-j2\pi fn}\right|^2 
\\
&= \sum_{k = 0}^{N-1}\lambda_k\left|\vs_k^*\mE_f^*\vx\right|^2
\\
&= \sum_{k = 0}^{N-1}\mLambda[k,k]\left|(\mS^*\mE_f^*\vx)[k]\right|^2 
\\
&=  \vx^*\mE_f\mS\mLambda\mS^*\mE_f^*\vx 
\\
&= \vx^*\mE_f\mB\mE_f^*\vx
\end{align*}
\\
This gives us a a formula for $\Psi(f) = \sum_{k = 0}^{N-1}\lambda_k\hatS_k(f)$ which does not require computing any of the Slepian tapers. We will now use the fact that $\mB$ is a Toeplitz matrix to efficiently compute $\Psi(\tfrac{\ell}{L})$ for all $\ell \in [L]/L$.  

First, note that we can ``extend'' $\mB$ to a larger circulant matrix, which is diagonalized by a Fourier Transform matrix. Specifically, define a vector of sinc samples $\vb \in \R^{L}$ by $$\vb[\ell] = \begin{cases}\dfrac{\sin[2\pi W\ell]}{\pi \ell} & \text{if} \ \ell \in \{0,\ldots,N-1\} \\ 0 & \text{if} \ \ell \in \{N,\ldots,L-N\} \\ \dfrac{\sin[2\pi W(L-\ell)]}{\pi(L-\ell)} & \text{if} \ \ell \in \{L-N+1,\ldots,L-1\}\end{cases},$$ and let $\mBext \in \R^{L \times L}$ be defined by $$\mBext[m,n] = \vb[m-n \pmod{L}] \ \text{for} \ m,n \in [L].$$ 
\\
It is easy to check that $\mBext[m,n] = \mB[m,n] \ \text{for all} \ m,n \in [N]$, i.e., the upper-left $N \times N$ submatrix of $\mBext$ is $\mB$. Hence, we can write $$\mB = \mZ^*\mBext\mZ,$$ where $$\mZ = \begin{bmatrix}\mId_{N \times N} \\ \mtx{0}_{(L-N) \times N}\end{bmatrix} \in \R^{L \times N}$$ is a zeropadding matrix. Since $\mBext$ is a circulant matrix whose first column is $\vb$, we can write $$\mBext = \mF^{-1}\diag(\mF\vb)\mF$$ where $\mF \in \C^{L \times L}$ is an FFT matrix, i.e., $$\mF[m,n] = e^{-j2\pi mn/L} \ \text{for} \ m,n \in [L].$$ Note that with this normalization, the inverse FFT satisfies $$\mF^{-1} = \dfrac{1}{L}\mF^*,$$ as well as the conjugation identity $$\mF^{-1}\vy = \dfrac{1}{L}\overline{\mF\overline{\vy}} \quad \text{for all} \quad \vy \in \C^N.$$
\\
Next, for any $f \in \R$, let $\mD_f \in \C^{L \times L}$ be a diagonal matrix with diagonal entries $$\mD_f[m,m] = e^{j2\pi fm} \quad \text{for} \quad m \in [L],$$ and for each $\ell \in [L]$, let $\mC_{\ell} \in \C^{L \times L}$ be a cyclic shift matrix, i.e., $$\mC_{\ell}[m,n] = \begin{cases}1 & \text{if} \ n-m \equiv \ell \pmod{L} \\ 0 & \text{otherwise}\end{cases}.$$
\\
Since $\mE_f$ and $\mD_f$ are both diagonal matrices and $\mE_f[n,n] = \mD_f[n,n]$ for $n \in [N]$, we have 
$$\mZ\mE_f^* = \mD_f^*\mZ \quad \text{and} \quad \mE_f\mZ^* = \mZ^*\mD_f$$
for all $f \in \R$. Also, it is easy to check that cyclically shifting each column of $\mF$ by $\ell$ indices is equivalent to modulating the rows of $\mF$, or more specifically 
$$\mF\mD_{\ell/L}^* = \mC_{\ell}\mF \quad \text{and} \quad \mD_{\ell/L}\mF^* = \mF^*\mC_{\ell}^*$$ for all $\ell \in [L]$. Additionally, for any vectors $\vp, \vq \in \C^{L}$, we will denote $\vp \circ \vq \in \C^{L}$ to be the pointwise multiplication of $\vp$ and $\vq$, i.e., $$(\vp \circ \vq)[\ell] = \vp[\ell]\vq[\ell] \quad \text{for} \quad \ell \in [L],$$ and $\vp \circledast \vq \in \C^{L}$ to be the circluar cross-correlation of $\vp$ and $\vq$, i.e., $$(\vp \circledast \vq)[\ell] = \sum_{\ell' = 0}^{L-1}\overline{\vp[\ell']}\vq[\ell'+\ell \pmod{L}] \quad \text{for} \quad \ell \in [L].$$ Note that the circular cross-correlation of $\vp$ and $\vq$ can be computed using FFTs via the formula $$\vp \circledast \vq = \mF^{-1}(\overline{\mF\vp} \circ \mF\vq).$$ We will also use the notation $|\vp|^2 = \overline{\vp} \circ \vp$ for convenience.

We can now manipulate our formula for $\Psi(\tfrac{\ell}{L})$ as follows
\begin{align*}
\Psi(\tfrac{\ell}{L}) &:= \vx^*\mE_{\ell/L}\mB\mE_{\ell/L}^*\vx
\\
&= \vx^*\mE_{\ell/L}\mZ^*\mBext\mZ\mE_{\ell/L}^*\vx
\\
&= \vx^*\mE_{\ell/L}\mZ^*\mF^{-1}\diag(\mF\vb)\mF\mZ\mE_{\ell/L}^*\vx
\\
&= \dfrac{1}{L}\vx^*\mE_{\ell/L}\mZ^*\mF^*\diag(\mF\vb)\mF\mZ\mE_{\ell/L}^*\vx
\\
&= \dfrac{1}{L}\vx^*\mZ^*\mD_{\ell/L}\mF^*\diag(\mF\vb)\mF\mD_{\ell/L}^*\mZ\vx 
\\
&= \dfrac{1}{L}\vx^*\mZ^*\mF^*\mC_{\ell}^*\diag(\mF\vb)\mC_{\ell}\mF\mZ\vx
\\
&= \dfrac{1}{L}\sum_{\ell' = 0}^{L-1} (\mF\vb)[\ell'] \cdot \left|\left(\mC_{\ell}\mF\mZ\vx\right)[\ell']\right|^2
\\
&= \dfrac{1}{L}\sum_{\ell' = 0}^{L-1} (\mF\vb)[\ell'] \cdot \left|\left(\mF\mZ\vx\right)[\ell'+\ell \pmod{L}]\right|^2
\\
&= \left(\dfrac{1}{L}\overline{\mF\vb} \circledast \left|\mF\mZ\vx\right|^2\right)[\ell].
\\
&= \left(\mF^{-1}\left(\dfrac{1}{L}\overline{\mF\overline{\mF\vb}} \circ \mF\left|\mF\mZ\vx\right|^2\right)\right)[\ell]
\\
&= \left(\mF^{-1}\left(\mF^{-1}\mF\vb \circ \mF\left|\mF\mZ\vx\right|^2\right)\right)[\ell]
\\
&= \left(\mF^{-1}\left(\vb \circ \mF\left|\mF\mZ\vx\right|^2\right)\right)[\ell].
\end{align*} 
\\
Therefore, $$\begin{bmatrix}\Psi(\tfrac{0}{L}) & \Psi(\tfrac{1}{L}) & \cdots & \Psi(\tfrac{L-2}{L}) & \Psi(\tfrac{L-1}{L}) \end{bmatrix}^T = \mF^{-1}\left(\vb \circ \mF\left|\mF\mZ\vx\right|^2\right).$$ 

So to compute $\Psi(f)$ at all grid frequencies $f \in [L]/L$, we only need to compute $\mF^{-1}\left(\vb \circ \mF\left|\mF\mZ\vx\right|^2\right)$, which can be done in $O(L \log L)$ operations using $O(L)$ memory via three length-$L$ FFTs/IFFTs and a few pointwise operations on vectors of length $L$. 
\end{proof}
\subsection{Approximations for general multitaper spectral estimates}
\label{sec:FastApproxLemma}
Next, we present a lemma regarding approximations to spectral estimates which use orthonormal tapers. 

\begin{lemma}
\label{lem:Approx}
Let $\vx \in \C^N$ be a vector of $N$ equispaced samples, and let $\{\vv_k\}_{k = 0}^{N-1}$ be any orthonormal set of tapers in $\C^N$. For each $k \in [N]$, define a tapered spectral estimate $$V_k(f) = \left|\sum_{n = 0}^{N-1}\vv_k[n]\vx[n]e^{-j2\pi fn}\right|^2.$$ 
Also, let $\left\{\gamma_k\right\}_{k = 0}^{N-1}$ and $\left\{\tildegamma_k\right\}_{k = 0}^{N-1}$ be real coefficients, and then define a multitaper spectral estimate $\widehat{V}(f)$ and an approximation $\widetilde{V}(f)$ by $$\widehat{V}(f) = \sum_{k = 0}^{N-1}\gamma_k V_k(f) \quad \text{and} \quad  \widetilde{V}(f) = \sum_{k = 0}^{N-1}\widetilde{\gamma}_k V_k(f).$$
Then, for any frequency $f \in \R$, we have $$\left|\widehat{V}(f) - \widetilde{V}(f)\right| \le \left(\max_{k}\left|\gamma_k-\tildegamma_k\right|\right)\|x\|_2^2.$$
\end{lemma}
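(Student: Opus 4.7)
The plan is to bound the difference pointwise by expanding it, pulling out the worst coefficient gap, and then using a Parseval-type identity coming from the orthonormality of the tapers. Specifically, writing
\[
\widehat{V}(f) - \widetilde{V}(f) = \sum_{k=0}^{N-1}(\gamma_k - \tildegamma_k) V_k(f)
\]
and using $V_k(f) \ge 0$, the triangle inequality gives
\[
\left|\widehat{V}(f) - \widetilde{V}(f)\right| \;\le\; \sum_{k=0}^{N-1} |\gamma_k - \tildegamma_k|\, V_k(f) \;\le\; \Bigl(\max_{k}|\gamma_k - \tildegamma_k|\Bigr) \sum_{k=0}^{N-1} V_k(f).
\]
So the entire content of the lemma reduces to showing that $\sum_{k=0}^{N-1} V_k(f) = \|\vx\|_2^2$ for every $f$.

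For this identity, I would mimic the manipulation used earlier in the ``linear algebra view'' subsection. Let $\mV = \begin{bmatrix}\vv_0 & \cdots & \vv_{N-1}\end{bmatrix} \in \C^{N \times N}$; since the $\vv_k$ are orthonormal, $\mV$ is unitary. As in the derivation of $\hatSmt_K$, each tapered estimate can be written as $V_k(f) = |\vv_k^* \mE_f^* \vx|^2$, so
\[
\sum_{k=0}^{N-1} V_k(f) \;=\; \sum_{k=0}^{N-1} \bigl|(\mV^* \mE_f^* \vx)[k]\bigr|^2 \;=\; \|\mV^* \mE_f^* \vx\|_2^2 \;=\; \|\mE_f^* \vx\|_2^2 \;=\; \|\vx\|_2^2,
\]
where the last two equalities use that $\mV$ and $\mE_f$ are unitary.

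Combining the two displays yields the claimed bound. There is no serious obstacle here; the only point to note carefully is that the argument uses $V_k(f) \ge 0$ (so the coefficient differences cannot conspire to cancel against the $V_k$) and the full orthonormal basis assumption (so that $\mV$ is unitary on all of $\C^N$, not merely an isometry into a subspace). Both are given in the hypothesis.
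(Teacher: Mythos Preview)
Your proof is correct and essentially the same as the paper's: the paper writes $\widehat{V}(f)-\widetilde{V}(f)=\vx^*\mE_f\mV(\mGamma-\widetilde{\mGamma})\mV^*\mE_f^*\vx$ and bounds this quadratic form by $\|\mGamma-\widetilde{\mGamma}\|\,\|\vx\|_2^2$ using $\|\mV\|=\|\mE_f\|=1$, which for the diagonal matrix $\mGamma-\widetilde{\mGamma}$ is exactly your triangle-inequality-plus-Parseval computation written in operator-norm language. The only cosmetic difference is that you make the nonnegativity of $V_k(f)$ explicit, whereas the paper hides it inside the inequality $|\vy^*D\vy|\le\|D\|\,\|\vy\|_2^2$.
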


\begin{proof}
Let $\mV = \begin{bmatrix}\vv_0 & \cdots & \vv_{N-1}\end{bmatrix}$, and let $\mGamma$, $\widetilde{\mGamma} \in \R^{N \times N}$, and $\mE_f \in \C^{N \times N}$ be diagonal matrices whose diagonal entries are $\mGamma[n,n] = \gamma_n$, $\widetilde{\mGamma}[n,n] = \widetilde{\gamma}_n$, and $\mE_f[n,n] = e^{j2\pi fn}$ for $n \in [N]$. Then,
\begin{align*}
\widehat{V}(f) &= \sum_{k = 0}^{N-1}\gamma_k V_k(f) \\
& = \sum_{k = 0}^{N-1}\gamma_k\left|\sum_{n = 0}^{N-1}\vv_k[n]\vx[n]e^{-j2\pi fn}\right|^2\\
&= \sum_{k = 0}^{N-1}\gamma_k\left|\vv_k^*\mE_f^*\vx\right|^2 \\
&= \sum_{k = 0}^{N-1}\mGamma[k,k]\left|(\mV^*\mE_f^*\vx)[k]\right|^2 \\
&= \vx^*\mE_f\mV\mGamma\mV^*\mE_f^*\vx.
\end{align*}
In a nearly identical manner, $$\widetilde{V}(f) = \vx^*\mE_f\mV\widetilde{\mGamma}\mV^*\mE_f^*\vx.$$
Since $\mV$ is orthonormal, $\|\mV\| = \|\mV^*\| = 1$. Since $\mE_f$ is diagonal, and all the diagonal entries have modulus $1$, $\|\mE_f\| = \|\mE_f^*\| = 1$. Hence, for any $f \in \R$, we can bound
\begin{align*}
\left|\widehat{V}(f) - \widetilde{V}(f)\right| &= \left|\vx^*\mE_f\mV\left(\mGamma-\widetilde{\mGamma}\right)\mV^*\mE_f^*\vx\right| 
\\
&\le \|\vx\|_2 \|\mE_f\| \|\mV\|  \|\mGamma-\widetilde{\mGamma}\| \|\mV^*\| \|\mE_f^*\| \|\vx\|_2
\\
&= \|\mGamma-\widetilde{\mGamma}\|  \|\vx\|_2^2
\\
&= \left(\max_{k}|\gamma_k-\widetilde{\gamma}_k|\right)\|x\|_2^2,
\end{align*}
as desired.
\end{proof}

\subsection{Proof of Theorem~\ref{thm:ApproxMultitaper}}
\label{sec:ApproxMultitaperProof}
Recall that the indices $[N]$ are partitioned as follows: 
\begin{align*}
\setI_1 &= \{k \in [K] : \lambda_k \ge 1-\eps\}
\\
\setI_2 &= \{k \in [K] : \eps < \lambda_k < 1-\eps\}
\\
\setI_3 &= \{k \in [N] \setminus [K] : \eps < \lambda_k < 1-\eps\}
\\
\setI_4 &= \{k \in [N] \setminus [K] : \lambda_k \le \eps\}.
\end{align*}

Using the partitioning above, the unweighted multitaper spectral estimate $\hatSmt_K(f)$ can be written as 
\begin{align*}
\hatSmt_K(f) &= \dfrac{1}{K}\sum_{k = 0}^{K-1}\hatS_k(f) 
\\
&= \sum_{k \in \setI_1 \cup \setI_2}\dfrac{1}{K}\hatS_k(f),
\end{align*}
\\
and the approximate estimator $\tildeSmt_K(f)$ can be written as
\begin{align*}
\tildeSmt_K(f) &= \dfrac{1}{K}\Psi(f) + \dfrac{1}{K}\sum_{k \in \setI_2}(1-\lambda_k)\hatS_k(f) - \dfrac{1}{K}\sum_{k \in \setI_3}\lambda_k\hatS_k(f)
\\
&= \sum_{k = 0}^{N-1}\dfrac{\lambda_k}{K}\hatS_k(f) + \sum_{k \in \setI_2}\dfrac{1-\lambda_k}{K}\hatS_k(f) - \sum_{k \in \setI_3}\dfrac{\lambda_k}{K}\hatS_k(f)
\\
&= \sum_{k \in \setI_1\cup\setI_4}\dfrac{\lambda_k}{K}\hatS_k(f) + \sum_{k \in \setI_2}\dfrac{1}{K}\hatS_k(f)
\end{align*}
Thus, $\hatSmt_K(f)$ and $\tildeSmt_K(f)$ can be written as 
$$
\hatSmt_K(f) = \sum_{k = 0}^{N-1}\gamma_k\hatS_k(f) \quad \text{and} \quad \tildeSmt_K(f) = \sum_{k = 0}^{N-1}\tildegamma_k\hatS_k(f)
$$ 
where  
$$
\gamma_k = \begin{cases}1/K & k \in \setI_1 \cup \setI_2, \\ 0 & k \in \setI_3 \cup \setI_4, \end{cases} \quad \text{and} \quad \tildegamma_k = \begin{cases}\lambda_k/K & k \in \setI_1 \cup \setI_4, \\ 1/K & k \in \setI_2, \\ 0 & k \in \setI_3. \end{cases}
$$

We now bound $\left|\gamma_k - \tildegamma_k\right|$ for all $k \in [N]$. For $k \in \setI_1$, we have $\lambda_k \ge 1-\eps$, and thus, $$\left|\gamma_k - \tildegamma_k\right| = \left|\dfrac{1}{K} - \dfrac{\lambda_k}{K}\right| = \dfrac{1-\lambda_k}{K} \le \dfrac{\eps}{K}.$$ 
For $k \in \setI_2 \cup \setI_3$ we have $\gamma_k = \tildegamma_k$, i.e., $\left|\gamma_k - \tildegamma_k\right| = 0$. For $k \in \setI_4$, we have $\lambda_k \le \eps$, and thus, $$\left|\gamma_k - \tildegamma_k\right| = \left|0 - \dfrac{\lambda_k}{K}\right| = \dfrac{\lambda_k}{K} \le \dfrac{\eps}{K}.$$ 
Hence, $\left|\gamma_k - \tildegamma_k\right| \le \frac{\eps}{K}$ for all $k \in [N]$, and thus by Lemma~\ref{lem:Approx}, 
$$
\left|\hatSmt_K(f) - \tildeSmt_K(f)\right| \le \dfrac{\eps}{K}\|\vx\|_2^2$$ for all frequencies $f \in \R$.
 
\subsection{Proof of Theorem~\ref{thm:FastMultitaper}}
\label{sec:ApproxMultitaperProof}
To evaluate the approximate multitaper estimate $$\tildeSmt_K(f) = \dfrac{1}{K}\Psi(f) + \dfrac{1}{K}\sum_{k \in \setI_2}(1-\lambda_k)\hatS_k(f) - \dfrac{1}{K}\sum_{k \in \setI_3}\lambda_k\hatS_k(f)$$ at the $L$ grid frequencies $f \in [L]/L$ one needs to:
\begin{itemize}
\item For each $k \in \setI_2 \cup \setI_3$, precompute the Slepian basis vectors $\vs_k$ and eigenvalues $\lambda_k$.

Computing the Slepian basis vector $\vs_k$ and the corresponding eigenvalue $\lambda_k$ for a single index $k$ can be done in $O(N \log N)$ operations and $O(N)$ memory via the method described in \cite{Gruenbacher94}. This needs to be done for $\#(\setI_2 \cup \setI_3) = \#\{k : \eps < \lambda_k < 1-\eps\} = O(\log(NW)\log\tfrac{1}{\eps})$ values of $k$, so the total cost of this step is $O(N\log(N)\log(NW)\log\tfrac{1}{\eps})$ operations and $O(N\log(NW)\log\tfrac{1}{\eps})$ memory.

\item For $\ell \in [L]$, evaluate $\Psi(\tfrac{\ell}{L})$.

If $L \ge 2N$, then evaluating $\Psi(\tfrac{\ell}{L})$ for $\ell \in [L]$ can be done in $O(L \log L)$ operations and $O(L)$ memory as shown in Lemma~\ref{lem:FastPsi}. If $N \le L < 2N$, then $2L \ge 2N$, so by Lemma~\ref{lem:FastPsi}, we can evaluate $\Psi(\tfrac{\ell}{2L})$ for $\ell \in [2L]$ in $O(2L \log 2L) = O(L \log L)$ operations and $O(2L) = O(L)$ memory and then simply downsample the result to obtain $\Psi(\tfrac{\ell}{L})$ for $\ell \in [L]$.

\item For each $k \in \setI_2 \cup \setI_3$ and each $\ell \in [L]$, evaluate $\hatS_k(\tfrac{\ell}{L})$.

Evaluating $\hatS_k(\tfrac{\ell}{L}) = \left|\sum_{n = 0}^{N-1}\vs_k[n]\vx[n]e^{-j2\pi n\ell/L}\right|^2$ for all $\ell \in [L]$ can be done by pointwise multiplying $\vs_k$ and $\vx$, zeropadding this vector to length $L$, computing a length-$L$ FFT, and then computing the squared magnitude of each FFT coefficient. This takes $O(L \log L)$ operations and $O(L)$ memory. This needs to be done for $\#(\setI_2 \cup \setI_3) = O(\log(NW)\log\tfrac{1}{\eps})$ values of $k$, so the total cost of this step is $O(L\log L\log(NW)\log \tfrac{1}{\eps})$ operations and $O(L\log(NW)\log\tfrac{1}{\eps})$ memory.

\item For each $\ell \in [L]$, evaluate the weighted sum above for $\tildeSmt_K(\tfrac{\ell}{L})$. 

Once $\Psi(\tfrac{\ell}{L})$ and $\hatS_k(\tfrac{\ell}{L})$ for $k \in \setI_2 \cup \setI_3$ are computed, evaluating $\tildeSmt_K(\tfrac{\ell}{L})$ requires $O(\#(\setI_2 \cup \setI_3)) = O(\log(NW)\log\tfrac{1}{\eps})$ multiplications and additions. This has to be done for each $\ell \in [L]$, so the total cost is $O(L\log(NW)\log\tfrac{1}{\eps})$ operations.
\end{itemize}

Since $L \ge N$, the total cost of evaluating the approximate multitaper estimate $\tildeSmt_K(f)$ at the $L$ grid frequencies $f \in [L]/L$ is $O(L\log L \log(NW)\log\tfrac{1}{\eps})$ operations and $O(L\log(NW)\log\tfrac{1}{\eps})$ memory.


\end{document}